\newcolumntype{M}[1]{>{\arraybackslash}m{#1}}
\newcolumntype{N}{@{}m{0pt}@{}}
\renewcommand{\arraystretch}{1.5}
\newtheorem{theorem}{\textbf{Theorem}}
\newtheorem{definition}{\textbf{Definition}}
\newtheorem{remark}{\textbf{Remark}}
\begin{document}
\title{Large-Scale Cloud Radio Access Networks \\with Practical Constraints: \\Asymptotic Analysis and Its Implications}
\author{\smaller{Kyung~Jun~Choi,~\IEEEmembership{Student~Member,~IEEE},~and~Kwang~Soon~Kim$^\dag$,~\IEEEmembership{Senior~Member,~IEEE}}
\thanks{
This work was supported in part by ICT R\&D program of MSIP/IITP \{B0101-16-1367, Next Generation WLAN System with High Efficient Performance\}, and in part by Basic Science Research Program through the National Research Foundation of Korea (NRF) funded by the Ministry of Education, Science and Technology (NRF-2014R1A2A2A01007254).

The authors are with the Department of Electrical and Electronic Engineering, Yonsei University, 50 Yonsei-ro, Seodaemun-gu, Seoul 120-749, Korea.}
\thanks{$^\dag$: Corresponding author (ks.kim@yonsei.ac.kr)}
}
\markboth{Choi and Kim: Large-Scale Cloud Radio Access Networks with Practical Constraints}{in preparation} 
\maketitle
\vspace{-5mm}
\begin{abstract}
Large-scale cloud radio access network (LS-CRAN) is a highly promising next-generation
cellular network architecture whereby lots of base stations (BSs) equipped with a massive antenna array are
connected to a cloud-computing based central processor unit via digital front/backhaul links. 
This paper studies an asymptotic behavior of downlink (DL) performance of a LS-CRAN with three practical constraints: 1) limited transmit power, 2) limited front/backhaul capacity, and 3) limited pilot resource. As an asymptotic performance measure, the scaling exponent of the signal-to-interference-plus-noise-ratio (SINR) is derived for interference-free (IF), maximum-ratio transmission (MRT), and zero-forcing (ZF) operations. Our asymptotic analysis reveals four fundamental operating regimes and the performances of both MRT and ZF operations are fundamentally limited by the UL transmit power for estimating user's channel state information, not the DL transmit power. We obtain the conditions that MRT or ZF operation becomes interference-free, i.e., order-optimal with three practical constraints. Specifically, as higher UL transmit power is provided, more users can be associated and the data rate per user can be increased simultaneously while keeping the order-optimality as long as the total front/backhaul overhead is $\Omega(N^{\eta_{\rm{bs}}+\eta_{\rm{ant}}+\eta_{\rm{user}}+\frac{2}{\alpha}\rho^{\rm{ul}}})$ and $\Omega(N^{\eta_{\rm{user}}-\eta_{\rm{bs}}})$ pilot resources are available. 
It is also shown that how the target quality-of-service (QoS) in terms of SINR and the number of users satisfying the target QoS can simultaneously grow as the network size increases and the way how the network size increases under the practical constraints, which can provide meaningful insights for future cellular systems.

\end{abstract}
\clearpage

\section{Introduction}
\subsection{Motivation}
\IEEEPARstart{R}{ecently}, mobile data traffic is explosively and continuously rising due to smart phone and tablet users and it is expected that next-generation (a.k.a. the 5th generation) cellular networks will offer a 1000x increase in network capacity as well as a 1000x increase in energy-efficiency in the following decade to meet such an excessively high user demand \cite{AndrewsWhat}. The most prominent way to increase the network capacity is the \emph{network densification} \cite{BhushanNetwork} by either implementing more antennas at base stations (BSs) called large-scale antenna system (LSAS) \cite{RusekScaling} or adding more small cells at hot spot areas called ultra-dense network (UDN) \cite{GotsisUDN}. By using a lot of antennas at each BS, the LSAS can exploit massive spatial dimension to generate a sharp beam and thus it can provide higher spectral efficiency and also lower power consumption \cite{NgoEnergy}. However, the performance of the LSAS is highly limited by the accuracy of channel state information (CSI) and varies over the geographical locations of users. On the other hand, the UDN exploits spatial reuse obtained from deploying more small BSs so that it can offer geographically uniform performance to each user due to the reduction of the access distance between users and BSs \cite{HwangHolistic}, \cite{HoydisGreen}. However, its performance is limited by uncoordinated out-of-cell interference so that an efficient interference management is a key challenge for the UDN, while managing the costs in terms of the front/backhaul overhead and computational complexity for a network operator.

Large-scale cloud radio access network (LS-CRAN) is recently regarded as a novel wireless cellular network architecture to unify the above two architectures and is able to implement the interference handling mechanisms introduced for the long-term evolution (LTE) or LTE-Advanced, such as the enhanced inter-cell interference coordination \cite{HuangIncreasing} and the coordinated multi-point transmission \cite{IrmerCoordinated}, \cite{CheckoCloud}. 
In the LS-CRAN, lots of BSs each with a massive antenna array are connected to a virtualized central processing unit (CPU) via dedicated front/backhaul link and some of the baseband processing functionality of each BS is migrated to the CPU. As a result, it is expected that the performance of the LS-CRAN is much higher than that of a conventional network.

A natural objective of the LS-CRAN is to maximize the number of supporting users while preserving their quality-of-service (QoS) requirements under practical constraints. If the LS-CRAN can manage the inter-user interference and inter-BS interference ideally, the network can be transformed into a multi-access (UL) or broadcast (DL) channel without any interference \cite{CaireAchievable}. However, unfortunately in practical systems, such an ideal situation is hard to be realized due to various practical limitations so that the maximum number of supporting users with a QoS requirement is carefully considered under such practical constraints.

There are three dominant practical constraints in the LS-CRAN. First, the network total transmit power should be maintained at an appropriate level not only for saving power consumption (and operating cost) \cite{LinToward}, but also for adopting cheap radio-frequency (RF) components in each BS and user \cite{BjornsonMassive}. Second, the capacity of the front/backhaul links connecting the BSs and the CPU is usually limited as pointed out in \cite{BiermannHow} and installing dedicated high-capacity front/backhaul link for every BS causes too much cost, especially in a dense deployment \cite{ChanclouOptical}. Last, the pilot resource for the LS-CRAN to acquire the CSI and handle the interference caused from a large number of BSs, antennas, and users by using the maximum ratio transmission (MRT) \cite{LoMaximum} or the zero-forcing (ZF) \cite{SpencerZero} operation, needs to be limited.

Then, a naturally raised question is that how many users can be supported by the LS-CRAN with a QoS requirement under these three practical constraints? 
Also, how do the target QoS and the number of users satisfying the target QoS simultaneously grow as the network size increases?
The answers are ready to be addressed in this paper.

\subsection{Related Works}
Although the performance of an LS-CRAN has been widely investigated in literature \cite{IrmerCoordinated}, \cite{MarschLarge} via simulations, an intuitive analytic result considering practical limitations is not available yet. The exact distribution of the signal-to-interference-plus-noise ratio (SINR) of the MRT or ZF operation is derived in \cite{BasnayakaPerformance1} for the case of two distributed BSs and its Laplace approximated distribution is derived in $\cite{BasnayakaPerformance2}$ for a general LS-CRAN. In \cite{CheikhAnalytical}, a simpler form of the SINR distribution is derived by approximating the SINR as a Gamma random variable. As parallel works, the average achievable rate of the MRT or ZF operation is derived in \cite{YangPerformance} for a single BS case, and in \cite{WangAsymptotic} for the case of fully-distributed or fully-colocated antennas as a function of the BS and user locations. In order to understand the network behavior more intuitively, stochastic geometry has been recently adopted to remove such dependency and describe important network metrics in a probabilistic way. In \cite{TanbougiTractable}-\cite{GiovanidisStochastic}, the SINR distributions of cooperative transmission schemes are provided by adopting stochastic geometry, but the analysis resorts on multiple numerical integrals. In \cite{HuangAnalytical}, an asymptotic analysis on the tail of the SINR distribution is provided by applying large-deviation theory and stochastic geometry. In \cite{LeeSpectral},  the average achievable rate of the MRT and ZF successive interference cancellation is investigated and its scaling laws are obtained in a dense random network with multiple receive antennas. However, the effects of the limited transmit power, the limited front/backhaul capacity, or the limited pilot resource have not been jointly considered.

As the network is densified, the effect of erroneous CSI becomes more crucial because all of users and antennas are jointly processed in the CPU so that small CSI error can dramatically degrade the overall network performance. In an LS-CRAN, there exist three main sources causing erroneous CSI:  the limited transmit power or the limited pilot resource in the process of CSI acquisition and the limited capacity of front/backhaul links. In \cite{NgoEnergy}, the effect of the CSI acquisition error on the energy and spectral efficiency is analyzed for the DL single-cell LSAS using the MRT or ZF operation. In \cite{MarschUL}, an information-theoretic capacity bound is derived for UL CRAN by modeling the CSI error as an additive Gaussian noise. In \cite{WangPerformance} and \cite{KongMultipair}, the average achievable rate and its asymptotic analysis are obtained in multi-antenna relay networks by considering the effect of the CSI acquisition errors.
In \cite{XiaFundamentals}, authors model and study the front/backhaul link by using stochastic geometry. The achievable rate with various front/backhaul data compression strategies is analyzed in the UL scenario \cite{SimeoneUL} and this result is extended to the DL scenario \cite{SimeoneDL}. In \cite{ZhouUplink}, the achievable rate of a cooperative transmission scheme is derived by considering backhaul burdens.
 However, these results become quite intractable if applied to the LS-CRAN with large number of antennas and users so that more insightful and intuitive analysis for the LS-CRAN is required. 


\subsection{Contributions}
The main objective of this paper is to characterize the asymptotic behavior of the LS-CRAN under three practical constraints: (1) the limited total transmit power, (2) the limited front/backhaul capacity, and (3) the limited pilot resource. The major contributions of this paper are summarized as follows.

\begin{itemize}
\item
The scaling exponents of the signal-to-interference-plus-noise ratio (SINR) defined and derived as a function of the numbers of BS, BS antennas, and single-antenna users, and the UL/DL transmit power for the interference-free (IF), MRT, and ZF operations, which are stated in Theorems 1-3. 

\item Based on the derived scaling exponents, four fundamental regimes are distinguished according to the UL transmit power (or the CSI acquisition error): (i) the extremely high (UL transmission) power regime (\textsf{EH}), (ii) the high power regime (\textsf{H}), (iii) the medium  power regime (\textsf{M}), and (iv) the low power regime (\textsf{L}). The scaling exponents of the three operations are derived  according to the four regimes with insightful discussions.

\item The limited front/backhaul capacity can be successfully transformed into partial associations, where each BS serves only users affordable with its front/backhaul capacity. The analytical result shows that such limited front/backhaul capacity and pilot resource constrain the network behavior in a similar way as the limited transmit power, as stated in Theorems 4 and 5.

\item In order to characterize the network-wise performance of the LS-CRAN, we define the number of supportable users as the maximum number of users satisfying a pre-determined SINR requirement. Then, we derive the tradeoff in the scaling exponents of them as the network size increases under the practical limitations, as stated in Theorem 6.


\end{itemize}

\subsection{Organization and Notation}
The reminder of this paper is organized as follows. In Section II, the LS-CRAN system model is described and the scaling exponent of the SINR is defined as a network performance measure. Section III provides some candidate operations suitable for practical implementation and the corresponding scaling exponents are analyzed by considering practical limitations to provide an intuitive understanding on the LS-CRAN in Section IV. In Section V, the number of supportable users is defined as another important network metric and analyzed to provide an insight on the asymptotic network behavior of the LS-CRAN. Finally, conclusion is given in Section VI.

Matrices and vectors are respectively denoted by boldface uppercase and lowercase characters. The superscript ${( \cdot )^*}$, ${( \cdot )^T}$ and ${( \cdot )^H}$ denote the conjugate, transpose and conjugate transpose, respectively. $\left| \cdot \right|$, $\mathbb{E}[ \cdot ]$, ${\text{Tr}}(\cdot)$, and $\delta(n)$ stand for the cardinality of a set, statistical expectation, trace of a square matrix, and Kronecker delta function, respectively. Also, $\mathbf{1}_{A\times 1}$ and $\mathbf{0}_{A\times 1}$ are the $A\times 1$ all one vector and all zero vector, respectively. Further, $\mathrm{diag}(a_1,a_2,\cdots,a_n)$ denotes the diagonal matrix whose $(k,k)$th element is $a_k$, and $\mathcal{CN}(\mu,\sigma^2)$ denotes the circularly symmetric complex Gaussian distribution with mean $\mu$ and variance $\sigma^2$. For a $C\times 1$ vector $\mathbf{x}_{ab}$,  $[{{\bf{x}}_{ab}}]_{a = 1,b = 1}^{A,B}$ denotes the $AC\times B$ matrix $\left[ {{{[{\bf{x}}_{11}^H,\cdots,{\bf{x}}_{A1}^H]}^H},\cdots,[{\bf{x}}_{1B}^H,\cdots,{\bf{x}}_{AB}^H]^H} \right]$. 

In this paper, we only deal with a scaling exponent, $s$, of a random measure, $f(n)$, as $n\to\infty$ in a probabilistic sense, which is mathematically defined as follows. The scaling exponent of $f(n)$ is $s$ in probability if 
\begin{equation}\label{eq_A1}
\lim_{n\to\infty}\Pr\left(  \left|\frac{\log f(n)}{\log n} - s\right| < \epsilon \right) = 1
\end{equation}
holds for any positive finite $\epsilon$. 
Also a slightly abused notation $s=\pm \infty$ is used if there is no finite positive $\epsilon$ satisfying (\ref{eq_A1}) for any finite $s$.
Let $s$ and $t$ be the scaling exponents of $f(n)$ and $g(n)$, respectively. Then, the following modified order notations are used through this paper. 
\begin{itemize}
\item[1)] $f(n)  = o(g(n))$ or $f(n)\ll g(n)$, if $s<t$,
\item[2)] $f(n)  = O(g(n))$, if $s\le t$,
\item[3)] $f(n)  = \omega(g(n))$  or $f(n)\gg g(n)$, if $s>t$,
\item[4)] $f(n)  = \Omega(g(n))$, if $s\ge t$, and
\item[5)] $f(n)  = \Theta(g(n))$ or $f(n)\asymp g(n)$, if $s=t$.
\end{itemize}



{\renewcommand{\arraystretch}{1}
\begin{table}
\center\caption{Summary of definitions}
\begin{tabular}{ M{3cm} M{7cm} }
\addlinespace
\toprule 
{\bf Symbol} & {\bf Definition} \\
\toprule
$L$, $M$, $K$& the numbers of BSs, BS antennas, or users, respectively\\
\hline
$N$& the network size, $N=LM$\\
\hline
$\mathcal{X}=\{X_1,\cdots,X_L\}$& {the set of BSs, where $X_l$ is the location of BS $l$}\\
\hline
$\mathcal{U}=\{U_1,\cdots,U_K\}$& {the set of users, where $U_k$ is the location of user $k$}\\
\hline
$\mathcal{X}_k$& {the set of BSs associated to user $k$}\\
\hline
$\mathcal{U}_l$& {the set of users associated to BS $l$}\\
\hline
$R_{\rm{th}}$&{the association range}\\
\hline
$\eta_{\rm{bs}}$, $\eta_{\rm{ant}}$, $\eta_{\rm{user}}$& scaling exponents for the number of BSs, BS antennas or users, respectively\\
\hline
$N_{\rm{PA}}$, $\upsilon_{\rm{PA}}$ & the number of associated BSs per user and its scaling exponent
\\ \hline
$T$, $\upsilon_{\rm{PR}}$ & the number of pilots and its scaling exponent
\\ \hline
$\alpha$ & Pathloss exponent ($\alpha>2$) \\
\hline
$\mathbf{g}_{lk}=\sqrt{\beta_{lk}}\mathbf{h}_{lk} $& $M\times 1$ wireless channel between BS $l$ and user $k$, where $\beta_{lk}$ is a long-term fading, $\mathbf{h}_{lk}$ is a short-term fading \\
\hline
$P_k^{\rm{ul}}$, $P_k^{\rm{dl}}$, $\rho^{\rm{ul}}$, $\rho^{\rm{ul}}$& the UL or DL transmit power for user $k$, respectively and their scaling exponents\\
\hline
$\mathsf{SNR}_k^\Upsilon$, $\mathsf{SIR}_k^\Upsilon$, $\mathsf{SINR}_k^\Upsilon$ & {SNR, SIR or SINR of user $k$ when operation $\Upsilon$ is applied, respectively and their scaling exponents }\\
\bottomrule
\end{tabular}
\end{table}
}
\begin{figure*}
\centering
\subfigure[System model]{\includegraphics[width=8cm]{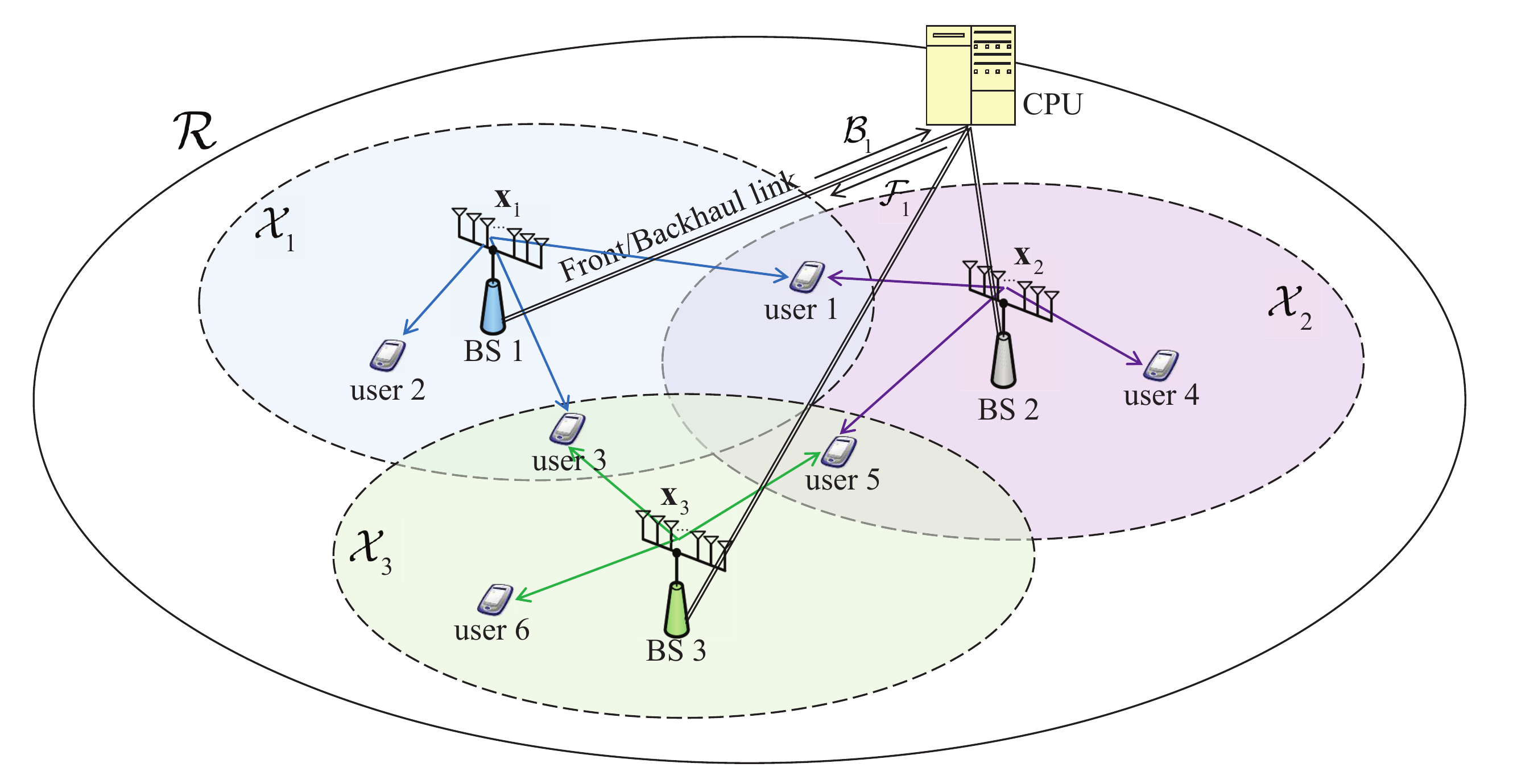}}\quad
\subfigure[DL frame structure]{\includegraphics[width=8cm]{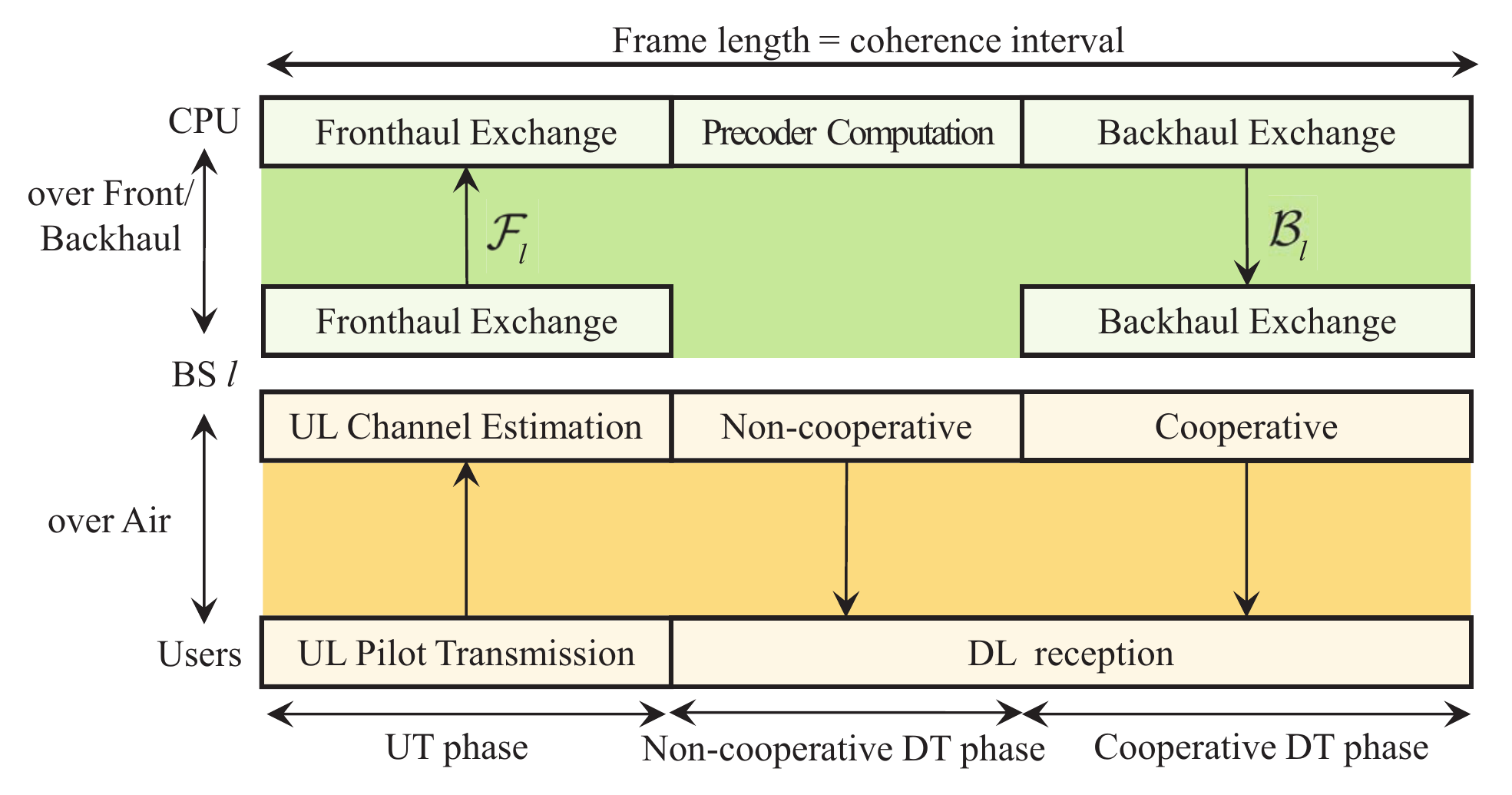}}\\
\caption{System model and DL frame structure for LS-CRAN}
\end{figure*}

\section{System Model and Performance Measure}
\subsection{LS-CRAN Model and Frame Structure}
Consider an LS-CRAN system as illustrated in Fig. 1 (a). Suppose that $L$ BSs with $M$ antennas and $K$ users with a single antenna are uniformly distributed on a finite region $\mathcal{R}$. The sets of BSs and users are denoted as $\mathcal{X}=\{X_1,X_2,\cdots,X_L\}$ and $\mathcal{U}=\{U_1,U_2,\cdots,U_K\}$, respectively. With a slight abuse of notations, $X_l$ and $U_k$ are used as the locations of BS $l$ and user $k$, respectively. It is  assumed that each of users is associated with neighboring BSs and the set of the BSs serving user $k$ is defined as
\begin{equation}
\mathcal{X}_k = \{X_l \in \mathcal{X} | |U_k - X_l| \le R_\text{th}\},
\end{equation}
where $R_\text{th}$ is the association range and the set of users associated with BS $l$ is denoted as $\mathcal{U}_l=\{U_k\in\mathcal{U}|X_l\in\mathcal{X}_k\}$. To guarantee each of users is served by at least one BS, $\mathcal{X}_k\ne\emptyset$ for all $k$. Note that $\mathcal{U}_l$  and $\mathcal{U}_{l\rq{}}$ are not necessarily disjoint and typical users are associated with multiple BSs for being served cooperatively. 
In this paper, a network is called \emph{fully associated}, if $\mathcal{X}_k = \mathcal{X}$ for all $k$. Otherwise, the network is called \emph{partially associated}.

The BSs are connected to the CPU via high-speed dedicated front/backhaul link for enabling a cooperative transmission operation. For inter-signaling between the BSs and the CPU, the sets of front/backhaul information are defined as $\mathcal{F}_l$ and $\mathcal{B}_l$, where $\mathcal{F}_l$ is the fronthaul information set (from CPU to BS $l$) and $\mathcal{B}_l$ is the backhaul information set (from BS $l$ to CPU). The elements of $\mathcal{F}_l$ and $\mathcal{B}_l$ are closely related according to a specific transmission operation, which will be described in Section III.

The DL frame structure operating in time-division duplex (TDD) mode is shown in Fig. 1(b) and has three phases as follows. UL training (UT) phase, non-cooperative data transmission (DT) phase, and cooperative DT-phase. In the UT-phase, the instantaneous CSI of each user is estimated by receiving user\rq{}s pilot (or reference) signal at each associated BS independently. After the UT-phase, the non-cooperative DT-phase is performed first and followed by the cooperative DT-phase. In the non-cooperative DT-phase, each BS separately transmits the data symbols to the associated users without any front/backhaul exchange. In the cooperative DT-phase, the BSs jointly transmit the data symbols to the associated users with the aid of the exchanged information via the front/backhaul link. For simplicity, the cooperative DT-phase is only focused in this paper. 

\subsection{Channel Model}
Let ${{\mathbf{g}}^H_{lk}}$ denote the  $1 \times M$ flat-fading DL channel vector from BS $l$ to user $k$, which can be written as\footnote{A narrow-band flat-fading channel is assumed because wideband frequency-selective channels may be decomposed into multiple narrow-band channels using modulation schemes such as the orthogonal frequency division multiplexing.}
\begin{equation}\label{channel_model} 
{{\mathbf{g}}^H_{lk}} = \sqrt {{\beta _{lk}}} {{\mathbf{h}}^H_{lk}},
\end{equation}
where ${{\mathbf{h}}_{lk}} \in {\mathbb{C}^{M \times 1}}$ is the short-term CSI whose elements are independent and identically distributed (i.i.d.) $\mathcal{CN}(0,1)$ and ${\beta _{lk}}( \ge 0)$ is the long-term CSI depending on the path-loss and shadowing. 
The long-term CSI between BS $l$ and user $k$ is modeled as $\beta_{lk} = \left|X_l-U_k\right|^{-\alpha}$, where $\alpha (> 2)$ is the wireless channel path-loss exponent.\footnote{
By using the random displacement theorem similarly as in \cite{DhillonDownlink}, this model can include the shadowing effect.}

It is assumed that the short-term CSI of each user remains constant within a given frame but independent across different frames, while the long-term CSI does not vary during a much longer interval. Further, it is assumed that the long-term CSIs among all BSs and users are perfectly known at the CPU through an infrequent feedback with a negligible overhead. Additionally, we assume perfect TDD reciprocity calibration so that the UL channel vector is just a transpose of the DL channel vector (i.e., the UL channel vector from user $k$ to BS $l$ is denoted as $\mathbf{g}^*_{lk}$).

\subsection{Signal Model}
Let $\mathbf{s}=[s_1,\cdots,s_K]^T$ be the $K\times 1$ information symbol vector with $\mathbb{E}[\mathbf{s}\mathbf{s}^H]=\mathbf{I}_K$, where $s_k$ denotes the information symbol for user $k$. Let $ {\mathbf{x}} = [\mathbf{x}_1^T,\cdots,\mathbf{x}_L^T]^T$ be the $LM\times 1$ \emph{global} transmitted signal vector, where $\mathbf{x}_l$ is the \emph{local} transmitted signal vector of BS $l$, given by
\begin{equation}\label{transmit_signal}
 {\mathbf{x}} ={\mathbf{F}}^\Upsilon \left({\mathbf{Q}^\Upsilon}\right)^{\frac{1}{2}}\mathbf{s},
\end{equation}
where $\mathbf{Q}^\Upsilon\triangleq\mathrm{diag}(Q_{1}^\Upsilon,\cdots,Q_{K}^\Upsilon)$ denotes the $K\times K$ power allocation matrix and $Q_k^\Upsilon(\ge 0)$ denotes the power allocated to user $k$, ${\mathbf{F}}^\Upsilon = [\mathbf{f}^\Upsilon_{lk}]_{l=1,k=1}^{L,K}$ denotes the $LM\times K$ precoding matrix and $\mathbf{f}^\Upsilon_{lk}$ denotes the $M\times 1$ precoding vector of user $k$ for BS $l$, and $\Upsilon$ denotes the cooperative transmission operation used in the LS-CRAN. Since the BS serves the associated users only, $\mathbf{f}^\Upsilon_{lk} = \mathbf{0}_{M\times 1}$ for $\forall X_l\notin \mathcal{X}_k$ (or equivalently $\forall U_k \in \mathcal{U}_l$).  Note that the transmitted signal vector of BS $l$ can be written by $\mathbf{x}_l =  \sum\nolimits_{U_j\in\mathcal{U}_l} \mathbf{f}^\Upsilon_{lj}\sqrt{Q_j^\Upsilon}s_j$.
Then, the DL transmit power for user $j$, $P_j^{{\rm{dl}}}$, is given as 
\begin{equation}\label{DL_transmission_power}
P_j^{{\rm{dl}}} = Q_j^\Upsilon\sum\limits_{X_l\in \mathcal{X}_j}\left\|\mathbf{f}^\Upsilon_{lj}\right\|^2,
\end{equation}
and the total DL transmit power is given by
$P^{{\rm{dl}}}_{{\Sigma}} \triangleq \sum\nolimits_{U_j\in\mathcal{U}} P_j^{\rm{dl}}$.

Let ${\mathbf{y}} \triangleq {[{y_1},\cdots,{y_K}]^T}$ be the $K\times 1$ aggregated received signal vector, given by
\begin{equation}\label{received_signal}
\begin{split}
{\mathbf{y}} & =  {\mathbf{G}} {\mathbf{x}} + {\mathbf{n}}\\
&=  {\mathbf{G}}{\mathbf{F}}^\Upsilon \left({\mathbf{Q}^\Upsilon}\right)^{\frac{1}{2}}\mathbf{s}+\mathbf{n},\\
\end{split}
\end{equation}
where ${\mathbf{G}} = \left([\mathbf{g}_{lk}]_{l=1,k=1}^{L,K}\right)^H$ denotes the $K\times LM$ channel matrix among all users and all BSs and ${\mathbf{n}} \triangleq {[{n_1},\cdots,{n_K}]^T}\sim\mathcal{CN}(\mathbf{0},\mathbf{I}_K)$ denotes the $K \times 1$ noise vector.
Then, the received signal at user $k$ can be expressed as
\begin{align}\label{received_signal_from_BS_l}
{y_k} &=  \psi^\Upsilon_{kk} \sqrt{Q^\Upsilon_k} s_k + \sum\limits_{U_j\in\mathcal{U}\backslash\{U_k\}}\psi^\Upsilon_{kj} \sqrt{Q^\Upsilon_j} s_j + n_k,
\end{align}
where $\psi^{\Upsilon}_{kj} = \sum\nolimits_{X_l\in\mathcal{X}_j}\mathbf{g}_{lk}^H\mathbf{f}^\Upsilon_{lj}$ is the effective channel seen at user $k$.

\subsection{Pilot Allocation and Channel Estimation}
The UL channel is estimated during the dedicated UT-phase and then the DL channel is obtained by the TDD channel reciprocity. It is assumed that the length of the UT-phase is $T$ and there are $T$ orthonormal pilot signals, denoted as $ {{\boldsymbol\psi _i} \in {\mathbb{C}^{T \times 1}},i = 1,2,\cdots,T}$, where $\boldsymbol\psi _i^H{\boldsymbol\psi _j} = \delta (i-j)$. Then, user $j$ transmits $\sqrt{P^{\rm{ul}}_j}\boldsymbol\psi_{\pi_j}^T$ during the UT-phase of length $T$, where $P^{\rm{ul}}_j$ is the UL transmit power of user $j$ and $\pi_j$ is the index of the pilot signal allocated to user $j$. The total UL transmit power is denoted as $P^{\rm{ul}}_\Sigma = \sum\nolimits_{U_j\in\mathcal{U}}P_j^{\rm{ul}}$.  Then, the $M\times T$ received signal matrix at BS $l$ during the UT-phase can be written as
\begin{equation}\label{UT_received_signal}
{{\mathbf{Y}}_l} = \sum\limits_{U_j \in\mathcal{U}} {\sqrt {P^{{\rm{ul}}}_j{\beta _{lj}}} {{\mathbf{h}}^*_{lj}}}\boldsymbol\psi _{\pi_j}^T + {{\mathbf{V}}_l},
\end{equation}
where $\mathbf{V}_l$ denotes the $M\times T$ noise matrix whose elements are i.i.d. $\mathcal{CN}(0,1)$. Using the minimum mean-square error (MMSE) channel estimator \cite{LiPilot}, the estimated short-term CSI of user $k\in\mathcal{U}_l$ at BS $l$ can be written as
\begin{equation}\label{estimated_CSI}
\begin{split}
{{\widehat{\mathbf{ h}}}_{lk}} &= {{\vartheta_{lk}}}{{\mathbf{Y}}^*_l}\boldsymbol\psi _{\pi_k} \\ 
&={\phi_{lkk}}\mathbf{h}_{lk}+\sum\limits_{U_j\in\mathcal{U}\backslash\{U_k\}} { {{{{\phi_{ljk}}}}} {{\mathbf{h}}_{lj}}}\delta(\pi_k-\pi_j) + \vartheta_{lk}{{{\mathbf{\widetilde v}}}_{lk}}, 
\end{split}
\end{equation}
where 
$$
 {\vartheta_{lk}} = \frac{{\sqrt {P_k^{{\rm{ul}}}{\beta _{lk}}} }}{{\sum\limits_{U_i \in {{\cal U}}} {P_i^{{\rm{ul}}}{\beta _{li}}}\delta(\pi_k-\pi_i) + {1}}},$$
 $${\phi _{ljk}} = \frac{{\sqrt {P_j^{{\rm{ul}}}P_k^{{\rm{ul}}}{\beta _{lj}}{\beta _{lk}}} }}{{\sum\limits_{U_i \in {{\cal U}}} {P_i^{{\rm{ul}}}{\beta _{li}}}\delta(\pi_k-\pi_i) + {1}}},$$
 ${{\mathbf{\widetilde v}}_{lk}} = {{\mathbf{V}}^*_l}\boldsymbol\psi _{\pi_k}$ with $[\widetilde{\mathbf{v}}_{lk}]_m \sim \mathcal{CN}(0,1)$. In (\ref{estimated_CSI}), the first term is the desired user's channel, the second term is the leakage from the other users' channels, called the \emph{pilot contamination} (PC) effect due to the pilot signal reuse, and the third term is the noise part. Invoking the orthogonality principle of the MMSE estimator \cite{LiPilot}, $\mathbf{h}_{lk}$ can be decomposed as $\mathbf{h}_{lk}=\widehat{\mathbf{h}}_{lk}+\widetilde{\mathbf{h}}_{lk}$, where $\widehat{\mathbf{h}}_{lk}\sim\mathcal{CN}(\mathbf{0},\phi_{lkk}\mathbf{I}_M)$ and $\widetilde{\mathbf{h}}_{lk}\sim\mathcal{CN}(\mathbf{0},(1-\phi_{lkk})\mathbf{I}_M)$ are mutually independent.
 Note that the estimated version of $\mathbf{g}_{lk}$ at BS $l$ is given as $\widehat{\mathbf{g}}_{lk} = \sqrt{\beta_{lk}}\widehat{\mathbf{h}}_{lk}$ for $\forall k\in\mathcal{U}_l$ or $\widehat{\mathbf{g}}_{lk} = \mathbf{0}_{M\times 1}$ for $\forall k\notin\mathcal{U}_l$ and thus  the estimated version of $\mathbf{G}$ is given as $\widehat{ {\mathbf{G}}} = \left([\widehat{\mathbf{g}}_{lk}]_{l=1,k=1}^{L,K}\right)^H$.

%

\subsection{Performance Measure}
When operation $\Upsilon$ is used, the signal-to-noise ratio (SNR), $\mathsf{SNR}_k^\Upsilon$, the signal-to-interference ratio (SIR), $\mathsf{SIR}_k^\Upsilon$, and the signal-to-interference-plus-noise ratio (SINR), $\mathsf{SINR}_k^\Upsilon$, are respectively defined as 
\begin{align*}
&\mathsf{SNR}_k^\Upsilon = {{{Q^\Upsilon_k}{{\left| {{\psi^\Upsilon_{kk}}} \right|}^2}}},\\ 
&\mathsf{SIR}_k^\Upsilon = \frac{{{Q^\Upsilon_k}{{\left| {{\psi^\Upsilon_{kk}}} \right|}^2}}}{{\sum\limits_{U_j \in {\cal U}\backslash \{ U_k\} } {{Q^\Upsilon_j}{{\left| {{\psi^\Upsilon_{kj}}} \right|}^2}} }},\\
&\mathsf{SINR}_k^\Upsilon = \frac{{{Q^\Upsilon_k}{{\left| {{\psi^\Upsilon_{kk}}} \right|}^2}}}{{\sum\limits_{U_j \in {\cal U}\backslash \{ U_k\} } {{Q^\Upsilon_j}{{\left| {{\psi^\Upsilon_{kj}}} \right|}^2}} + {1}}}.
\end{align*}
Obviously, $\mathsf{SNR}_k^\Upsilon$, $\mathsf{SIR}_k^\Upsilon$, and $\mathsf{SINR}_k^\Upsilon$ are random variables depending on the realization of the short-term fading and the long-term fading (i.e., realization of locations of users and BSs).

One of the main objectives of this paper is to characterize the asymptotic behavior of $\mathsf{SINR}^\Upsilon$ when the key network parameters such as the number of BSs $L$, the number of users $K$, and the number of BS antennas $M$ are scaled up. To do this, we define an auxiliary parameter $N=LM$ as the network size (or equivalently, the total number of antennas in the network) and make the following relations: 
\begin{align}\label{eq_16}
L=\Theta(N^{\eta_{\rm{bs}}}),~ M =\Theta(N^{\eta_{\rm{ant}}}),~ K=\Theta(N^{\eta_{\rm{user}}}),
\end{align}
where $\eta_{\rm{bs}}$, $\eta_{\rm{user}}$, and $\eta_{\rm{ant}}$ denote the scaling exponents of the numbers of BSs, users, and BS antennas, respectively. Note that we only consider the case where $0\le \eta_{\rm{bs}},\eta_{\rm{ant}},\eta_{\rm{user}}\le 1$ and $\eta_{\rm{bs}}+\eta_{\rm{ant}} = 1$ by definition. Then, the asymptotic performance of the network can be characterized as follows.

\begin{definition}[Performance Measure] The scaling exponent of the SINR of operation $\Upsilon$, $\mathsf{sinr}^{\Upsilon}$, is the order of growth of the SINR of a randomly selected user as $N$ increases such that, for any $\epsilon>0$, 
\begin{equation}\label{eq_17}
\mathop {\lim }\limits_{N \to \infty } \Pr \left( {\left| {\frac{{\log {\mathsf{SINR}_k^{\Upsilon}}}}{{\log N}} - {\mathop{\mathsf {sinr}}^{\Upsilon}\nolimits} } \right| < \epsilon } \right) = 1.
\end{equation}
\end{definition}

Manipulating (\ref{eq_17}), we can also obtain 
\[
\mathop {\lim }\limits_{N \to \infty } \Pr \left( {{N^{{\mathsf{sinr}^{\Upsilon}} - \epsilon }} < \mathsf{SINR}^{\Upsilon}_k < {N^{{\mathsf{sinr}^{\Upsilon}} + \epsilon }}} \right) = 1,
\]
which implies that the SINR of a randomly selected user is bounded by $[N^{\mathsf{sinr}^{\Upsilon}-\epsilon},N^{\mathsf{sinr}^{\Upsilon}+\epsilon}]$ as $N$ increases. Invoking our order notations, we can simply write $\mathsf{SINR}_k^{\Upsilon} = \Theta(N^{\mathsf{sinr}^{\Upsilon}})$ and the sum-rate served by a network can also be written as $C_{\Sigma} = \Theta(N^{\eta_{\rm{user}}}\log_2(1+N^{\mathsf{sinr}^{\Upsilon}}))$. Our main goal in this paper is to characterize $\mathsf{sinr}^{\Upsilon}$ for various LS-CRAN operations under practical constraints so that how the target SINR and the number of users satisfying the target can simultaneously grow as the network size increases.

Similarly, we define $\mathsf{snr}^{\Upsilon}$ and $\mathsf{sir}^{\Upsilon}$ as the scaling exponents of the SNR and SIR, respectively. Note that it is sufficient to find $\mathsf{snr}^{\Upsilon}$ and $\mathsf{sir}^{\Upsilon}$ and then $\mathsf{sinr}^{\Upsilon} = \min\{\mathsf{snr}^{\Upsilon},\mathsf{sir}^{\Upsilon}\}$ is obtained since $\mathsf{SINR}_k^\Upsilon$ is equal to the harmonic mean of $\mathsf{SNR}_k^\Upsilon$ and $\mathsf{SNR}_k^\Upsilon$, i.e, $(\mathsf{SINR}_k^\Upsilon)^{-1} = (\mathsf{SNR}_k^\Upsilon)^{-1}+( \mathsf{SIR}_k^\Upsilon)^{-1}$.

\section{LS-CRAN Operations Under Practical Limitations}
In this section, the ideal IF operation is introduced as a reference system and practical cooperative operations are reviewed. Note that a comprehensive review on the operations is beyond the scope of this paper so that two well-known practical operations, MRT operation \cite{LoMaximum} and ZF operation \cite{SpencerZero}, are focused with the three practical limitations.

\subsection{Cooperative Transmission Operations}
\subsubsection{Ideal IF Operation}
As a reference, the ideal IF operation is considered, where the interference term in (\ref{received_signal_from_BS_l}) is removed by Genie perfectly without any cost while the desired signal power is maximized by using the IF precoding matrix, 
\begin{equation}
 {\mathbf{F}}^{\mathsf{if}} =  {\widehat{\mathbf{G}}}^H.
\end{equation}
Obviously, the ideal IF operation provides an upper-bound on the performance of any practical operation. Since this operation cannot be realizable, $\mathcal{B}_l^{\mathsf{if}}$ and $\mathcal{F}_l^{\mathsf{if}}$ are not defined.

\subsubsection{MRT Operation}
MRT operation tries to maximize the received signal power of a desired user without considering the effect of interference to undesired users \cite{LoMaximum}. This operation is regarded as a good candidate as the number of BS antennas increases due to its low-computational complexity and low-overhead requirement \cite{Argos}.
The precoder for MRT operation is given by 
\begin{equation}
{\mathbf{F}}^{\mathsf{mrt}} = {\widehat{\mathbf{G}}}^H.
\end{equation}
Since the precoder of MRT operation does not need information exchange among BSs, the backhaul and fronthaul information sets can be expressed as ${\mathcal{B}_l^{\mathsf{mrt}}} = \emptyset$, and $\mathcal{F}_l^\mathsf{mrt}=\left\{\sqrt{Q_j^{\mathsf{mrt}}}s_j|U_j\in{\mathcal{U}_l}\right\}$, respectively.

\subsubsection{ZF Operation}
ZF operation can cancel the interference term in (\ref{received_signal_from_BS_l}) (perfectly, provided that the perfect channel estimation is available at the CPU) at the expense of the desired signal power loss \cite{SpencerZero}. 
The precoder for ZF operation is given by 
\begin{equation}
{\mathbf{F}}^{\mathsf{zf}} =  {\widehat{\mathbf{ G}}}^H{\left( {\widehat{ {\mathbf{G}}}{\widehat{ {\mathbf{ G}}}}^H} \right)^{ - 1}}.
\end{equation}
Note that ZF operation can be used only when the number of antennas in the system is larger than or equal to that of users, i.e., $LM\ge K$ and the estimated channel matrix, $\widehat{\mathbf{G}}$, has full-rank.
To construct the ZF precoder, the CPU requires to know the estimated channel matrix $\widehat{\mathbf{G}}$ so that the short-term CSIs of all associated users at each BS need to be conveyed to the CPU via the dedicated front/backhaul link. Thus, the backhaul and fronthaul information sets can be expressed as ${\mathcal{B}_l^{\mathsf{zf}}} = \left\{ \widehat{\mathbf{h}}_{lj} | U_j\in\mathcal{U}_l\right\}$ and $\mathcal{F}_l^{\mathsf{zf}}=\left\{\mathbf{f}_{lj}^{\mathsf{zf}}\sqrt{Q^{\mathsf{zf}}_j}s_j|U_j\in{\mathcal{U}_l}\right\}$, respectively. 

Note that it is well-known that the performance of ZF operation is better than that of MRT operation if the network has sufficient transmit power. But, if not, ZF operation can be inferior to MRT operation. In this paper, we will show that ZF operation is always superior or identical to MRT operation in the viewpoint of the scaling exponent of the SINR regardless of the operating transmit power. 

\subsection{Practical Limitations}
In a real-world network, there exist various practical limitations. In this paper, we consider  limitations on the total transmit power, the front/backhaul capacity, and the pilot resource. 
\subsubsection{limited total transmit power} 
The major merit of the LS-CRAN is that it can decrease the transmit power consumption so that it is suitable to a green communication. So, the total transmit power is a key constraint in a future cellular system. In order to limit the total transmit power of uplink or downlink, similarly as in (\ref{eq_16}), we make an additional asymptotic relationship as 
\begin{equation}
\begin{split}
P_j^{\rm{ul}} &= \Theta(N^{\rho^{\rm{ul}}})\text{ for UL and }
P_j^{\rm{dl}} =\Theta( N^{\rho^{\rm{dl}}})\text{ for DL,}
\end{split}
\end{equation}
where $\rho^{\rm{ul}}$ and $\rho^{\rm{dl}}$ denote the scaling exponents of the UL and DL transmit powers, respectively. Also, the total transmit power is $P_\Sigma = \sum_{j=1}^K (P_j^{\rm{ul}}+P_j^{\rm{dl}}) = \Theta(N^{\eta_{\rm{user}}+\max\{\rho^{\rm{ul}},\rho^{\rm{dl}}\}})$.

\subsubsection{limited front/backhaul capacity} 
Since the front/backhaul information sets closely depend on the cardinality of the set $\mathcal{X}_j$, i.e., the number of BSs serving user $j$, so that the front/backhaul link overhead can be quantified from the network association state. Define $N^{\rm{PA}}_{j} = |\mathcal{X}_j|$ as the number of BSs associated to user $j$.  Then, similarly as in (\ref{eq_16}), we make an additional asymptotic relationship as
\begin{equation}
N^{\rm{PA}} = \Theta(N^{\upsilon_{\rm{PA}}}),
\end{equation}
where $\upsilon_{\rm{PA}}$ denotes the scaling exponent of the number of associated BSs per user. Since at most $L$ BSs can be associated to each user, $0\le {\upsilon_{\rm{PA}}}\le \eta_{\rm{bs}}$.

\subsubsection{limited pilot resource} 
Due to the natural time-frequency selectivity of wireless channel, it is required to acquire CSI of users for every coherence interval. However, the number of dedicated orthogonal pilots is strictly limited by the number of orthogonal resources in one coherence interval so that the number of pilots (pilot resources), $T$, should be limited appropriately. In this paper, similarly as in (\ref{eq_16}), we make an additional asymptotic relationship as
\begin{equation}
T = \Theta(N^{\upsilon_{\rm{PR}}}),
\end{equation}
where $\upsilon_{\rm{PR}}$ denotes the scaling exponent of the number of pilot resources. Since at most $K$ pilot sequences are sufficient to guarantee no pilot reuse, $ 0 \le \upsilon_{\rm{PR}}\le \eta_{\rm{user}}$. 
Note that $\upsilon_{\rm{PR}}>0$ implies that the available number of pilot sequences increases as the network size increases.

\section{Scaling Exponents of the SINR}
In order to quantify the effect of the CSI accuracy according to the UL transmit power, we consider the following four regimes.
\begin{itemize}
\item Case \textsf{EH} ($\rho^{\rm{ul}}\ge0$): the UL transmit power is sufficiently high so that every BS can acquire the accurate CSI of a randomly selected users. 
\item Case \textsf{H} ($-\frac{\alpha}{2}\eta_{\rm{bs}}\le \rho^{\rm{ul}}<0$): each BS can acquire the accurate CSI of a randomly selected users within a distance of $\Theta(N^{\rho^{\rm{ul}}/\alpha})$. 
\item Case \textsf{M} ($-\frac{\alpha}{2}\eta_{\rm{bs}}-\eta_{\rm{ant}} \le \rho^{\rm{ul}}< -\frac{\alpha}{2}\eta_{\rm{bs}}$): randomly selected user's CSI is erroneous even at the nearest BS but is still meaningful for providing an array gain.
\item Case \textsf{L} ($\rho^{\rm{ul}}< -\frac{\alpha}{2}\eta_{\rm{bs}}-\eta_{\rm{ant}}$): randomly selected user's CSI becomes quite poor even at the nearest BS so that no array gain can be provided. 
\end{itemize}


\begin{figure*}
\centering
\subfigure[$\eta_{\rm{user}}>\eta_{\rm{bs}}$]{\includegraphics[width=8cm]{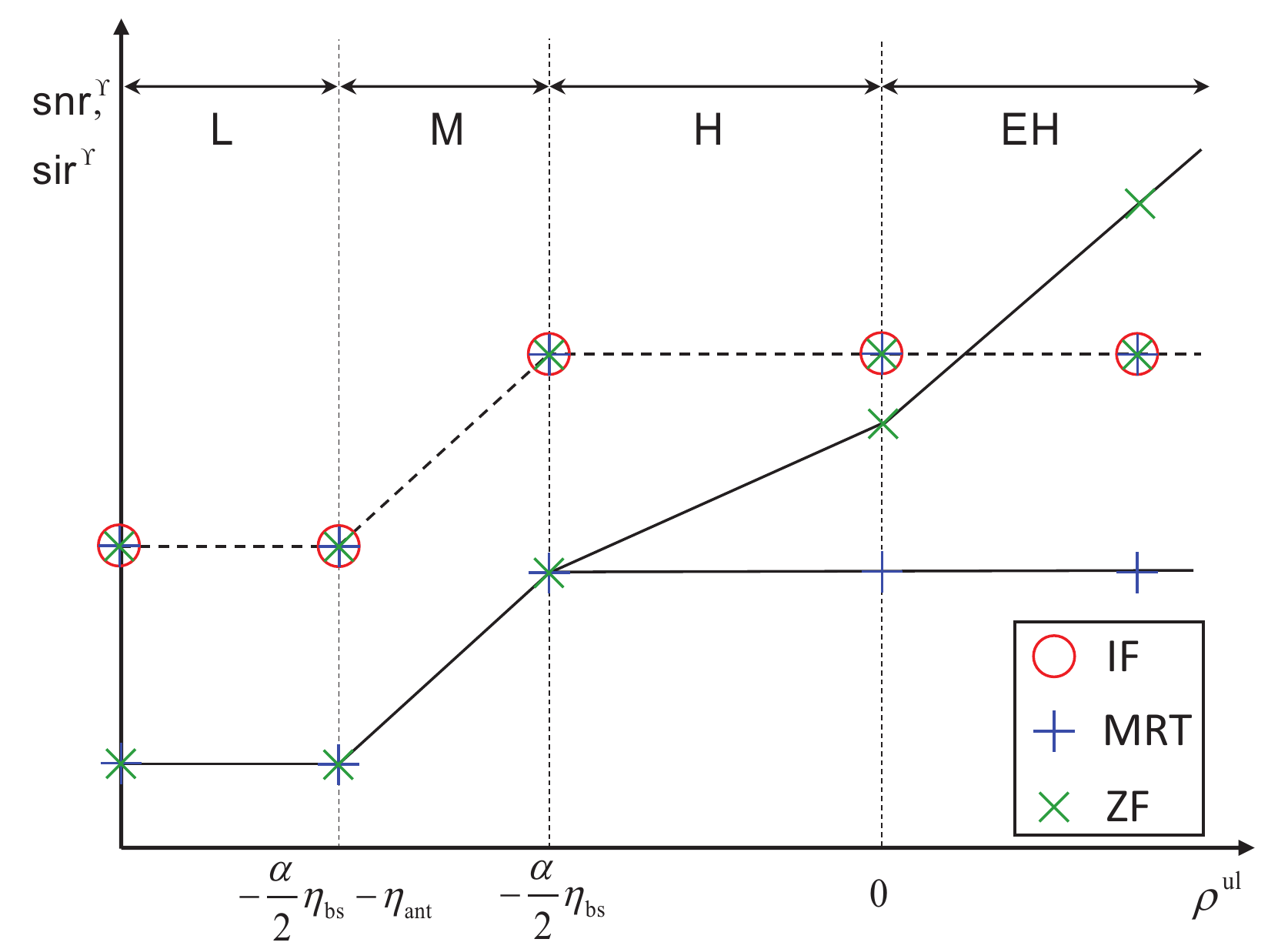}}
\subfigure[$\eta_{\rm{user}}\le\eta_{\rm{bs}}$]{\includegraphics[width=8cm]{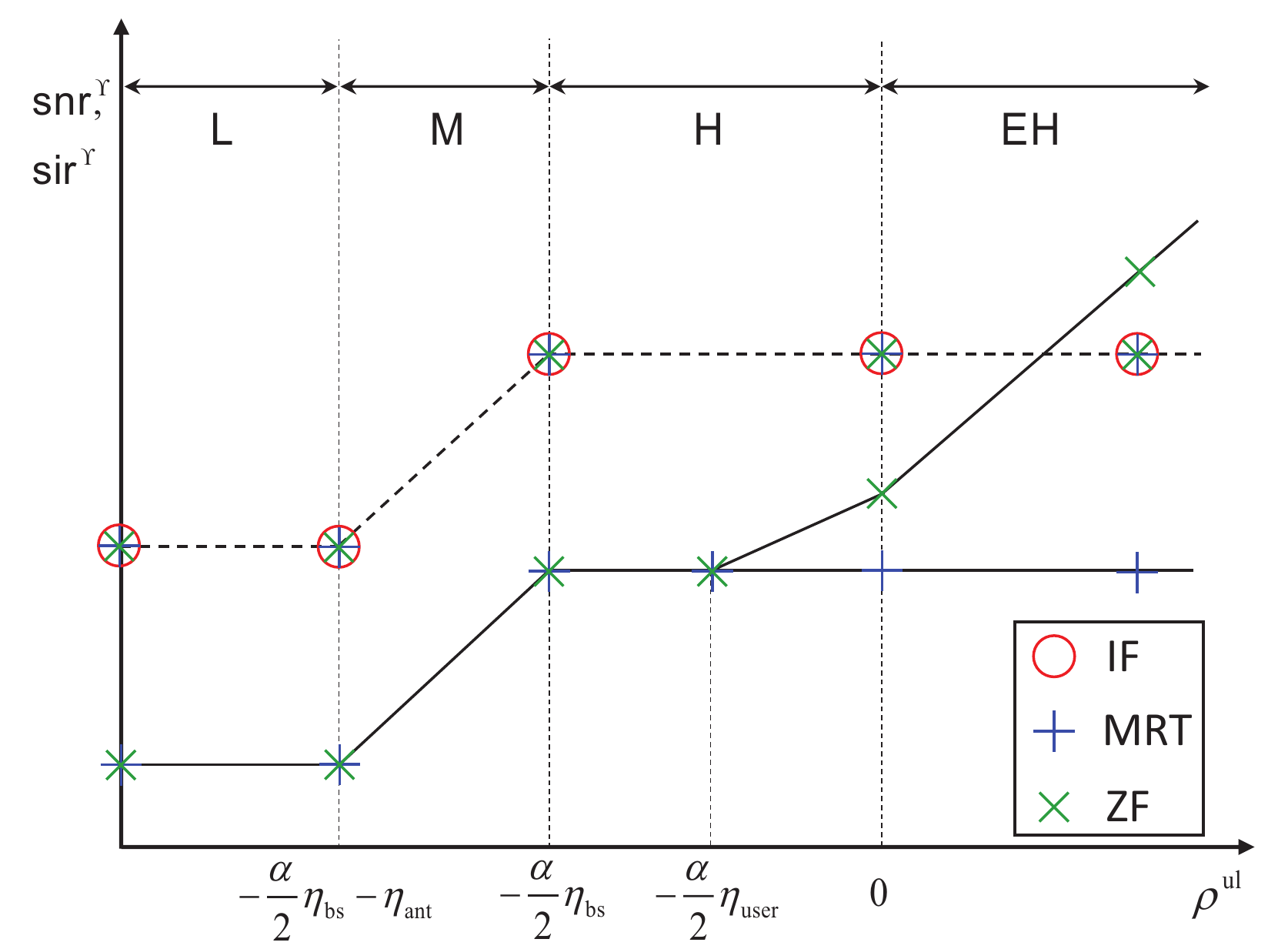}}
\caption{SNR (dashed line) or SIR (solid line) scaling exponents for IF ($\mathsf{o}$-marker), MRT ($\mathsf{+}$-marker), or ZF ($\mathsf{x}$-marker) operations according to $\rho^{\rm{ul}}$.}
\end{figure*}

\begin{figure*} 
\centering
\subfigure[$\rho^{\rm{ul}}<-\frac{\alpha}{2}$]{\includegraphics[width=8cm]{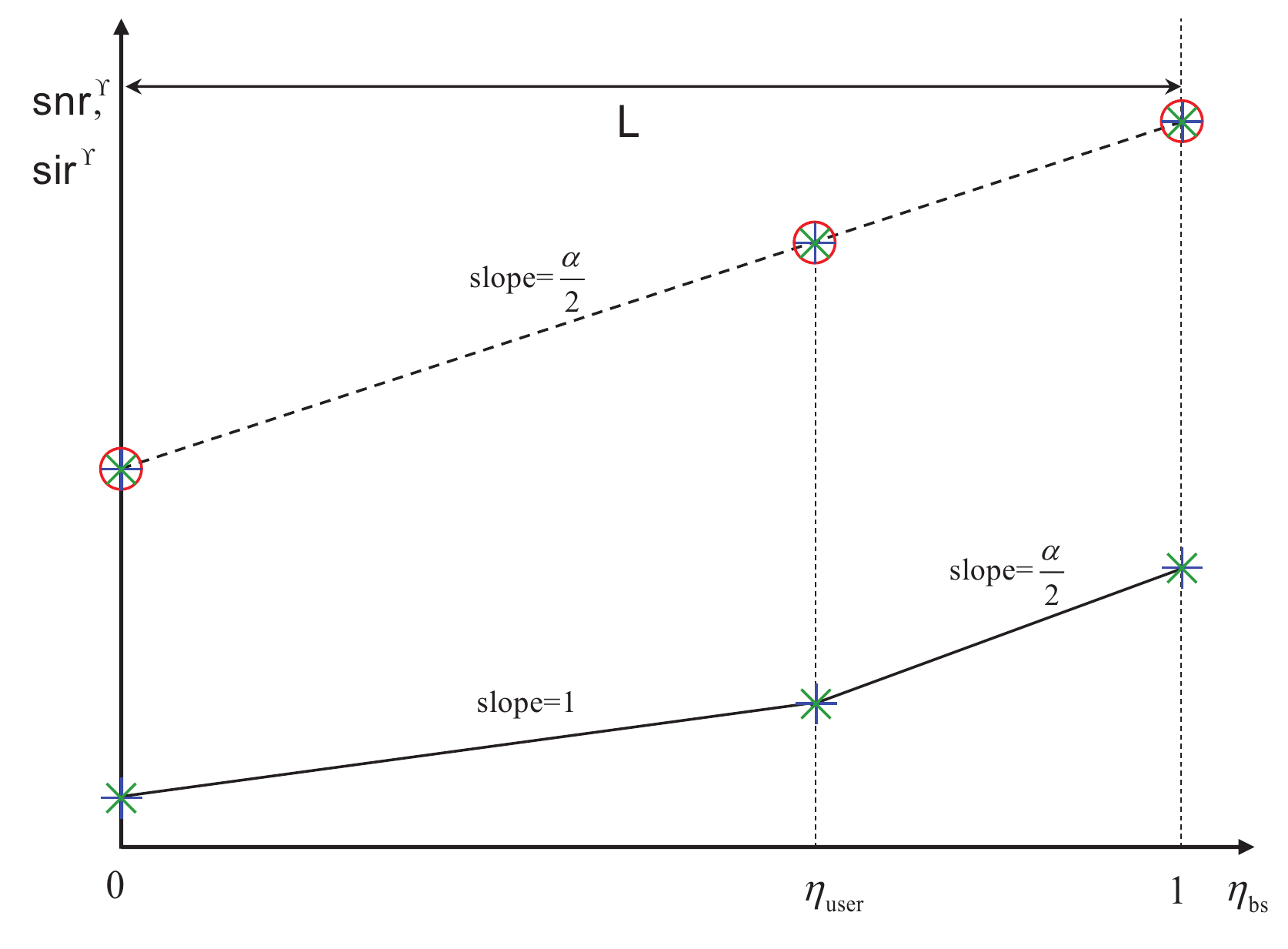}}
\subfigure[$-\frac{\alpha}{2}\le \rho^{\rm{ul}}<-1$]{\includegraphics[width=8cm]{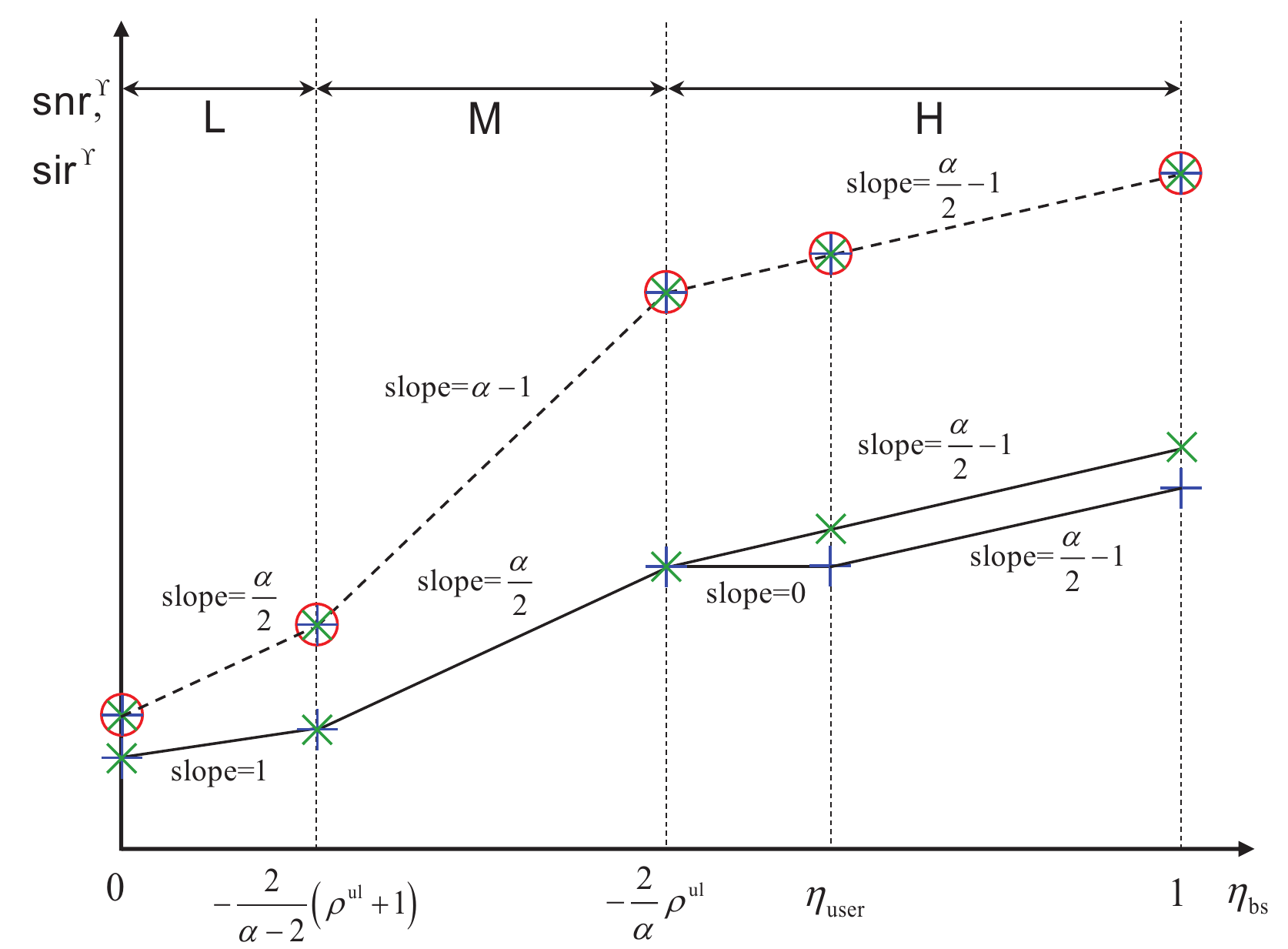}}\\
\subfigure[$-1\le \rho^{\rm{ul}}<0 $]{\includegraphics[width=8cm]{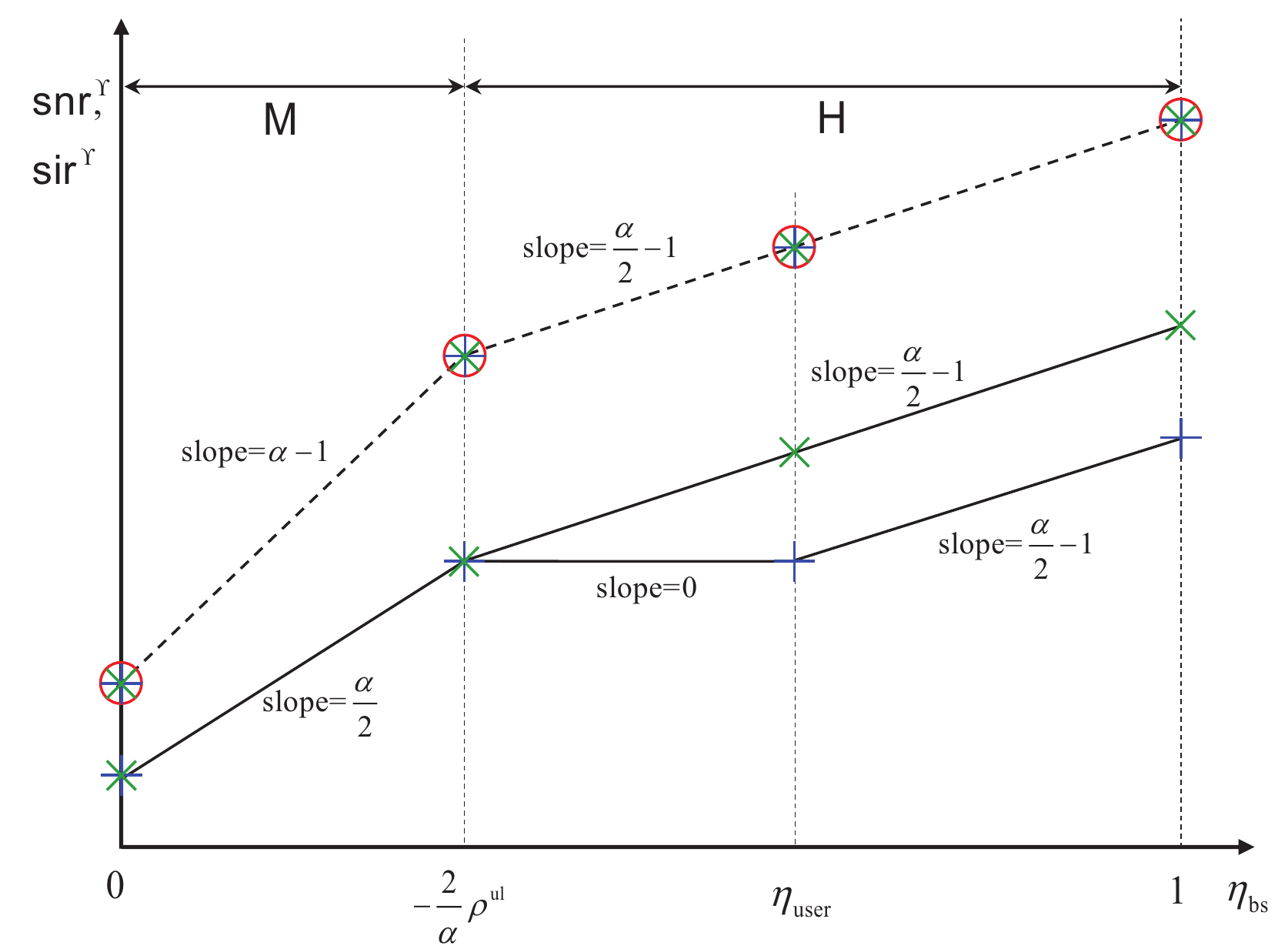}}
\subfigure[$ 0 \le \rho^{\rm{ul}}$]{\includegraphics[width=8cm]{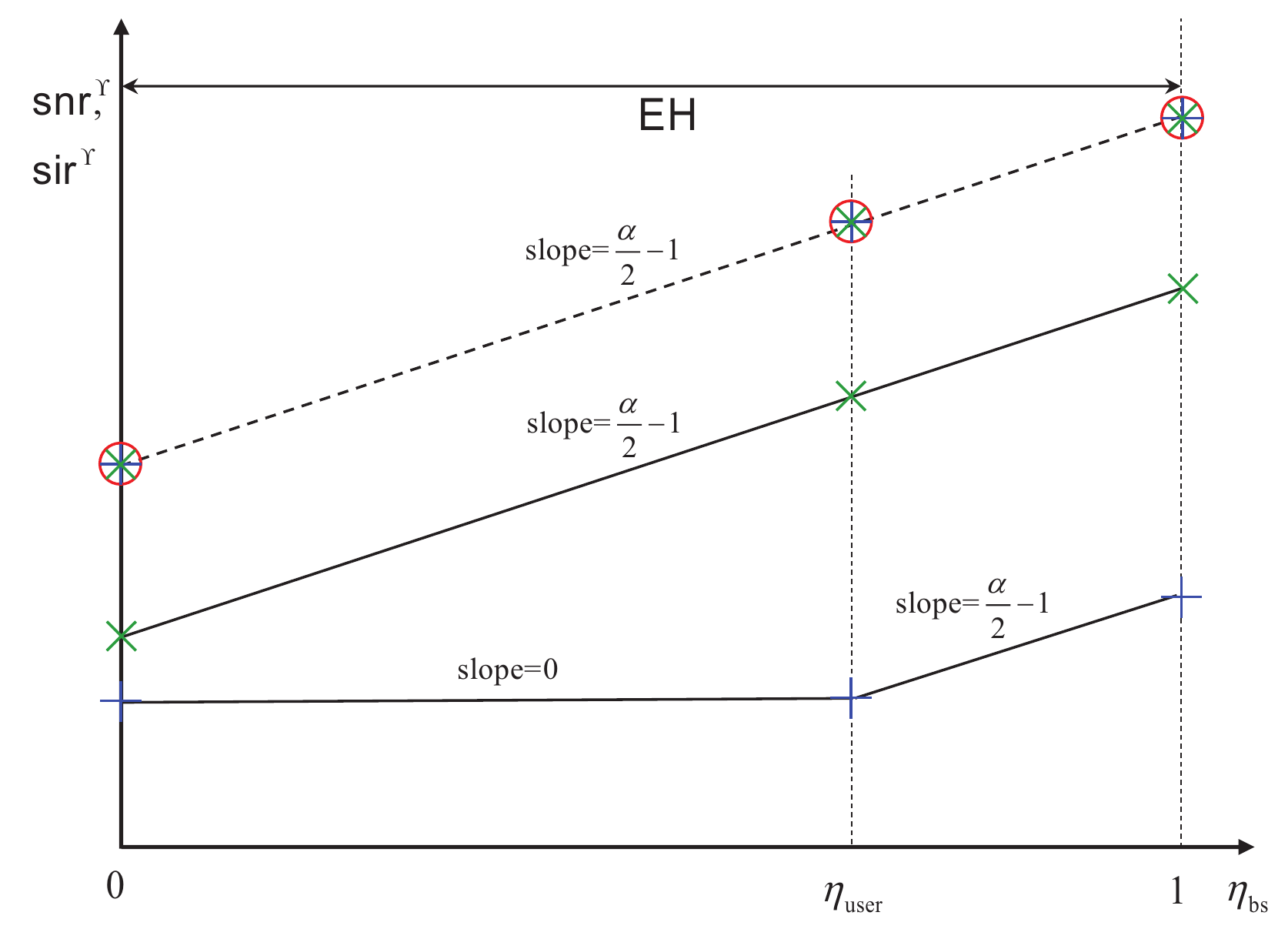}}
\caption{SNR (dashed line) or SIR (solid line) scaling exponents for IF ($\mathsf{o}$-marker), MRT ($\mathsf{+}$-marker), or ZF ($\mathsf{x}$-marker) operations according to $\eta_{\rm{bs}}$.}
\end{figure*}

\subsection{Without the Practical Limitations}
\begin{theorem}  Suppose that IF operation is used with a full association ($\upsilon_{\rm{PA}} = \eta_{\rm{bs}}$) and no pilot reuse ($\upsilon_{\rm{PR}} = \eta_{\rm{user}}$). Then, the scaling exponents are respectively given by 
\begin{align} \label{if_snr_scaling}
\mathsf{snr}^{\mathsf{if}} &= \rho ^{{\rm{dl}}} + \frac{\alpha}{2} {\eta_{\rm{bs}}} + \Xi, \\ 
\label{if_sir_scaling}
\mathsf{sir}^{\mathsf{if}} &= \infty, \\ 
\label{if_sinr_scaling}
\mathsf{sinr}^{\mathsf{if}} &= \mathsf{snr}^{\mathsf{if}},
\end{align}
where $\Xi = \left(  \rho^{\rm{ul}}+{\frac{\alpha }{2}{\eta_{\rm{bs}}} + {\eta_{\rm{ant}}} } \right)^+- \left(  \rho^{\rm{ul}}+{\frac{\alpha }{2}{\eta_{\rm{bs}}} } \right)^+$ denotes the array gain and $(x)^+=\max\{x,0\}$.
\end{theorem}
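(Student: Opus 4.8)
Two of the three claims are essentially free. In the ideal IF operation the entire multi-user interference in (\ref{received_signal_from_BS_l}) is stripped off by the Genie, so the interference sum defining $\mathsf{SIR}_k^{\mathsf{if}}$ is identically zero; hence $\mathsf{SIR}_k^{\mathsf{if}}=\infty$ for every realization, giving (\ref{if_sir_scaling}). Feeding this into the harmonic-mean identity $(\mathsf{SINR}_k^{\Upsilon})^{-1}=(\mathsf{SNR}_k^{\Upsilon})^{-1}+(\mathsf{SIR}_k^{\Upsilon})^{-1}$ recorded after Definition 1 immediately yields $\mathsf{sinr}^{\mathsf{if}}=\mathsf{snr}^{\mathsf{if}}$, i.e.\ (\ref{if_sinr_scaling}). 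All the content is therefore in the SNR exponent (\ref{if_snr_scaling}).

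For (\ref{if_snr_scaling}) the plan is to collapse $\mathsf{SNR}_k^{\mathsf{if}}$ onto a handful of random geometric sums and read off their exponents. Substituting $\mathbf{f}_{lk}^{\mathsf{if}}=\widehat{\mathbf{g}}_{lk}=\sqrt{\beta_{lk}}\,\widehat{\mathbf{h}}_{lk}$ into the power constraint (\ref{DL_transmission_power}) pins down $Q_k^{\mathsf{if}}$, and after splitting $\mathbf{h}_{lk}=\widehat{\mathbf{h}}_{lk}+\widetilde{\mathbf{h}}_{lk}$ as in (\ref{estimated_CSI}) one obtains
\begin{equation*}
\mathsf{SNR}_k^{\mathsf{if}}=P_k^{\mathrm{dl}}\,\frac{|A+B|^2}{A},\quad A=\sum_{X_l\in\mathcal{X}_k}\beta_{lk}\,\|\widehat{\mathbf{h}}_{lk}\|^2,\quad B=\sum_{X_l\in\mathcal{X}_k}\beta_{lk}\,\widetilde{\mathbf{h}}_{lk}^H\widehat{\mathbf{h}}_{lk}.
\end{equation*}
Here $A$ (real, positive) carries the coherent array gain and $B$ (zero mean) is the estimation-noise leakage, so $A=\Theta(N^{s_A})$ with $s_A$ the exponent of $M\sum_l\beta_{lk}\phi_{lkk}$, and $\mathbb{E}|B|^2=\Theta(N^{s_B})$ with $s_B$ the exponent of $M\sum_l\beta_{lk}^2\phi_{lkk}$.

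The heart of the argument is evaluating these exponents from the point geometry. With full association $\mathcal{X}_k=\mathcal{X}$ and $L=\Theta(N^{\eta_{\rm{bs}}})$ uniform BSs, the nearest-BS distance concentrates at $\Theta(N^{-\eta_{\rm{bs}}/2})$, so $\beta_{\min}=\Theta(N^{\alpha\eta_{\rm{bs}}/2})$. No pilot reuse collapses the contamination sum in $\phi_{lkk}$ to $\phi_{lkk}=P_k^{\mathrm{ul}}\beta_{lk}/(P_k^{\mathrm{ul}}\beta_{lk}+1)$, which is $\Theta(1)$ inside the crossover radius $\Theta(N^{\rho^{\rm{ul}}/\alpha})$ and $\Theta(P_k^{\mathrm{ul}}\beta_{lk})$ outside it. Because $\alpha>2$, each sum $\sum_l\beta_{lk}^m(\cdots)$ is dominated by its nearest-BS/near-field contribution, reducing $s_A$ and $s_B$ to elementary expressions; in fact $s_B-s_A=\tfrac{\alpha}{2}\eta_{\rm{bs}}$ throughout, this being the bare nearest-BS received-SNR floor. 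Reading $|A+B|^2=\Theta(N^{\max\{2s_A,s_B\}})$ then gives $\mathsf{snr}^{\mathsf{if}}=\rho^{\mathrm{dl}}+\max\{2s_A,s_B\}-s_A=\rho^{\mathrm{dl}}+\max\{s_A,\tfrac{\alpha}{2}\eta_{\rm{bs}}\}$. The switch $2s_A=s_B$ between the array-dominated branch (\textsf{EH}, \textsf{H}, \textsf{M}) and the leakage-dominated branch (\textsf{L}) occurs exactly at $\rho^{\rm{ul}}+\tfrac{\alpha}{2}\eta_{\rm{bs}}+\eta_{\rm{ant}}=0$, and collecting the two branches reproduces $\rho^{\mathrm{dl}}+\tfrac{\alpha}{2}\eta_{\rm{bs}}+\Xi$ with $\Xi=(s_A-\tfrac{\alpha}{2}\eta_{\rm{bs}})^+$ equal to the stated difference of positive parts.

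The main obstacle is the probabilistic, log-scale concentration demanded by Definition 1, not the bookkeeping. I would need (i) a high-probability two-sided bound placing the nearest-BS distance in $[N^{-\eta_{\rm{bs}}/2-\epsilon},N^{-\eta_{\rm{bs}}/2+\epsilon}]$ and, more delicately, showing that $A$, $\mathbb{E}|B|^2$, and $|B|^2$ concentrate in exponent around their nearest-BS-dominated values despite the heavy-tailed $\beta_{lk}$ (here $\alpha>2$ is exactly what guarantees near-field domination and tames the tail), and (ii) a careful treatment of the $A$-versus-$B$ crossover, since the $(\cdot)^+$ kinks in $\Xi$ come precisely from which of the mean or the fluctuation of the effective channel $\psi_{kk}^{\mathsf{if}}$ wins. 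The borderline case $\eta_{\rm{ant}}=0$ (fixed $M$, where $2s_A=s_B$) must be checked separately to confirm that no array gain survives there.
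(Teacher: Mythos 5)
Your reduction is essentially the paper's own argument: the paper (Appendix B, using the estimator decomposition of (\ref{estimated_CSI})) likewise splits $\psi^{\mathsf{if}}_{kk}$ into a coherent part and an estimation-noise part, normalizes by the DL power constraint (\ref{DL_transmission_power}) to fix $Q_k^{\mathsf{if}}$, and reads off the exponents by letting the nearest BS (and the crossover set $\mathcal{L}_k=\{X_l : P_k^{\rm{ul}}\beta_{lk}=\Omega(1)\}$, i.e.\ your radius $\Theta(N^{\rho^{\rm{ul}}/\alpha})$) dominate each sum; your single formula $\rho^{\rm{dl}}+\max\{2s_A,s_B\}-s_A$ is just a compact packaging of the paper's four-regime case split, and I checked it reproduces $\Xi$ in all of \textsf{EH}, \textsf{H}, \textsf{M}, \textsf{L}. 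Two caveats. First, the concentration step you defer to items (i)--(ii) is not a routine afterthought: it is exactly the content of the paper's Lemma 1 and Lemma 2 (Appendix A), an order-statistics/Riemann-sum argument showing that $\sum_l h(\beta_{lk})$ concentrates in exponent on its dominant dyadic distance shell for $\alpha>2$; your proposal correctly identifies what is needed but does not supply it, so as written the proof is a correct skeleton with the main technical lemma outstanding. Second, a small slip: $\mathbb{E}|B|^2=M\sum_l\beta_{lk}^2\phi_{lkk}(1-\phi_{lkk})$, not $M\sum_l\beta_{lk}^2\phi_{lkk}$; dropping the $(1-\phi_{lkk})$ factor overstates $s_B$ in \textsf{EH} and \textsf{H} (where the nearest BSs have $1-\phi_{lkk}\asymp (P_k^{\rm{ul}}\beta_{lk})^{-1}$), and your identity $s_B-s_A=\tfrac{\alpha}{2}\eta_{\rm{bs}}$ ``throughout'' actually fails there ($s_B-s_A=-\rho^{\rm{ul}}$ with the correct variance). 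This is harmless because in those regimes $2s_A$ dominates either version of $s_B$, so the final exponent is unchanged, but the statement should be corrected before the crossover analysis in (ii) is carried out.
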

\begin{IEEEproof}
Please see Appendix A.
\end{IEEEproof}
Theorem 1 is also illustrated in Figs. 2 and 3 according to $\rho^{\rm{ul}}$ and $\eta_{\rm{bs}}$, respectively. 
Intuitively, the SNR of IF operation is composed of the three parts as
\begin{equation}\label{snr_of_if}
\mathsf{SNR}_k^{\mathsf{if}} \asymp \underbrace{N^{\rho^{\rm{dl}}}}_{\text{DL transmit power}} \times  \underbrace{N^{\frac{\alpha}{2}\eta_{bs}}}_{\text{densification gain}}\times\underbrace{N^{\Xi}}_{\text{array gain}},
\end{equation}
where the first part is the DL transmit power, the second part is the densification gain which comes from the decrease of the access distance of  $\Theta(N^{-\frac{1}{2}\eta_{\rm{bs}}})$ and the last part is the array gain of a coherent transmission which depends on the CSI accuracy and thus the UL transmit power.

\begin{remark} [SNR behavior of IF operation]
Fig. 2 reveals how the UL transmit power affects on the SNR behavior. In \textsf{EH} and \textsf{H}, the full array gain ($\Xi=\eta_{\rm{ant}}$) is achieved so that an additional UL transmit power does not improve the quality of DL service in the network, i.e., is wasteful. In \textsf{M}, a partial array gain ($ 0 \le\Xi< \eta_{\rm{ant}}$) depending on the UL transmit power is obtained so that the network total power needs to be consumed by considering both the DL transmit power and the CSI accuracy. In \textsf{L}, no array gain ($\Xi = 0$) is obtained due to poor CSI accuracy so that the quality of DL service becomes irrelevant to the UL transmit power and its performance is identical to the random beamforming without small-scale CSIs in \cite{SharifRBF}.

Fig. 3 shows the SNR behavior according to the BS scaling exponent. It turns out that additional BSs (even with smaller number of BS antennas at each BS) are always beneficial but the slope of $\mathsf{snr^{if}}$ (vs. $\eta_{\rm{bs}}$) varies according to $\rho^{\rm{ul}}$. The slope becomes $\frac{\alpha}{2}$ in \textsf{L}, $\alpha-1$ in \textsf{M}, and $\frac{\alpha}{2}-1$ in \textsf{H} or \textsf{EH}, which implies that additional BSs (while keeping the network size fixed) become the most effective in \textsf{M} because the additional BSs improve not only the densification gain but also the array gain and the least effective in \textsf{H} or \textsf{EH} because only the densification gain is improved.  $\blacksquare$
\end{remark}



\begin{theorem} Suppose that MRT operation is used with a full association ($\upsilon_{\rm{PA}} = \eta_{\rm{bs}}$) and no pilot reuse ($\upsilon_{\rm{PR}} = \eta_{\rm{user}}$). Then, the scaling exponents are respectively given by 
\begin{align} \label{mrt_snr_scaling}
\mathsf{snr}^\mathsf{mrt} &= \mathsf{snr}^\mathsf{if} ,\\\label{mrt_sir_scaling}
\mathsf{sir} ^\mathsf{mrt} &= \mathsf{snr^{mrt}} - \Delta^\mathsf{mrt},\\ \label{mrt_sinr_scaling}
\mathsf{sinr}^\mathsf{mrt}&= \mathsf{snr}^\mathsf{mrt} - \left( \Delta^\mathsf{mrt}\right)^+,
\end{align}
where $\Delta^\mathsf{mrt}= {\rho ^{{\rm{dl}}}}+{{ \frac{\alpha }{2}\min \left\{ {{\eta _{{\rm{bs}}}},{\eta _{{\rm{user}}}}} \right\} + {{\left( {{\eta _{{\rm{user}}}} - {\eta _{{\rm{bs}}}}} \right)}^ + }} }$.
\begin{IEEEproof}
Please see Appendix B.
\end{IEEEproof}
\end{theorem}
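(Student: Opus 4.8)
The plan is to exploit the fact that the MRT precoder $\mathbf{F}^\mathsf{mrt}=\widehat{\mathbf{G}}^H$ is identical to the IF precoder, so that, under the same DL power budget, the desired-signal term $\psi^\mathsf{mrt}_{kk}=\sum_{X_l\in\mathcal{X}_k}\mathbf{g}_{lk}^H\widehat{\mathbf{g}}_{lk}$ and the allocated power $Q_k^\mathsf{mrt}$ coincide with those analyzed in Theorem 1. Hence (\ref{mrt_snr_scaling}) is immediate, $\mathsf{snr}^\mathsf{mrt}=\mathsf{snr}^\mathsf{if}$, and the only new work is the interference. Writing $\mathsf{SIR}_k^\mathsf{mrt}=\mathsf{SNR}_k^\mathsf{mrt}/I_k$ with $I_k\triangleq\sum_{U_j\in\mathcal{U}\backslash\{U_k\}}Q_j^\mathsf{mrt}|\psi^\mathsf{mrt}_{kj}|^2$, I get $\mathsf{sir}^\mathsf{mrt}=\mathsf{snr}^\mathsf{mrt}-(\text{scaling exponent of }I_k)$, so the whole problem reduces to showing $I_k\asymp N^{\Delta^\mathsf{mrt}}$ in the sense of Definition 1.

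First I would analyze a single cross term. Under no pilot reuse ($\upsilon_{\rm{PR}}=\eta_{\rm{user}}$, hence $\delta(\pi_k-\pi_j)=0$ for $j\neq k$), the estimate $\widehat{\mathbf{g}}_{lj}\sim\mathcal{CN}(\mathbf{0},\beta_{lj}\phi_{ljj}\mathbf{I}_M)$ carries no contamination from user $k$ and is therefore independent of the true channel $\mathbf{g}_{lk}$. Consequently $\mathbf{g}_{lk}^H\widehat{\mathbf{g}}_{lj}$ is zero-mean and its contributions are independent across $l$, so the second moment is a sum of per-BS second moments, each growing linearly in $M$ by incoherent antenna combining:
\begin{equation}
\mathbb{E}\left[|\psi^\mathsf{mrt}_{kj}|^2\right] = M\sum_{X_l\in\mathcal{X}_j}\beta_{lk}\beta_{lj}\phi_{ljj}.
\end{equation}
Next I would eliminate $Q_j^\mathsf{mrt}$ via the DL power budget: from (\ref{DL_transmission_power}) and $\|\mathbf{f}^\mathsf{mrt}_{lj}\|^2=\|\widehat{\mathbf{g}}_{lj}\|^2\asymp M\beta_{lj}\phi_{ljj}$ one has $P_j^{\rm{dl}}=Q_j^\mathsf{mrt}M\sum_l\beta_{lj}\phi_{ljj}=\Theta(N^{\rho^{\rm{dl}}})$. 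Substituting back collapses the per-user interference to $Q_j^\mathsf{mrt}\mathbb{E}|\psi^\mathsf{mrt}_{kj}|^2\asymp N^{\rho^{\rm{dl}}}\big(\sum_l\beta_{lk}\beta_{lj}\phi_{ljj}\big)/\big(\sum_l\beta_{lj}\phi_{ljj}\big)$. Since both weighted sums concentrate on the BS $l_j^\ast$ nearest to user $j$, the ratio is $\asymp\beta_{l_j^\ast k}=|X_{l_j^\ast}-U_k|^{-\alpha}$, i.e. the long-term gain from $j$'s serving BS to the victim $k$.

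The core of the argument is then the spatial sum $I_k\asymp N^{\rho^{\rm{dl}}}\sum_{j\neq k}|X_{l_j^\ast}-U_k|^{-\alpha}$, which I would evaluate by a nearest-neighbour/Campbell-type estimate over the uniform point processes, splitting into the two regimes that generate the $\min\{\cdot\}$ and $(\cdot)^+$ structure. When $\eta_{\rm{user}}\le\eta_{\rm{bs}}$ the interferers' serving BSs are essentially distinct and sit at the interferers' locations, so the sum is dominated by the nearest interferer at distance $\Theta(N^{-\eta_{\rm{user}}/2})$, giving $\sum\asymp N^{\frac{\alpha}{2}\eta_{\rm{user}}}$. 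When $\eta_{\rm{user}}>\eta_{\rm{bs}}$ each of the $L$ BSs serves $\Theta(K/L)=\Theta(N^{\eta_{\rm{user}}-\eta_{\rm{bs}}})$ interferers, so the sum factorizes as $N^{\eta_{\rm{user}}-\eta_{\rm{bs}}}\sum_l|X_l-U_k|^{-\alpha}\asymp N^{\eta_{\rm{user}}-\eta_{\rm{bs}}}N^{\frac{\alpha}{2}\eta_{\rm{bs}}}$, where $\sum_l|X_l-U_k|^{-\alpha}\asymp N^{\frac{\alpha}{2}\eta_{\rm{bs}}}$ because $\alpha>2$ makes the sum nearest-BS dominated. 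The two cases combine into $I_k\asymp N^{\rho^{\rm{dl}}+\frac{\alpha}{2}\min\{\eta_{\rm{bs}},\eta_{\rm{user}}\}+(\eta_{\rm{user}}-\eta_{\rm{bs}})^+}=N^{\Delta^\mathsf{mrt}}$, which is exactly (\ref{mrt_sir_scaling}); then (\ref{mrt_sinr_scaling}) follows from $\mathsf{sinr}^\mathsf{mrt}=\min\{\mathsf{snr}^\mathsf{mrt},\mathsf{sir}^\mathsf{mrt}\}=\mathsf{snr}^\mathsf{mrt}-(\Delta^\mathsf{mrt})^+$. The hard part will be making the ``concentration on the nearest BS'' and the per-BS interferer count rigorous in the in-probability sense of Definition 1 rather than merely in expectation: I must show the spatial sums do not fluctuate in order, controlling both the light-tailed bulk (where $\alpha>2$ guarantees convergence of the expected sum) and the heavy contribution of the single nearest point, and ruling out atypically close points via the nearest-neighbour distance law. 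Confirming the boundary regime $\eta_{\rm{user}}=\eta_{\rm{bs}}$ and that the incoherent antenna/BS combining in $\mathbb{E}|\psi^\mathsf{mrt}_{kj}|^2$ itself concentrates, so that the second moment pins down the true order, are the remaining technical points.
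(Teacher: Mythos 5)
Your proposal is correct and shares the paper's skeleton — identifying $\psi^{\mathsf{mrt}}_{kk}=\psi^{\mathsf{if}}_{kk}$ for the SNR, evaluating the cross terms through the second moment $\mathbb{E}|\mathbf{h}_{lk}^H\widehat{\mathbf{h}}_{lj}|^2 = M\phi_{ljj}$ (your single expression $M\sum_l\beta_{lk}\beta_{lj}\phi_{ljj}$ is exactly the paper's two-term split into the $|\mathbf{h}_{lk}^H\mathbf{h}_{lj}|^2$ and $|\mathbf{h}_{lk}^H\widetilde{\mathbf{v}}_{lj}|^2$ parts, recombined via $\phi_{ljj}^2+\vartheta_{lj}^2=\phi_{ljj}$), and normalizing $Q_j^{\mathsf{mrt}}$ through the same DL power identity as the paper's (42) — but you execute the spatial-sum step differently. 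The paper keeps $Q_j^{\mathsf{mrt}}=\Theta(N^{\nu})$ symbolic, splits $I_k$ over $\mathcal{L}_j$ versus $\mathcal{X}_j\backslash\mathcal{L}_j$, runs a separate case analysis for \textsf{EH}, \textsf{H}, \textsf{M}, \textsf{L}, and evaluates each double sum $\sum_j\sum_l\beta_{lk}^{a}\beta_{lj}^{b}$ with its Lemma 2; the $\rho^{\rm{ul}}$-dependence of the interference then cancels against the $\rho^{\rm{ul}}$-dependence of $\nu$ only at the very end. You instead normalize first, so the $\phi$'s cancel inside the ratio $\sum_l\beta_{lk}\beta_{lj}\phi_{ljj}/\sum_l\beta_{lj}\phi_{ljj}\asymp\beta_{l_j^{\ast}k}$, which makes the $\rho^{\rm{ul}}$- and $\eta_{\rm{ant}}$-independence of $\Delta^{\mathsf{mrt}}$ manifest from the outset and collapses the problem to one regime-free spatial sum $\sum_j|X_{l_j^{\ast}}-U_k|^{-\alpha}$; that is genuinely cleaner. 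The price is that your collapse quietly asserts that the numerator — a sum with two singular kernels, one centered at $U_j$ and one at $U_k$ — is dominated by the BS nearest to $U_j$; this is true (the term at the BS nearest to $U_k$ is of the same or smaller order in every regime, and $\alpha>2$ kills the bulk), but it is precisely the content of the paper's Lemma 2, so you would still need that lemma or an equivalent nearest-point-domination argument to make it rigorous; the same lemma also supplies the in-probability (rather than in-expectation) control you correctly flag as the remaining technical gap, and the paper is no more rigorous than you are on the concentration of $|\mathbf{h}_{lk}^H\widehat{\mathbf{h}}_{lj}|^2$ around its mean.
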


Theorem 2 is also illustrated in Figs. 2 and 3 according to $\rho^{\rm{ul}}$ and $\eta_{\rm{bs}}$, respectively.
Interestingly, although $\mathsf{snr^{mrt}}$ is identical to $\mathsf{snr^{if}}$, the gap between $\mathsf{sir^{mrt}}$ and $\mathsf{snr^{mrt}}$, denoted by $\Delta^\mathsf{mrt}$, changes according to the sign of $\eta_{\rm{bs}}-\eta_{\rm{user}}$. 
From the definition of $\Delta^{\mathsf{mrt}}$, the interference caused in MRT operation at a randomly selected user $k$ can be represented as 
\begin{equation}
\begin{split}
I_k &\asymp N^{\Delta^{\mathsf{mrt}}}\\
&=  \underbrace{N^{\rho^{\rm{dl}}}}_{\scriptstyle\text{DL transmit}\atop\scriptstyle\text{power}} \times  \underbrace{N^{\frac{\alpha}{2}\min\{\eta_{\rm{bs}},\eta_{\rm{user}}\}}}_{\scriptstyle \text{received power of} \atop \scriptstyle \text{a dominant interferer}}\times \underbrace{N^{(\eta_{\rm{user}}-\eta_{\rm{bs}})^+}}_{\scriptstyle\text{\# of dominant}\atop\scriptstyle\text{interferers}},
\end{split}
\end{equation}
where the second and third parts depend on the sign of $\eta_{\rm{user}} - \eta_{\rm{bs}}$. 
When $\eta_{\rm{user}}\ge \eta_{\rm{bs}}$, one BS should simultaneously serve $\Theta(N^{\eta_{\rm{user}}-\eta_{\rm{bs}}})$ users apart by $\Theta(N^{-\frac{1}{2}\eta_{\rm{bs}}})$ so that there are $\Theta(N^{\eta_{\rm{user}}-\eta_{\rm{bs}}})$ interferer whose received power is $\Theta(N^{\frac{\alpha}{2}\eta_{\rm{bs}}})$. When $\eta_{\rm{user}}< \eta_{\rm{bs}}$, the dominant interference comes from the  BS apart by $\Theta(N^{-\frac{1}{2}\eta_{\rm{user}}})$ so that the received power of the dominant interference is $\Theta(N^{\frac{\alpha}{2}\eta_{\rm{user}}})$.

\begin{remark} [Asymptotical optimality of MRT operation]
In order for MRT operation to behave as IF operation asymptotically (i.e., $\Delta^{\mathsf{mrt}}\le 0$), the DL transmit power should be limited as 
\begin{equation}\label{eq_23_mrt}
{\rho ^{{\rm{dl}}}} \le \left\{ {\begin{array}{*{20}{l}}
{ - \frac{\alpha }{2}{\eta _{{\rm{user}}}}},&{{\text{if }}{\eta _{{\rm{bs}}}} \ge {\eta _{{\rm{user}}}}},\\
{ - \frac{\alpha }{2} {\eta _{{\rm{bs}}}}-(\eta_{\rm{user}}-\eta_{\rm{bs}})},&{{\text{if }}{\eta _{{\rm{bs}}}} < {\eta _{{\rm{user}}}}},
\end{array}} \right.
\end{equation}
which gives us the following insights.
\begin{itemize}
\item For MRT operation, the IF optimality condition depends only on the numbers of users ($\eta_{\rm{user}}$) and BSs ($\eta_{\rm{bs}}$) and the DL transmit power ($\rho^{{\rm{dl}}}$), but is independent to the number of BS antennas ($\eta_{\rm{ant}}$) and the UL transmit power ($\rho^{\rm{ul}}$). 
\item When the number of antennas in a BS is much larger than the total number of users in a multi-cell network, MRT operation becomes interference-free asymptotically as expected in literature \cite{RusekScaling}. However, this is of little interest because the network size is too large (or the number of users is too small). Instead, Theorem 2 and (\ref{eq_23_mrt}) gives more insightful IF optimality condition for MRT operation being asymptotically interference-free in a multi-cell network. $\blacksquare$
\end{itemize}
\end{remark}

\begin{remark} [SIR behavior of MRT operation]
As can be seen from Fig. 3, additional BSs are beneficial (while fixing $N$) in most cases and the slope of $\mathsf{sir^{mrt}}$ (vs. $\eta_{\rm{bs}}$) can be $0$, $1$, $\frac{\alpha}{2}-1$ and $\frac{\alpha}{2}$. Interestingly, when $-\frac{2}{\alpha}\rho^{\rm{ul}}\le\eta_{\rm{bs}}\le\eta_{\rm{user}}$, the slope of $\mathsf{sir^{mrt}}$ becomes 0, which means that additional BSs are wasteful as long as $\eta_{\rm{bs}}$ is within that interval. On the other hand, the slope becomes $\frac{\alpha}{2}$ in \textsf{M} and in $\textsf{L}$ if $\eta_{\rm{bs}}>\eta_{\rm{user}}$, in which additional BSs are the most effective. $\blacksquare$
\end{remark}


\begin{theorem}
Suppose that ZF operation is used with a full association and no pilot reuse. Then, the scaling exponents are respectively given by 
\begin{align} \label{zf_snr_scaling}
\mathsf{snr^{zf}} &= \mathsf{snr^{if}},\\ \label{zf_sir_scaling}
\mathsf{sir^{zf}} &= \mathsf{snr^{zf}}-\Delta^{\mathsf{zf}} \\ \label{zf_sinr_scaling}
\mathsf{sinr^{zf}} &= \mathsf{snr^{zf}} - \left( \Delta^\mathsf{zf}\right)^+,
\end{align}
where $\Delta ^\mathsf{zf} = \Delta^{\mathsf{mrt}}- \left( {1 - \frac{2}{\alpha }} \right){\left( {\frac{\alpha }{2}\min \left\{ {{\eta _{{\rm{bs}}}},{\eta _{{\rm{user}}}}} \right\} + {\rho ^{{\rm{ul}}}}} \right)^ + } - \frac{2}{\alpha }{\left( {{\rho ^{{\rm{ul}}}}} \right)^ + }$.
\begin{IEEEproof}
Please see Appendix C.
\end{IEEEproof}
\end{theorem}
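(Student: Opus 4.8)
The plan is to reuse the template of the MRT analysis (Appendix B) while exploiting the algebraic structure of the ZF precoder to isolate the effect of channel-estimation error. Decompose the true channel as $\mathbf{g}_{lk}=\widehat{\mathbf{g}}_{lk}+\widetilde{\mathbf{g}}_{lk}$, where $\widetilde{\mathbf{g}}_{lk}=\sqrt{\beta_{lk}}\widetilde{\mathbf{h}}_{lk}$ with $\widetilde{\mathbf{h}}_{lk}\sim\mathcal{CN}(\mathbf{0},(1-\phi_{lkk})\mathbf{I}_M)$ independent of $\widehat{\mathbf{G}}$, and stack these into $\mathbf{G}=\widehat{\mathbf{G}}+\widetilde{\mathbf{G}}$. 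Because $\mathbf{F}^{\mathsf{zf}}=\widehat{\mathbf{G}}^H(\widehat{\mathbf{G}}\widehat{\mathbf{G}}^H)^{-1}$ satisfies $\widehat{\mathbf{G}}\mathbf{F}^{\mathsf{zf}}=\mathbf{I}_K$, the effective channel becomes $\mathbf{G}\mathbf{F}^{\mathsf{zf}}=\mathbf{I}_K+\widetilde{\mathbf{G}}\mathbf{F}^{\mathsf{zf}}$, so that $\psi^{\mathsf{zf}}_{kk}=1+\widetilde{\mathbf{g}}_k^H\mathbf{f}_k^{\mathsf{zf}}$ and $\psi^{\mathsf{zf}}_{kj}=\widetilde{\mathbf{g}}_k^H\mathbf{f}_j^{\mathsf{zf}}$ for $j\neq k$, where $\mathbf{f}_j^{\mathsf{zf}}$ is the $j$th column of $\mathbf{F}^{\mathsf{zf}}$ and $\widetilde{\mathbf{g}}_k$ is the stacked error for user $k$. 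Thus all residual interference is carried solely by the estimation error, which is the single observation that separates the ZF analysis from the MRT one and ultimately produces the two extra terms of $\Delta^{\mathsf{zf}}$.

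For the SNR I would first invoke the power normalization (\ref{DL_transmission_power}) to get $Q_k^{\mathsf{zf}}=P_k^{\rm{dl}}/\|\mathbf{f}_k^{\mathsf{zf}}\|^2=P_k^{\rm{dl}}/[(\widehat{\mathbf{G}}\widehat{\mathbf{G}}^H)^{-1}]_{kk}$, and argue that the error term in $\psi^{\mathsf{zf}}_{kk}$ is negligible in scaling so that $\mathsf{SNR}_k^{\mathsf{zf}}\asymp Q_k^{\mathsf{zf}}$. The crux is then the scaling of $[(\widehat{\mathbf{G}}\widehat{\mathbf{G}}^H)^{-1}]_{kk}^{-1}=\widehat{\mathbf{g}}_k^H\mathbf{P}_k^{\perp}\widehat{\mathbf{g}}_k$, the energy of user $k$'s estimated channel after projecting out the remaining users' estimated channels. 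I would show that, provided the effective spatial dimension available to separate user $k$ from its co-channel users exceeds their number (so the ZF system is well conditioned, which holds under full association), this projection removes only a constant fraction of the energy; hence $\widehat{\mathbf{g}}_k^H\mathbf{P}_k^{\perp}\widehat{\mathbf{g}}_k\asymp\|\widehat{\mathbf{g}}_k\|^2\asymp M\sum_{X_l\in\mathcal{X}_k}\beta_{lk}\phi_{lkk}$, which is exactly the MRT/IF effective gain. Concentration of $\|\widehat{\mathbf{h}}_{lk}\|^2$ around $M\phi_{lkk}$ and of the nearest-BS order statistics then yields $\mathsf{snr}^{\mathsf{zf}}=\mathsf{snr}^{\mathsf{if}}$, establishing (\ref{zf_snr_scaling}).

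For the SIR, the residual interference power at user $k$ is $\sum_{U_j\in\mathcal{U}\backslash\{U_k\}}Q_j^{\mathsf{zf}}|\widetilde{\mathbf{g}}_k^H\mathbf{f}_j^{\mathsf{zf}}|^2$; conditioning on $\widehat{\mathbf{G}}$ and using the independence of $\widetilde{\mathbf{g}}_k$ gives conditional mean $\sum_{X_l\in\mathcal{X}_k}\beta_{lk}(1-\phi_{lkk})\sum_{U_j\neq U_k}Q_j^{\mathsf{zf}}\|\mathbf{f}_{lj}^{\mathsf{zf}}\|^2$. Hence the per-link weight $\beta_{lk}\phi_{lkk}$ that governed the MRT leakage is replaced by the error variance $\beta_{lk}(1-\phi_{lkk})$, and since with no pilot reuse $\phi_{lkk}=\gamma_{lk}/(1+\gamma_{lk})$ with $\gamma_{lk}=P_k^{\rm{ul}}\beta_{lk}$, the ratio $\phi_{lkk}/(1-\phi_{lkk})=\gamma_{lk}$ shows that ZF suppresses the MRT-type leakage on each link by exactly the effective UL SNR $\gamma_{lk}$. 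The remaining task is to aggregate this per-link suppression over the network geometry: with BS density contributing weight $\propto r\,dr$ and pathloss $\beta\propto r^{-\alpha}$, the crossover $\gamma(r)=1$ sits at $r_0=\Theta(N^{\rho^{\rm{ul}}/\alpha})$, and splitting the distance integral at $r_0$ — separating the regimes $\gamma_{lk}\gg1$ and $\gamma_{lk}\ll1$ — yields the fractional exponents $\tfrac{2}{\alpha}$ and $1-\tfrac{2}{\alpha}$. This produces the correction $-(1-\tfrac{2}{\alpha})(\tfrac{\alpha}{2}\min\{\eta_{\rm{bs}},\eta_{\rm{user}}\}+\rho^{\rm{ul}})^+-\tfrac{2}{\alpha}(\rho^{\rm{ul}})^+$ added to $\Delta^{\mathsf{mrt}}$, giving $\Delta^{\mathsf{zf}}$ and hence (\ref{zf_sir_scaling}); combining with the SNR through $\mathsf{sinr}^{\mathsf{zf}}=\min\{\mathsf{snr}^{\mathsf{zf}},\mathsf{sir}^{\mathsf{zf}}\}$ gives (\ref{zf_sinr_scaling}).

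The hardest step is the SIR aggregation: one must both justify that the residual interference concentrates on its conditional mean (a second-moment argument over the estimation error and over the many interferers) and evaluate the distance integral carefully, so that the two regimes $\gamma_{lk}\gg1$ and $\gamma_{lk}\ll1$ combine into precisely the exponents $\tfrac{2}{\alpha}$ and $1-\tfrac{2}{\alpha}$ rather than a single evaluation at the dominant-interferer distance. A secondary obstacle is the well-conditioning of the ZF projection in the SNR step: one must verify that the effective signal dimension exceeds the local user load so that no polynomial-in-$N$ power penalty appears and $\mathsf{snr}^{\mathsf{zf}}$ coincides with $\mathsf{snr}^{\mathsf{if}}$.
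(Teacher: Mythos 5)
Your proposal follows essentially the same route as the paper's Appendix proof: it exploits $\widehat{\mathbf{G}}\mathbf{F}^{\mathsf{zf}}=\mathbf{I}_K$ to write $\psi^{\mathsf{zf}}_{kj}=\delta(k-j)+\widetilde{\mathbf{g}}_k^H\mathbf{f}_j^{\mathsf{zf}}$, obtains the SNR from the diagonal of $(\widehat{\mathbf{G}}\widehat{\mathbf{G}}^H)^{-1}$ under the same diagonal-dominance argument, and evaluates the residual interference by replacing the estimation error with its covariance and splitting the geometric sums at the accuracy radius $\Theta(N^{\rho^{\rm{ul}}/\alpha})$ — exactly the role played by the sets $\mathcal{L}_k$ and Lemma 2 in the paper. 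The approach is correct and the steps you flag as hardest (the regime-split aggregation and the well-conditioning of the Gram matrix) are precisely where the paper expends its effort.
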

Theorem 3 is also illustrated in Figs. 2 and 3 according to $\rho^{\rm{ul}}$ and $\eta_{\rm{bs}}$, respectively. Although the SNR exponent is identical to that in MRT operation, the interference is reduced as the CSI accuracy improves so that the gap between $\mathsf{snr^{zf}}$ and $\mathsf{sir^{zf}}$, $\Delta^{\mathsf{zf}}$, varies as $\rho^{\rm{ul}}$ increases. Again, from the definition of $\Delta^{\mathsf{zf}}$, the interference in MRT operation at a randomly selected user $k$ for $\eta_{\rm{user}}\ge \eta_{\rm{bs}}$ can be represented as
\begin{equation}
\begin{split}
I_k &\asymp  N^{\Delta^{\mathsf{zf}}}
\\
&=\left\{ {\begin{array}{*{20}{c}}
{}& \!\!\!\!\times &{{N^{\frac{\alpha }{2}{\eta _{{\rm{bs}}}}}}}& \!\!\!\!\times &{{N^{{\eta _{{\rm{user}}}} - {\eta _{{\rm{bs}}}}}}},&{\textsf{L,~M}},\\
{{N^{{\rho ^{{\rm{dl}}}}}}}& \!\!\!\!\times &{{N^{ - {\rho ^{{\rm{ul}}}}}}}& \!\!\!\!\times &{{N^{\frac{2}{\alpha }{\rho ^{{\rm{ul}}}} + {\eta _{{\rm{user}}}}}}},&\textsf{H},\\
{\underbrace {}_{{\scriptstyle\text{DL transmit}}\atop\scriptstyle{\text{power}}}}& \!\!\!\!\times &{\underbrace {{N^{ - {\rho ^{{\rm{ul}}}}}}}_{{\scriptstyle\text{received power of}\atop\scriptstyle\text{a dominant interferer}}}}& \!\!\!\!\times &{\underbrace {~~~{N^{{\eta _{{\rm{user}}}}}}~~~}_{{\scriptstyle\text{\# of dominant} \atop\scriptstyle\text{interferers}}}},&\textsf{EH}.
\end{array}} \right.
\end{split}
\end{equation}
The CSI of the users whose access distance is $o(N^{\rho^{\rm{ul}}/\alpha})$ is accurately estimated at the nearest BS, while the CSI of the users whose access distance is $\Omega(N^{\rho^{\rm{ul}}/\alpha})$ is poor. So, when $\rho^{\rm{ul}}<-\frac{\alpha}{2}\eta_{\rm{bs}}$, i.e., \textsf{L} or \textsf{M}, the CSIs of all users are poorly estimated at all BSs because the access distance is $\Omega(N^{-\frac{1}{2}\eta_{\rm{bs}}})$. Thus, the interference cancellation is not effective so that the interference in ZF operation is asymptotically the same to that in MRT operation. However, when $\rho^{\rm{ul}}\ge-\frac{\alpha}{2}\eta_{\rm{bs}}$, i.e., \textsf{H} or \textsf{EH},  accurate CSI is available so that the cancellation operation becomes effective. In \textsf{H}, the dominant interference comes from the BS apart by $\Theta(N^{ {\rho^{\rm{ul}}}/{\alpha}})$ so that the received power of the dominant interference is $\Theta(N^{-\rho^{\rm{ul}}})$. The dominant interferers are located in the doughnut of radii of $\Theta(N^{\rho^{\rm{ul}}/\alpha})$ and $\Theta(N^{\rho^{\rm{ul}}/\alpha+\epsilon})$ with an arbitrarily small $\epsilon$ so that the number of interferers is $\Theta(N^{\frac{2}{\alpha}\rho^{\rm{ul}}+\eta_{\rm{user}}})$. In \textsf{EH}, the number of dominant interferers becomes $\Theta(N^{\eta_{\rm{user}}})$, while the received power of them is still $\Theta(N^{-\rho^{\rm{ul}}})$ which can be explained as follows. Once a BS acquires sufficiently accurate CSI of a user, the interference caused from the user becomes proportional to the channel estimation error, i.e., inversely proportional to the UL transmit power. Thus, in \textsf{H} and \textsf{EH}, the slopes of the reduction in ZF operation are $(1-\frac{2}{\alpha})$ and $1$ with respect to $\rho^{\rm{ul}}$, respectively, as shown in Fig. 3. In \textsf{EH}, every BS can acquire all user's CSI with a sufficiently good accuracy so that the slope becomes $1$. In \textsf{H}, however, BSs far from each user (outside the circle with radius of $\Theta(N^{\rho^{\rm{ul}}/{\alpha}})$) cannot obtain its CSI accurately so that ZF operation does not reduce the interference from those BSs. Note that $\mathsf{sir^{zf}}$ does not improve as $\rho^{\rm{ul}}$ increases within $[-\frac{\alpha}{2}\eta_{\rm{bs}},-\frac{\alpha}{2}\eta_{\rm{user}}]$ in \textsf{H} or \textsf{EH} if $\eta_{\rm{user}}\ge \eta_{\rm{bs}}$ because no actual interference reduction happens during that interval due to the low user density.

\begin{remark} [Asymptotical optimality of ZF operation]
 In order for ZF operation to behave as IF operation asymptotically (i.e., $\Delta^{\mathsf{zf}}\le 0$), the DL transmit power should be limited as follows.

If $\eta_{\rm{bs}}\ge\eta_{\rm{user}}$,
\[{\rho ^{{\rm{dl}}}} \le \left\{ {\begin{array}{*{20}{l}}
{\rho^{\rm{ul}}-\eta_{\rm{user}},}&{{\text{if \textsf{EH}}},}\\
{- \frac{\alpha }{2}{\eta _{{\rm{user}}}} + \left( {1 - \frac{2}{\alpha }} \right){\left( {\frac{\alpha }{2}{\eta _{{\rm{user}}}} + {\rho ^{{\rm{ul}}}}} \right)^ + },}&{{\text{if \textsf{H}}},}\\
{ - \frac{\alpha }{2}{\eta _{{\rm{user}}}},}&{{\text{if \textsf{M} or \textsf{L}}},}
\end{array}} \right.\]

or if $\eta_{\rm{bs}} < \eta_{\rm{user}}$
\begin{equation*}
{\rho ^{{\rm{dl}}}} \le \left\{ {\begin{array}{*{20}{l}}
{{\rho ^{{\rm{ul}}}} - {\eta _{{\rm{user}}}},}&{{\text{if \textsf{EH}}},}\\
{- {\eta _{{\rm{user}}}}+\left( {1 - \frac{2}{\alpha }} \right){\rho ^{{\rm{ul}}}} ,}&{{\text{if \textsf{H}}},}\\
{ - \frac{\alpha} {2}{\eta _{{\rm{bs}}}} - ({\eta _{{\rm{user}}}}-\eta_{\rm{bs}}),}&{{\text{if \textsf{M} or \textsf{L}}},}
\end{array}} \right.~~~~~~~~~~~~~
\end{equation*}
which gives us the following insights.
\begin{itemize}
\item Unlike MRT operation, the IF condition of ZF operation depends on the UL transmit power ($\rho^{\rm{ul}}$) as well as $\eta_{\rm{bs}}$, $\eta_{\rm{user}}$, and $\rho^{\rm{dl}}$, but is still independent to the number of BS antennas ($\eta_{\rm{ant}}$). 
\item 
$\Delta^\mathsf{zf}=\Delta^\mathsf{mrt}$ in \textsf{M} and \textsf{L} while $\Delta^\mathsf{zf}\le\Delta^{\mathsf{mrt}}$ in \textsf{EH} or \textsf{H}, which implies that multi-cell cooperation using ZF operation is useless without a sufficient CSI accuracy. As the UL transmit power increases, ZF operation begins to further reduce the interference caused from other BSs' users as shown in Fig. 2.
\end{itemize}
\end{remark}

\begin{remark} [SIR behavior of ZF operation]
As can be seen again from Fig. 3, additional BSs (while fixing $N$) are always beneficial and although the slope of $\mathsf{sir^{zf}}$ vs. $\eta_{\rm{bs}}$ is identical to that of $\mathsf{sir^{mrt}}$ vs. $\eta_{\rm{bs}}$ in  \textsf{L} or \textsf{M}, it remains $\frac{\alpha}{2}-1$ in  \textsf{H} or \textsf{EH}, unlike MRT operation. $\blacksquare$
\end{remark}

%
%

\subsection{Degradation due to the Practical Limitations}

\begin{theorem} Suppose that operation $\Upsilon$ is used with a partial association and no pilot reuse ($\upsilon_{\rm{PR}} = \eta_{\rm{user}}$), in which each user is associated with $\Theta(N^{\upsilon_{\rm{PA}}})$ nearest BSs. Then, the gap $\Delta^\Upsilon$ in Theorems 2 and 3 changes to 
\begin{align} 
\left.\Delta^\Upsilon_{\rm{PA}} = \Delta^\Upsilon\right|_{\rho^{\rm{ul}}\leftarrow \varpi},
\end{align}
where 
\[\varpi  = \left\{ {\begin{array}{*{20}{l}}
{\min \left\{ {{\rho ^{{\rm{ul}}}},\frac{\alpha }{2}\left( {{\upsilon _{{\rm{PA}}}} - {\eta _{{\rm{bs}}}}} \right)} \right\},}&{{\text{if }}0 \le {\upsilon _{{\rm{PA}}}} < {\eta _{{\rm{bs}}}}},\\
{{\rho ^{{\rm{ul}}}},}&{{\text{if }}{\upsilon _{{\rm{PA}}}} = {\eta _{{\rm{bs}}}}}.
\end{array}} \right.\]
\end{theorem}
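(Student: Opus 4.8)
The plan is to show that partial association enters the analyses of Theorems 2 and 3 through exactly two geometric quantities, and that their joint effect is captured precisely by the effective uplink exponent $\varpi$. First I would pin down the geometry: since the $L=\Theta(N^{\eta_{\rm bs}})$ BSs are uniform on the fixed region $\mathcal{R}$, a standard concentration argument gives that the number of BSs within distance $r$ of a typical user is $\Theta(N^{\eta_{\rm bs}}r^2)$ with high probability, so demanding $N^{\rm PA}=\Theta(N^{\upsilon_{\rm PA}})$ nearest BSs per user fixes the effective association range at $R_{\rm th}=\Theta(N^{(\upsilon_{\rm PA}-\eta_{\rm bs})/2})$. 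Because $R_{\rm th}\ge N^{-\eta_{\rm bs}/2}$, the nearest BS is always retained, which is the fact that drives the rest of the argument.

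Second, I would establish that the SNR exponent is unchanged, explaining why only the gap $\Delta^\Upsilon$ is affected. For all three operations the desired power obeys $\mathsf{SNR}_k\asymp P_k^{\rm dl}M\sum_{l\in\mathcal{X}_k}\beta_{lk}\phi_{lkk}$, and since $\alpha>2$ the inner sum is dominated by its innermost term (the nearest BS at distance $N^{-\eta_{\rm bs}/2}$); replacing the outer cutoff from $\Theta(1)$ to $R_{\rm th}\ge N^{-\eta_{\rm bs}/2}$ leaves this dominant term untouched, so $\mathsf{snr}^\Upsilon$ keeps its full-association value in every regime \textsf{EH}--\textsf{L}. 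The same bookkeeping shows the MRT leakage from $j$ to $k$ scales as $P_j^{\rm dl}\beta_{l^\star(j),k}$, where $l^\star(j)$ is $j$'s nearest BS and the estimation factor $\phi_{ljj}$ cancels between the matched filter $\mathbf{f}_{lj}^{\mathsf{mrt}}=\hat{\mathbf{g}}_{lj}$ and its normalization $Q_j^{\mathsf{mrt}}$; this is a pure BS/user-geometry quantity with no $\rho^{\rm ul}$ dependence. Since every interferer keeps its nearest serving BS, the dominant-interferer population (the $\Theta(N^{(\eta_{\rm user}-\eta_{\rm bs})^+})$ users routed through the BS nearest $k$) is preserved, so $\Delta^{\mathsf{mrt}}$ is literally unchanged and the substitution $\rho^{\rm ul}\leftarrow\varpi$ is vacuous for MRT, as required.

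The crux is the ZF residual, where the substitution is genuine. I would write the true $j\!\to\!k$ leakage as $\sum_{l\in\mathcal{X}_j}\mathbf{g}_{lk}^H\mathbf{f}_{lj}^{\mathsf{zf}}$ and split it over the shared serving set $\mathcal{X}_j\cap\mathcal{X}_k$ and its complement $\mathcal{X}_j\setminus\mathcal{X}_k$. On the shared BSs the nulling constraint $\sum_l\hat{\mathbf{g}}_{lk}^H\mathbf{f}_{lj}^{\mathsf{zf}}=0$ annihilates the estimated part and leaves only the CSI-error residual $\sum_{l\in\mathcal{X}_j\cap\mathcal{X}_k}\tilde{\mathbf{g}}_{lk}^H\mathbf{f}_{lj}^{\mathsf{zf}}$, whose suppression factor $(1-\phi_{lkk})$ switches on exactly at BSs within the CSI-accurate radius $N^{\rho^{\rm ul}/\alpha}$ of $k$, reproducing the Theorem 3 mechanism. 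On the complement the leakage is \emph{un-nulled} (those BSs store no estimate of $k$); this term has no full-association analogue and is the dominant, unsuppressed contribution precisely for interferers whose association disk barely overlaps that of $k$, i.e.\ at range $\gtrsim R_{\rm th}$. Combining the two, the effective radius around $k$ inside which interference is actually cancelled is $\min\{N^{\rho^{\rm ul}/\alpha},R_{\rm th}\}=N^{\varpi/\alpha}$, so the dominant-interferer doughnut of Theorem 3 simply relocates to this capped radius; carrying it through the interferer-count/interferer-power bookkeeping yields $\Delta^{\mathsf{zf}}_{\rm PA}=\Delta^{\mathsf{zf}}|_{\rho^{\rm ul}\leftarrow\varpi}$.

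I expect the main obstacle to be controlling the new un-nulled term and proving it never dominates beyond what $\varpi$ predicts: one must show that for interferers at range just outside $N^{\varpi/\alpha}$ the pathloss balances the interferer count $\Theta(N^{\frac{2}{\alpha}\varpi+\eta_{\rm user}})$ to give precisely the capped exponent, while interferers deeper inside stay suppressed by the shared-BS nulling. A secondary subtlety is the boundary $\upsilon_{\rm PA}=\eta_{\rm bs}$: there $R_{\rm th}=\Theta(1)$ spans all of $\mathcal{R}$, every pair of association disks overlaps, the un-nulled term vanishes identically, and the naive cap $\frac{\alpha}{2}(\upsilon_{\rm PA}-\eta_{\rm bs})=0$ must be discarded, which is exactly why $\varpi=\rho^{\rm ul}$ is listed as a separate case rather than as $\min\{\rho^{\rm ul},0\}$.
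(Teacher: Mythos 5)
Your proposal is correct and rests on the same core observation as the paper's (very short) proof: a non-associated BS holds no CSI of user $j$, which is equivalent to setting $P_j^{\rm ul}=0$ at those far BSs, so the effective CSI-accurate radius is capped at $\min\{N^{\rho^{\rm ul}/\alpha},R_{\rm th}\}=N^{\varpi/\alpha}$ and the gap formulas of Theorems 2 and 3 inherit the substitution $\rho^{\rm ul}\leftarrow\varpi$. The paper states only this one-line equivalence and cites the earlier theorems, whereas you re-derive the dominant-interferer bookkeeping explicitly; both routes land in the same place, and your remarks that the substitution is vacuous for MRT and that the case $\upsilon_{\rm PA}=\eta_{\rm bs}$ must be listed separately (to avoid the spurious cap $\min\{\rho^{\rm ul},0\}$) are consistent with the paper's statement and Remark 5.
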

\begin{IEEEproof} When a BS associates with only a part of users, the CPU does not obtain the CSIs of the non-associated users from the BS, whose effect is identical to the case where the UL transmit power of user $j$ becomes $P^{\rm{ul}}_j = 0$ at far BSs $X_l\in\mathcal{X}\backslash\mathcal{X}_j$, which can directly prove this theorem.
\end{IEEEproof}
\begin{remark} [effect of the limited front/backhaul capacity]
Theorem 4 informs the relation between performance degradation caused from the erroneous CSI due to the limited network total power and that caused from the partial association due to the limited front/backhaul capacity. Since $\Delta^{\mathsf{mrt}}$ is independent to $\rho^{\rm{ul}}$, any partial association does not affect MRT operation, which implies that MRT operation is asymptotically same to a single-cell operation (no cooperation). 

However, since $\Delta^\mathsf{zf}$ does depend on $\rho^{\rm{ul}}$, the limited front/backhaul capacity degrades the asymptotic performance of ZF operation. In  $\textsf{M}$ or $\textsf{L}$, $\rho^{\rm{ul}}\le \frac{\alpha}{2}(\upsilon_{\rm{PA}}-\eta_{\rm{bs}})$ so that any association range with a positive exponent is wasteful, i.e., cooperation must be confined among finite number of nearby BSs. However, in  $\textsf{H}$ or $\textsf{EH}$, larger association range with exponent up to $\frac{1}{\alpha}\rho^{\rm{ul}}+\frac{1}{2}\eta_{\rm{bs}}$ improves the asymptotic performance, which requires the total front/backhaul capacity up to $\Theta(N^{\eta_{\rm{user}}+\eta_{\rm{bs}}+\eta_{\rm{ant}}+\frac{2}{\alpha}\rho^{\rm{ul}}})$ complex values.  $\blacksquare$
\end{remark}

\begin{theorem} Suppose that $ \Theta(N^{\upsilon_{\rm{PR}}})$ orthonormal pilot sequences are available  in a fully associated network. Then, the scaling exponents are as in Theorems 2 and 3 by replacing $\Xi$ and $\Delta^{\Upsilon}$ with $\Xi_{\rm{PR}}$ and $\Delta^{\Upsilon}_{\rm{PR}}$, respectively, where
\begin{align}
\Xi_{\rm{PR}}=&{{\left( {{\rho ^{{\rm{ul}}}} + \frac{\alpha }{2}{\eta _{{\rm{bs}}}} + {\eta _{{\rm{ant}}}}} \right)}^ + }- {{\left( {{\rho ^{{\rm{ul}}}} + \frac{\alpha }{2}{\eta _{{\rm{bs}}}} + {{\left( {{\eta _{{\rm{user}}}} - \upsilon_{\rm{PR}}  - {\eta _{{\rm{bs}}}}} \right)}^ + }} \right)}^ + },\\
\Delta _{{\rm{PR}}}^\Upsilon  =& \left\{ {\begin{array}{*{20}{c}}
{\max \{ {\Delta ^\Upsilon },{\Delta _{{\rm{PR}}}}\} ,}&{\upsilon_{\rm{PR}}  \le {\eta _{{\rm{user}}}},}\\
{{\Delta ^\Upsilon },}&{\upsilon_{\rm{PR}}  = {\eta _{{\rm{user}}}},}
\end{array}} \right.
\end{align} 
with 
\begin{equation}
\begin{split}
\Delta_{\rm{PR}} = & {\rho ^{{\rm{dl}}}} + \frac{\alpha }{2}\min \{ {\eta _{{\rm{bs}}}},{\eta _{{\rm{user}}}} - \upsilon_{\rm{PR}} \}  + {({\eta _{{\rm{user}}}} - \upsilon_{\rm{PR}}  - {\eta _{{\rm{bs}}}})^ + } + {\Xi _{{\rm{PR}}}}.
\end{split}
\end{equation}

\begin{table*}\fontsize{7pt}{7pt}\selectfont
\center\caption{Operating Regimes and Conner Points of Regions in Theorem 6}
\begin{tabular}{ |c||c|c|c|c| }
\specialrule{.2em}{.1em}{.1em} 
{\multirow{2}{*}{Region}}&{{{Interference Free}}}&{{{Maximum Ratio Transmission}}}&{{{Zero Forcing}}}&{Operating}\\
{}&{$\Upsilon=\mathsf{if}$}&{$\Upsilon=\mathsf{mrt}$}&{$\Upsilon=\mathsf{zf}$}&{Regime}\\
\specialrule{.2em}{.1em}{.1em} 
\multirow{2}{*}{$\mathcal{A}^\Upsilon$}&\multicolumn{3}{c|}{$ u^{\Upsilon}(\rho,\tau)=\rho-\tau+\frac{\alpha}{2}\eta_{\rm{bs}}$}&\multirow{1}{*}{$\begin{array}{c}\text{SNR limited}, \\ \textsf{L}\end{array}$}\\ 
\cline{2-4}
{}&{$\tau<-\eta_{\rm{ant}}$}&\multicolumn{2}{c|}{$\tau<-\eta_{\rm{ant}},~\rho<(1-\frac{\alpha}{2})\eta_{\rm{bs}}$}&{}\\ 
\specialrule{.2em}{.1em}{.1em} 
\multirow{2}{*}{$\mathcal{B}^\Upsilon$}&\multicolumn{3}{c|}{$ u^{\Upsilon}(\rho,\tau)=\rho-\frac{1}{2}\tau+\frac{\alpha}{2}\eta_{\rm{bs}}+\frac{1}{2}\eta_{\rm{ant}}$}&\multirow{1}{*}{$\begin{array}{c}\text{SNR limited}, \\ \textsf{M}\end{array}$}\\ 
\cline{2-4}
{}&{$-\eta_{\rm{ant}}\le\tau<\eta_{\rm{ant}}$}&\multicolumn{2}{c|}{$-\eta_{\rm{ant}}\le\tau<\eta_{\rm{ant}}, \rho<(1-\frac{\alpha}{2})\eta_{\rm{bs}}$}&{}\\
\specialrule{.2em}{.1em}{.1em} 
\multirow{2}{*}{$\mathcal{C}^\Upsilon$}&\multicolumn{3}{c|}{$ u^{\Upsilon}(\rho,\tau)=\rho-\tau+\frac{\alpha}{2}\eta_{\rm{bs}}+\eta_{\rm{ant}}$}&\multirow{1}{*}{$\begin{array}{c}\text{SNR limited}, \\ \textsf{H}\text{ or }\textsf{EH}\end{array}$}\\
\cline{2-4}
{}&{$\tau\ge\eta_{\rm{ant}}$}&\multicolumn{2}{c|}{$\tau  \ge {\eta _{{\rm{ant}}}}, \frac{\alpha }{{2 - \alpha }}\rho  + \tau  \ge \frac{\alpha }{2}{\eta _{{\rm{bs}}}} + {\eta _{{\rm{ant}}}}   $}&{}\\
\specialrule{.2em}{.1em}{.1em} 
\multirow{2}{*}{$\mathcal{D}^\Upsilon$}&{\multirow{2}{*}{N.A.}}&\multicolumn{2}{c|}{$ u^{\Upsilon}(\rho,\tau)=\frac{1}{2}\left( \rho-\tau+(\frac{\alpha}{2}+1)\eta_{\rm{bs}}+\frac{1}{2}\eta_{\rm{ant}}\right)$}&\multirow{1}{*}{$\begin{array}{c}\text{SIR limited}, \\ \textsf{L}\end{array}$}\\ 
\cline{3-4}
{}&{}&\multicolumn{2}{c|}{$\rho  \ge \left( {1 - \frac{\alpha }{2}} \right){\eta _{{\rm{bs}}}},~\rho  + \tau  < \left( {1 - \frac{\alpha }{2}} \right){\eta _{{\rm{bs}}}} - {\eta _{{\rm{ant}}}}$}&{}\\
\specialrule{.2em}{.1em}{.1em} 
\multirow{2}{*}{$\mathcal{E}^\Upsilon$}&{\multirow{2}{*}{N.A.}}&\multicolumn{2}{c|}{$ u^{\Upsilon}(\rho,\tau)={\frac{1}{2}\left( {\rho  - \tau  + \left( {\frac{\alpha }{2} + 1} \right){\eta _{{\rm{bs}}}} + {\eta _{{\rm{ant}}}}} \right)}$}&\multirow{1}{*}{$\begin{array}{c}\text{SIR limited}, \\ \textsf{M}\end{array}$}\\ 
\cline{3-4}
{}&{}&\multicolumn{2}{c|}{$\rho  \ge \left( {1 - \frac{\alpha }{2}} \right){\eta _{{\rm{bs}}}},~ \left( {1 - \frac{\alpha }{2}} \right){\eta _{{\rm{bs}}}} - {\eta _{{\rm{ant}}}} \le \rho  + \tau  < \left( {1 - \frac{\alpha }{2}} \right){\eta _{{\rm{bs}}}} + {\eta _{{\rm{ant}}}}$}&{}\\
\specialrule{.2em}{.1em}{.1em} 
\multirow{2}{*}{$\mathcal{F}^\Upsilon$}&{\multirow{2}{*}{N.A.}}&{$ u^{\Upsilon}(\rho,\tau)=\eta_{\rm{bs}}+\eta_{\rm{ant}}-\tau$}&{$ u^{\Upsilon}(\rho,\tau)=\frac{\alpha}{2(\alpha-1)}\left((1-\frac{2}{\alpha})\rho-\tau+\frac{\alpha}{2}\eta_{\rm{bs}}+\eta_{\rm{ant}}\right)$}&\multirow{1}{*}{$\begin{array}{c}\text{SIR limited}, \\ \textsf{H}\end{array}$}\\ 
\cline{3-4}
&{}&{$\left( {1 - \frac{\alpha }{2}} \right){\eta _{{\rm{bs}}}} + {\eta _{{\rm{ant}}}} \le \rho  + \tau,~ \tau  < {\eta _{{\rm{ant}}}}$}&{$\tau  + \rho  < \frac{\alpha }{2}{\eta _{{\rm{bs}}}} + {\eta _{{\rm{ant}}}},~ \tau  + \rho  \ge \left( {1 - \frac{\alpha }{2}} \right){\eta _{{\rm{bs}}}} + {\eta _{{\rm{ant}}}}$}&{}\\
\specialrule{.2em}{.1em}{.1em} 
\multirow{2}{*}{$\mathcal{G}^\Upsilon$}&{\multirow{2}{*}{N.A.}}&{$ u^{\Upsilon}(\rho,\tau)={\frac{2}{\alpha }\left( { - \tau  + \frac{\alpha }{2}{\eta _{{\rm{bs}}}} + {\eta _{{\rm{ant}}}}} \right)}$}&{$ u^{\Upsilon}(\rho,\tau)={\frac{1}{2}\left( {\rho  - \tau  + \frac{\alpha }{2}{\eta _{{\rm{bs}}}} + {\eta _{{\rm{ant}}}}} \right)}$}&\multirow{1}{*}{$\begin{array}{c}\text{SIR limited}, \\ \textsf{EH}\end{array}$}
\\
\cline{3-4}
{}&{}&{$\tau  \ge {\eta _{{\rm{ant}}}},~ \frac{\alpha }{{2 - \alpha }}\rho  + \tau  < \frac{\alpha }{2}{\eta _{{\rm{bs}}}} + {\eta _{{\rm{ant}}}}$}&{$\tau  - \rho  < \frac{\alpha }{2}{\eta _{{\rm{bs}}}} + {\eta _{{\rm{ant}}}},~ \tau  + \rho  > \frac{\alpha }{2}{\eta _{{\rm{bs}}}} + {\eta _{{\rm{ant}}}}$}&{}
\\
\specialrule{.2em}{.1em}{.1em} 
\end{tabular}
\end{table*}

\end{theorem}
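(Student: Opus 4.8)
The plan is to re-run the channel-estimation and SINR computations of Appendices~A--C while retaining the pilot-contamination (PC) term in (\ref{estimated_CSI}), which vanished under the no-reuse assumption $\upsilon_{\rm{PR}}=\eta_{\rm{user}}$ used in Theorems~1--3. With $T=\Theta(N^{\upsilon_{\rm{PR}}})$ pilots shared among $K=\Theta(N^{\eta_{\rm{user}}})$ users, each pilot is reused by a thinned uniform population of $\Theta(N^{\eta_{\rm{user}}-\upsilon_{\rm{PR}}})$ \emph{co-pilot} users. The first task is to quantify how this raises the effective noise floor of the MMSE estimate: in the denominator of $\phi_{lkk}$ the co-pilot sum $\sum_{\pi_i=\pi_k}P_i^{\rm{ul}}\beta_{li}$, being a sum of $|X_l-U_i|^{-\alpha}$ with $\alpha>2$, is dominated by the nearest co-pilot user (the same 2D dominant-term argument as in Appendix~A) and therefore scales as $\Theta(N^{\rho^{\rm{ul}}+\frac{\alpha}{2}(\eta_{\rm{user}}-\upsilon_{\rm{PR}})})$. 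Hence the floor competing with the desired pilot energy rises from $1$ to $\Theta(N^{(\rho^{\rm{ul}}+\frac{\alpha}{2}(\eta_{\rm{user}}-\upsilon_{\rm{PR}}))^+})$. Re-evaluating the coherent-combining sum $\sum_l\beta_{lk}\phi_{lkk}$ with this raised floor and redoing the dominant-sum computation of Appendix~A turns the array gain $\Xi$ into $\Xi_{\rm{PR}}$; one checks that $\Xi_{\rm{PR}}=\Xi$ both when $\eta_{\rm{user}}-\upsilon_{\rm{PR}}\le\eta_{\rm{bs}}$ (co-pilot users sparser than BSs, so the floor stays at $1$ near the serving BS) and when $\upsilon_{\rm{PR}}=\eta_{\rm{user}}$, recovering Theorem~1 and fixing $\mathsf{snr}^\Upsilon_{\rm{PR}}=\rho^{\rm{dl}}+\frac{\alpha}{2}\eta_{\rm{bs}}+\Xi_{\rm{PR}}$.

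The second task is the interference power $\sum_{U_j\neq U_k}Q_j^\Upsilon|\psi_{kj}^\Upsilon|^2$, whose exponent is exactly the gap $\Delta$ in $\mathsf{sir}=\mathsf{snr}-\Delta$. The crucial observation is that it splits into two asymptotically separable parts: (i) the contribution of users assigned a \emph{different} pilot, which is incoherent with $\widehat{\mathbf h}_{lk}$ and reproduces the $\Delta^\Upsilon$ of Theorems~2 and 3 verbatim; and (ii) the contribution of the $\Theta(N^{\eta_{\rm{user}}-\upsilon_{\rm{PR}}})$ co-pilot users. For the latter, the MMSE structure forces $\widehat{\mathbf h}_{lj}=(\vartheta_{lj}/\vartheta_{lk})\,\widehat{\mathbf h}_{lk}$ at each BS, since both are projections of the \emph{same} received pilot in (\ref{UT_received_signal}); the co-pilot precoders are therefore aligned with user $k$'s estimate and their signals reach user $k$ with the \emph{same} coherent array gain $\Xi_{\rm{PR}}$ as the desired signal. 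Repeating the dominant-interferer counting of Appendices~B and C for this sub-population --- replacing the user-density exponent $\eta_{\rm{user}}$ by the co-pilot exponent $\eta_{\rm{user}}-\upsilon_{\rm{PR}}$ and appending the inherited gain $\Xi_{\rm{PR}}$ --- yields precisely $\Delta_{\rm{PR}}$. Since the two interference powers add, their combined exponent is $\max\{\Delta^\Upsilon,\Delta_{\rm{PR}}\}$, which I identify as $\Delta^\Upsilon_{\rm{PR}}$; the stated exponents then follow from $\mathsf{sir}=\mathsf{snr}-\Delta$ and $\mathsf{sinr}=\min\{\mathsf{snr},\mathsf{sir}\}$.

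A point worth isolating is that the \emph{same} $\Delta_{\rm{PR}}$ serves both operations. Because the co-pilot estimates are locally parallel, the corresponding columns of $\widehat{\mathbf G}$ are indistinguishable at each BS, so the ZF precoder $\widehat{\mathbf G}^H(\widehat{\mathbf G}\widehat{\mathbf G}^H)^{-1}$ --- which nulls only the estimated directions --- cannot separate user $k$ from its co-pilot users and leaves the PC interference intact. This is why $\Delta_{\rm{PR}}$ enters the maximum identically for $\Upsilon=\mathsf{mrt}$ and $\Upsilon=\mathsf{zf}$, and why pilot reuse caps the cooperation gain once $\Delta_{\rm{PR}}$ overtakes $\Delta^\Upsilon$.

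The main obstacle, I expect, is establishing the scaling-sense separability of the two interference powers and ruling out a dominant cross term. This requires controlling the inner products $\widehat{\mathbf h}_{lk}^H\widehat{\mathbf h}_{lj}$, which for co-pilot $j$ concentrate at the coherent order $M=\Theta(N^{\eta_{\rm{ant}}})$ rather than the incoherent order $\Theta(\sqrt M)$, and showing that the $M$-fold combining together with the summation over dominant co-pilot interferers contributes exactly $\Delta_{\rm{PR}}$ and nothing larger. A secondary, bookkeeping-heavy task is to verify the boundary cases so that the nested $(\cdot)^+$ of $\Xi_{\rm{PR}}$ and the $\min\{\eta_{\rm{bs}},\eta_{\rm{user}}-\upsilon_{\rm{PR}}\}$ of $\Delta_{\rm{PR}}$ emerge with the correct active branches across the four regimes \textsf{L}/\textsf{M}/\textsf{H}/\textsf{EH}, and that everything collapses to Theorems~2 and 3 in the limit $\upsilon_{\rm{PR}}=\eta_{\rm{user}}$.
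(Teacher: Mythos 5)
Your treatment of the interference gap $\Delta_{\rm{PR}}^\Upsilon$ follows the paper's own route: the paper likewise splits the interference into the co-pilot part $I_k^{\rm{PR}}$ and the rest $I_k^{\rm{noPR}}$, keeps $\Delta^\Upsilon$ for the latter, rebuilds the former by substituting the co-pilot density exponent $\eta_{\rm{user}}-\upsilon_{\rm{PR}}$ for $\eta_{\rm{user}}$ and appending the inherited array gain $\Xi_{\rm{PR}}$, and takes the maximum of the two exponents. Your observation that $\widehat{\mathbf h}_{lj}=(\vartheta_{lj}/\vartheta_{lk})\widehat{\mathbf h}_{lk}$ for co-pilot users, so that ZF cannot null the contaminated directions and the same $\Delta_{\rm{PR}}$ applies to both operations, is a genuine sharpening of the paper, which merely asserts that $I_k^{\rm{PR}}$ is unchanged under ZF.

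The gap is in the array-gain step, and it is not cosmetic. The paper obtains $\Xi_{\rm{PR}}$ from a power-leakage count: the coherent gain of the precoder is shared, in the same order, among the $\Theta(N^{(\eta_{\rm{user}}-\upsilon_{\rm{PR}}-\eta_{\rm{bs}})^+})$ co-pilot users inside the typical cell radius $\Theta(N^{-\eta_{\rm{bs}}/2})$, which is why the penalty entering the second $(\cdot)^+$ is $(\eta_{\rm{user}}-\upsilon_{\rm{PR}}-\eta_{\rm{bs}})^+$ with unit coefficient. You instead raise the MMSE denominator to the level of the \emph{nearest} co-pilot interferer, $\Theta(N^{\rho^{\rm{ul}}+\frac{\alpha}{2}(\eta_{\rm{user}}-\upsilon_{\rm{PR}})})$; carried through the computation of Appendix A, that floor yields an array-gain reduction of $\frac{\alpha}{2}(\eta_{\rm{user}}-\upsilon_{\rm{PR}}-\eta_{\rm{bs}})^+$, not $(\eta_{\rm{user}}-\upsilon_{\rm{PR}}-\eta_{\rm{bs}})^+$, and since $\alpha>2$ these differ whenever $\eta_{\rm{user}}-\upsilon_{\rm{PR}}>\eta_{\rm{bs}}$ --- precisely the only regime in which the theorem says anything new. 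For instance, with $\alpha=4$, $\eta_{\rm{bs}}=0.2$, $\eta_{\rm{ant}}=0.8$, $\eta_{\rm{user}}=0.6$, $\upsilon_{\rm{PR}}=0$, $\rho^{\rm{ul}}=0$, the stated formula gives $\Xi_{\rm{PR}}=1.2-(0.4+0.4)^+=0.4$, while your floor gives $\bigl(0.8-(1.2-0.4)^+\bigr)^+=0$. Your sanity checks ($\eta_{\rm{user}}-\upsilon_{\rm{PR}}\le\eta_{\rm{bs}}$ and $\upsilon_{\rm{PR}}=\eta_{\rm{user}}$) only cover the cases where $\Xi_{\rm{PR}}=\Xi$ trivially, so the discrepancy goes unnoticed; and because $\Delta_{\rm{PR}}$ contains $\Xi_{\rm{PR}}$, it propagates into the SIR exponent as well. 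You must either reconcile your nearest-interferer floor with the paper's equal-leakage count (they are genuinely different accountings of the same MMSE coefficients $\phi_{ljk}\propto\sqrt{\beta_{lj}}$) or accept that your argument proves a different $\Xi_{\rm{PR}}$ than the one stated.
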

\begin{proof}
First, consider how the array gain $\Xi$ changes according to the power leakage due to the pilot reuse.
When $\upsilon_{\rm{PR}} > \eta_{\rm{user}}-\eta_{\rm{bs}}$, no pilot reuse among nearby users (within a typical nearby BS range) so that the transmit power leakage due to the pilot contamination effect is negligible, i.e., $\Xi_{\rm{PR}} = \Xi$. It means that a network requires $\Omega(N^{\eta_{\rm{user}}-\eta_{\rm{bs}}})$ orthonormal pilot sequences to achieve the maximum SNR scaling exponent. However, when $\upsilon_{\rm{PR}} \le \eta_{\rm{user}}-\eta_{\rm{bs}}$, $\Xi_{\rm{PR}}$ is upper-bounded by $\eta_{\rm{ant}}+\upsilon_{\rm{PR}}+\eta_{\rm{bs}}-\eta_{\rm{user}}$ due to the non-negligible power leakage (in the same order) to the $\Theta(N^{\eta_{\rm{user}}-\eta_{\rm{bs}}-\upsilon_{\rm{PR}}})$ users using the common pilot.

Now, consider how the gap $\Delta^\Upsilon$ changes. Let $I_k^{\rm{PR}}$ and $I_k^{\rm{noPR}}$ be the interference at a randomly selected user $k$ caused by the users using the common pilot and caused by the other users, respectively. Then, as similarly discussed in (20), 
\begin{equation}
\begin{split}
\!\!I_k^{\rm{noPR}}&\asymp N^{\Delta^{\mathsf{mrt}}} \\
&=\underbrace{N^{\rho^{\rm{dl}}}}_{\scriptstyle\text{DL transmit}\atop\scriptstyle\text{power}}\times\underbrace{N^{\frac{\alpha}{2}\min\{\eta_{\rm{bs}},\eta_{\rm{user}}\}}}_{\scriptstyle \text{received power of} \atop
\scriptstyle \text{a dominant interferer} }\times\underbrace{N^{(\eta_{\rm{user}}-\eta_{\rm{bs}})^+}}_{\scriptstyle\text{\# of dominant} \atop\scriptstyle \text{interferers}}\times \underbrace{N^0}_{\text{array gain}}.
\end{split}
\end{equation}
Now, consider the $\Theta(N^{\eta_{\rm{user}}-\upsilon_{\rm{PR}}})$ interferers using the same pilot. Since they share the same array gain to the desired user, similarly as in (31), we obtain
\begin{equation}
\begin{split}
I_k^{\rm{PR}}\asymp & N^{\Delta_{\rm{PR}}}\\ 
=& \underbrace{N^{\rho^{\rm{dl}}}}_{\scriptstyle\text{DL transmit}\atop\scriptstyle\text{power}}\times\underbrace{N^{\frac{\alpha}{2}\min\{\eta_{\rm{bs}},\eta_{\rm{user}}-\upsilon_{\rm{PR}}\}}}_{\scriptstyle \text{received power of}\atop
\scriptstyle \text{a dominant interferer} } \times \underbrace{N^{(\eta_{\rm{user}}-\upsilon_{\rm{PR}}-\eta_{\rm{bs}})^+}}_{\scriptstyle\text{\# of dominant} \atop\scriptstyle \text{interferers}}\times \underbrace{N^{\Xi_{\rm{PR}}}}_{\text{array gain}}.
\end{split}
\end{equation}
Then, the final step is to find the  maximum between $\Delta^{\mathsf{mrt}}$ and $\Delta_{\rm{PR}}$.
Thus, $\Delta^{\mathsf{mrt}}_{\rm{PR}}$ can be represented as 
\begin{equation}
\Delta _{{\rm{PR}}}^{{\rm{mrt}}} = {\rho ^{{\rm{dl}}}} + \frac{\alpha }{2}{\eta _{{\rm{bs}}}} + \max \left\{ {{\eta _{{\rm{user}}}} - {\eta _{{\rm{bs}}}},\Xi_{\rm{PR}} } \right\}.
\end{equation}
For ZF operation, $I_k^{\rm{noPR}} \asymp N^{\Delta^{\mathsf{zf}}}$ and $I_k^{\rm{PR}}$ is same as in (32), which completes the proof.
\end{proof}
\begin{remark} [Effect of the limited pilot resource]
Theorem 5 informs the performance degradation due to the limited pilot resource. Although the limited pilot or limited backhaul capacity affects the interference (i.e., SIR), the pilot contamination due to the limited pilot resource affects both on the SNR and SIR, where the SNR loss (array gain reduction) is caused by the transmit power leakage to the contaminated users while the SIR loss comes from the fact that no array gain is available over the contaminated users (i.e., the same array gain is available over the background noise). $\blacksquare$
\end{remark}

%


\section{On the Number of Supportable Users}
For a practical network, it would be of the most interest how the target QoS (in terms of the SINR) and the number of users satisfying the target QoS can simultaneously grow as the network size increases. Also, in the future cellular system called the 5th generation, among the most important key performance indicators are the guaranteed (edge-user) throughput and the corresponding connectivity by which the tradeoff between them is of the most interest.

\begin{definition} The scaling exponent of the number of supportable users for operation $\Upsilon$, $\zeta_{\rm{user}}^{\Upsilon}$, is defined as the growth rate of the maximum number of users while guaranteeing the QoS, which is given by
\[\zeta _{{\rm{user}}}^{\Upsilon } = \sup \{ \eta _{{\rm{user}}} |\mathsf{sinr}^\Upsilon \ge \tau,~0\le \eta_{\rm{user}}\le 1 \}, \]
where $\tau$ is the pre-determined QoS requirement on the SINR exponent for the network.
\end{definition}

The case $\tau=0$ means that there are a lot of users and each user in a network requires a fixed data rate as the network size increases. This case can be considered as an Internet-of-Things (IoT) scenario in which there are a lot of devices requiring small data volumes. However, the case $\tau>0$ means that there are a lot of users and all users in a network require much increased data rate as the network size increases, which can reflect future applications requiring high data volumes such as ultra-high definite (UHD) video streaming. Obviously, higher QoS requirement (i.e., higher $\tau$) results in lower number of supportable users (i.e., lower $\zeta_{\rm{user}}^{\Upsilon}$) and vice versa. However, the tradeoff between them as the network size increases is non-trivial and of interest, especially with the three practical constraints. The following theorem gives the answer on this question under the total transmit power constraint.

\begin{theorem}
Suppose that a fully associated network ($\upsilon_{\rm{PA}} = \eta_{\rm{bs}}$) without pilot reuse ($\upsilon_{\rm{PR}} = \eta_{\rm{user}}$) is constrained by the network total transmit power as $P_\Sigma^{\rm{dl}} + P_\Sigma^{\rm{ul}} = \Theta(N^{\rho})$. Then, the scaling exponent of the number of supportable users is  
\begin{equation}\zeta_{\rm{user}}^{\Upsilon}= \max\{\min\{u^\Upsilon(\rho,\tau),1\},0\},
\end{equation} 
where $u^{\Upsilon}$ is given  in Table II.
\end{theorem}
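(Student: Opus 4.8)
The plan is to recast the definition of $\zeta_{\rm{user}}^{\Upsilon}$ as an explicit, piecewise-linear maximization over the admissible power split and solve it regime by regime. First I would move the total-power budget into the exponent domain. Since $P_\Sigma^{\rm{dl}}+P_\Sigma^{\rm{ul}}=\Theta(N^{\eta_{\rm{user}}+\max\{\rho^{\rm{ul}},\rho^{\rm{dl}}\}})$, the constraint $P_\Sigma^{\rm{dl}}+P_\Sigma^{\rm{ul}}=\Theta(N^{\rho})$ is equivalent to $\max\{\rho^{\rm{ul}},\rho^{\rm{dl}}\}=\rho-\eta_{\rm{user}}$. Writing $p:=\rho-\eta_{\rm{user}}$ for the peak per-user power exponent, the problem becomes: maximize $\eta_{\rm{user}}$ (equivalently minimize $p$) over $\rho^{\rm{ul}}\le p$ and $\rho^{\rm{dl}}\le p$ subject to the QoS requirement which, because $\mathsf{sinr}^{\Upsilon}=\min\{\mathsf{snr}^{\Upsilon},\mathsf{sir}^{\Upsilon}\}$, splits into the two inequalities $\mathsf{snr}^{\Upsilon}\ge\tau$ and $\mathsf{sir}^{\Upsilon}\ge\tau$, with the exponents supplied by Theorems 1--3.

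The key structural observation I would exploit is how each power knob acts on the two constraints. The SNR exponent $\mathsf{snr}^{\Upsilon}=\rho^{\rm{dl}}+\frac{\alpha}{2}\eta_{\rm{bs}}+\Xi(\rho^{\rm{ul}})$ is non-decreasing in both $\rho^{\rm{dl}}$ and $\rho^{\rm{ul}}$, whereas the gap $\Delta^{\Upsilon}$ (hence the SIR) is independent of $\rho^{\rm{dl}}$ and depends on $\rho^{\rm{ul}}$ only through the array-gain term $\Xi$ for MRT, and additionally through the cancellation contributions $(1-\frac{2}{\alpha})(\cdot)^+$ and $\frac{2}{\alpha}(\cdot)^+$ for ZF. Thus one never loses by raising $\rho^{\rm{dl}}$ to the ceiling $p$, and the remaining decision is how large to make $\rho^{\rm{ul}}$. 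Since $\Xi$ and the ZF cancellation terms are piecewise linear with breakpoints exactly at the boundaries of the four regimes \textsf{L}/\textsf{M}/\textsf{H}/\textsf{EH}, the whole program is piecewise linear and its optimum is attained at a vertex where either a QoS constraint meets a regime boundary or the allocation is balanced ($\rho^{\rm{ul}}=\rho^{\rm{dl}}=p$).

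I would then carry out the case analysis. For each operation I split according to (i) which QoS constraint is active --- giving the SNR-limited families $\mathcal{A},\mathcal{B},\mathcal{C}$, in which $\mathsf{sinr}^{\Upsilon}=\mathsf{snr}^{\Upsilon}=\mathsf{snr}^{\mathsf{if}}$ so that all three operations share the same formula (and IF, having $\mathsf{sir}^{\mathsf{if}}=\infty$, appears only here), and the SIR-limited families $\mathcal{D}$--$\mathcal{G}$ available only to MRT and ZF --- and (ii) which regime the optimal $\rho^{\rm{ul}}$ falls into, together with the sign of $\eta_{\rm{user}}-\eta_{\rm{bs}}$ that selects the branches of $\min\{\eta_{\rm{bs}},\eta_{\rm{user}}\}$ and $(\eta_{\rm{user}}-\eta_{\rm{bs}})^{+}$ in $\Delta^{\Upsilon}$. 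In each cell I substitute the relevant linear pieces of $\Xi$ and $\Delta^{\Upsilon}$, set the binding constraint to equality, and eliminate $p$ via $p=\rho-\eta_{\rm{user}}$, obtaining a single linear equation whose solution is the candidate $u^{\Upsilon}(\rho,\tau)$ of Table II; enforcing $0\le\eta_{\rm{user}}\le1$ then produces the outer $\max\{\min\{\cdot,1\},0\}$.

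The main obstacle will be the self-consistency bookkeeping rather than any individual computation: each candidate $u^{\Upsilon}$ is derived under the twin hypotheses that the optimizing $p=\rho-\eta_{\rm{user}}$ lies in the assumed regime and that the hypothesized inequality is the active one, and these must be checked to tile the $(\rho,\tau)$ plane without gaps or overlaps --- precisely the corner-point descriptions bounding $\mathcal{A}$--$\mathcal{G}$. The ZF branches $\mathcal{F},\mathcal{G}$ are the most delicate, since there raising $\rho^{\rm{ul}}$ simultaneously enlarges the array gain (helping SNR) and shrinks the residual interference (helping SIR) at the two different slopes $1-\frac{2}{\alpha}$ and $1$ identified in Theorem 3; trading these competing effects off against the DL allocation is what produces the non-trivial coefficient $\frac{\alpha}{2(\alpha-1)}$ appearing in $u^{\mathsf{zf}}$ on $\mathcal{F}$, and verifying that this balance point is feasible and dominates the competing vertices is the crux of the argument.
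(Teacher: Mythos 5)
Your proposal is correct and follows essentially the same route as the paper's Appendix E: cast the problem as maximizing $\eta_{\rm{user}}$ subject to $\mathsf{snr}^\Upsilon\ge\tau$, $\mathsf{sir}^\Upsilon\ge\tau$ and the power-budget constraints, use monotonicity of both exponents in the transmit powers to pin $\rho^{\rm{ul}}=\rho^{\rm{dl}}=\rho-\eta_{\rm{user}}$, then solve the resulting piecewise-linear system region by region and verify which constraint is active. The only cosmetic difference is that you frame the UL/DL allocation as a potential tradeoff, whereas the paper observes directly that both SNR and SIR are non-decreasing in both powers so the balanced ceiling allocation is immediately optimal; your region-by-region consistency check is exactly the bookkeeping the paper carries out (explicitly only for $\mathcal{D}^{\mathsf{zf}}$).
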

\begin{proof}
See Appendix E for the sketch of the proof.
\end{proof}

\begin{figure}
\centering
\subfigure[IF operation]{\includegraphics[width=8cm]{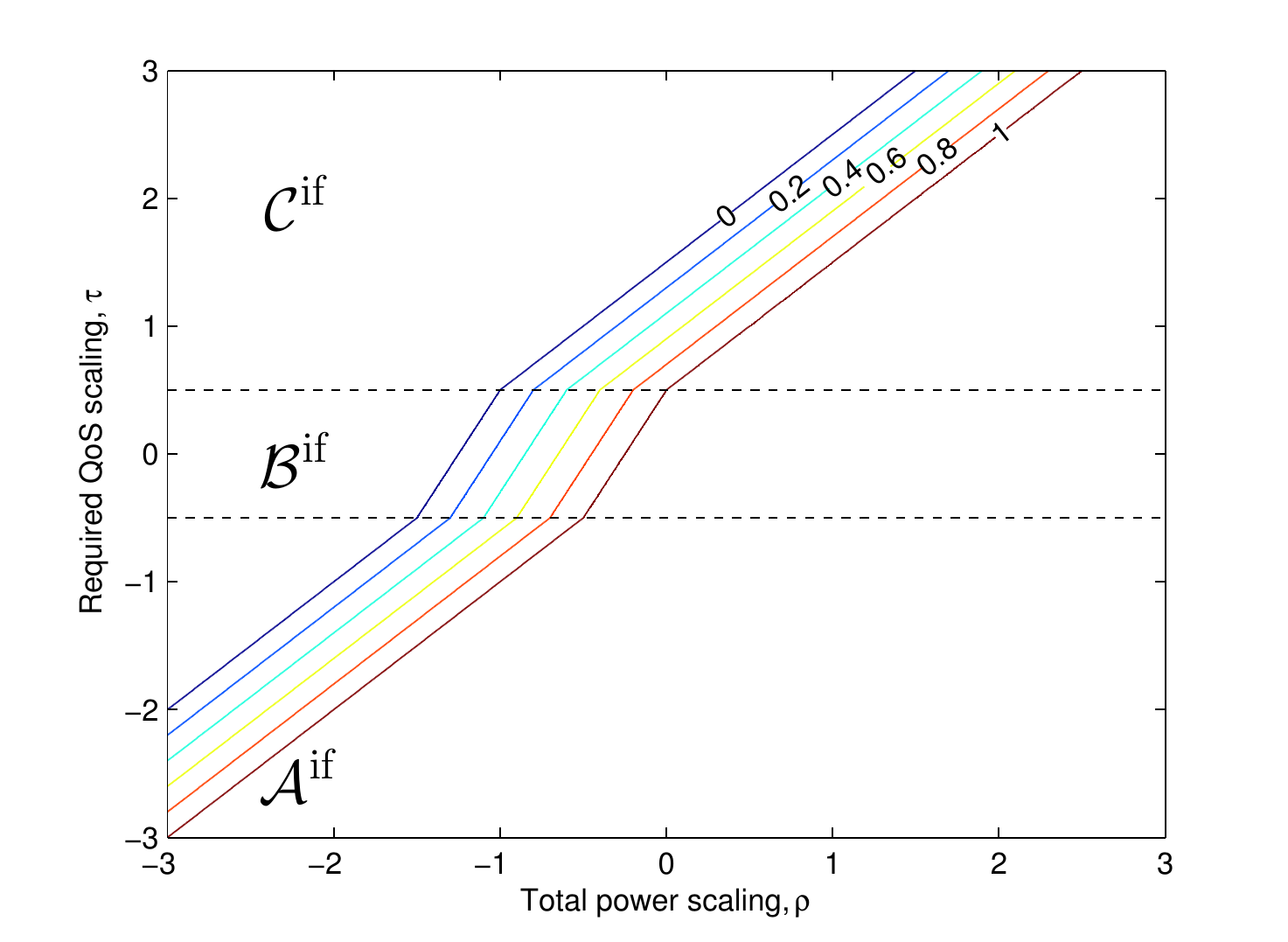}}
\subfigure[MRT operation]{\includegraphics[width=8cm]{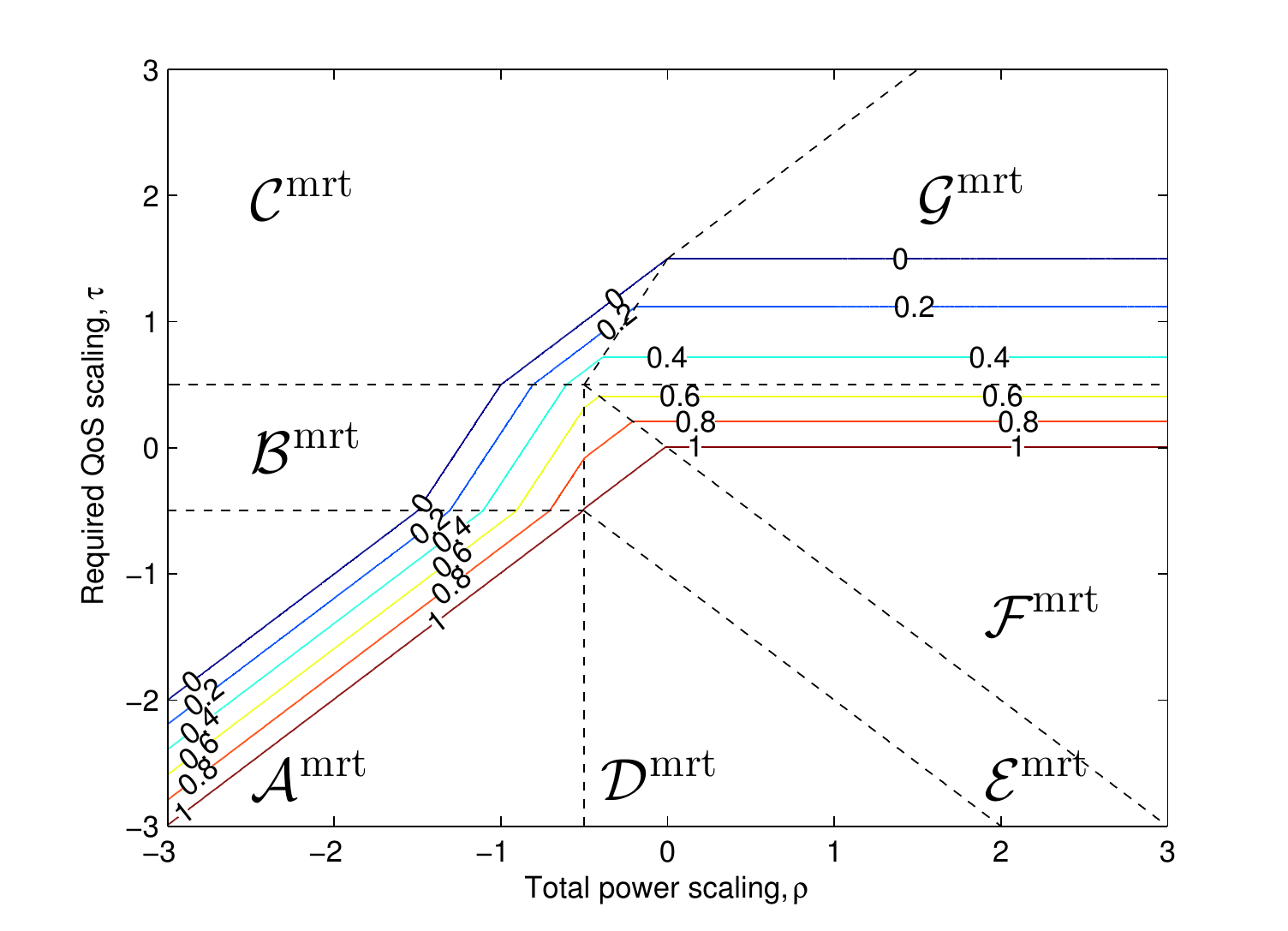}}
\subfigure[ZF operation]{\includegraphics[width=8cm]{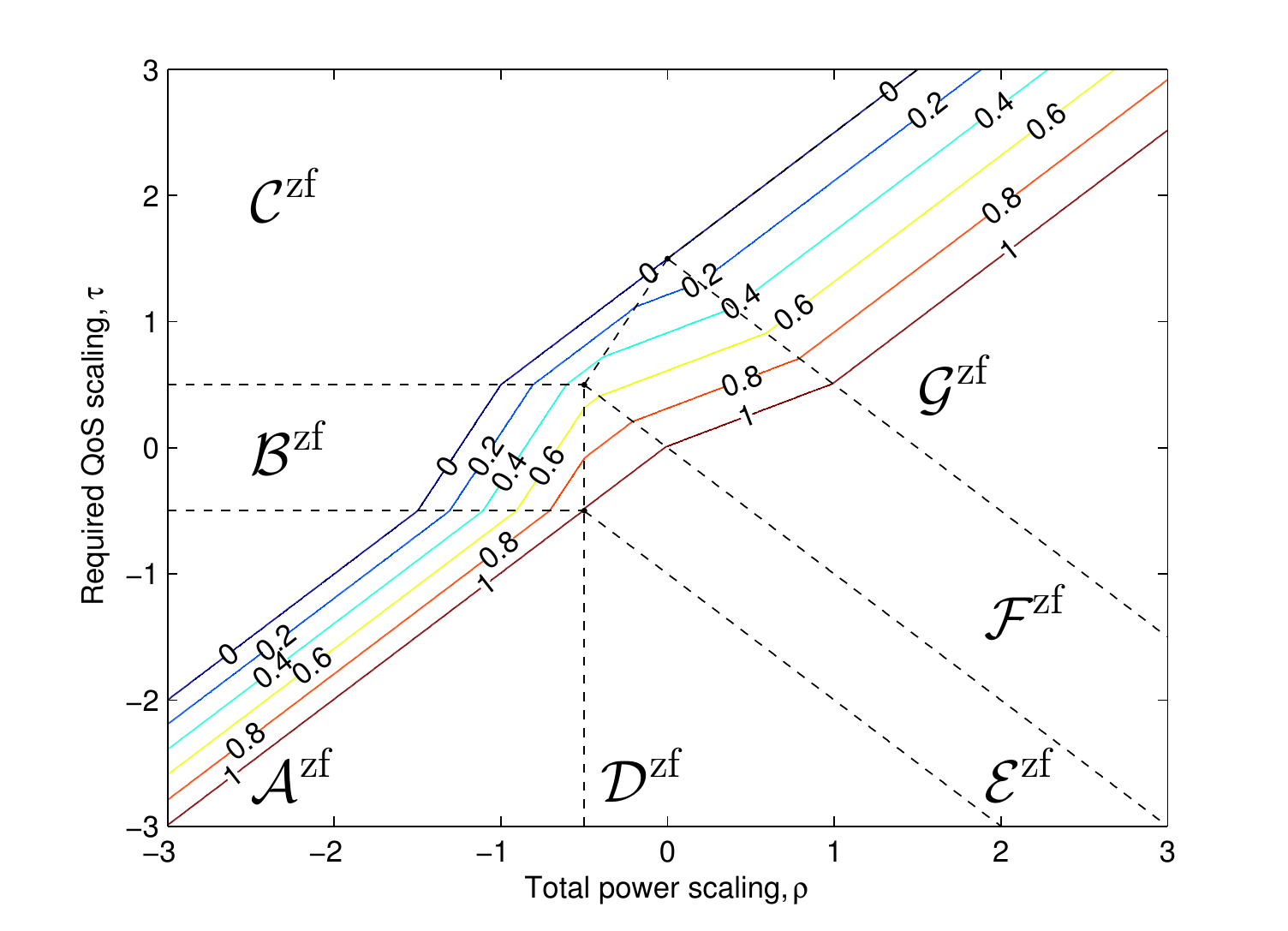} }
\caption{Regions and supportable user scaling exponents of IF, MRT and ZF operations when $\alpha =4$ and $\eta_{\rm{bs}}=\eta_{\rm{ant}}=0.5$.}
\end{figure}

The contour plots of $\zeta_{\rm{user}}^{\Upsilon}$ on $(\rho,\tau)$, $\Upsilon\in\{\mathsf{if, mrt, zf}\}$, are illustrated in Fig. 4, in which higher slope of the contour implies better power efficiency ($\tau$ over $\rho$).
Note that Theorem 6 can be straightforwardly extended to the case of a partially associated network ($\upsilon_{\rm{PA}}<\eta_{\rm{bs}}$), i.e., with the limited front/backhaul capacity constraint and/or with a pilot reuse ($\upsilon_{PR}<\eta_{\rm{user}}$), i.e., the limited pilot resource constraint. However, it is too complex to represent it as in Table II so that its numerical results will be shown in this paper.

\begin{remark}(Operating Regimes)
Regions $\mathcal{A}^{\Upsilon}$, $\mathcal{B}^{\Upsilon}$, and $\mathcal{C}^\Upsilon$ are SNR-limited operating regimes, (i.e., $\tau \le \mathsf{snr}^{\Upsilon}\le\mathsf{sir}^\Upsilon$) so that the same number of users can be supported regardless of operations. With respect to the UL transmit power, regions $\mathcal{A}^{\Upsilon}$, $\mathcal{B}^{\Upsilon}$, and $\mathcal{C}^{\Upsilon}$ are in \textsf{L}, \textsf{M}, and \textsf{H} or \textsf{EH}, respectively, and their slopes in the contour plot are $1$, $2$, and $1$, respectively. This is because $\mathsf{snr}^{\Upsilon}$ increases proportionally to $P^{\rm{dl}}_jP^{\rm{ul}}_j$ in \textsf{M}, while it increases proportionally to $P^{\rm{dl}}_j$ in the other cases.

Regions $\mathcal{D}^{\Upsilon}$, $\mathcal{E}^{\Upsilon}$, $\mathcal{F}^\Upsilon$, and $\mathcal{G}^\Upsilon$ are SIR-limited operating regimes, (i.e., $\tau \le \mathsf{sir}^{\Upsilon}\le\mathsf{snr}^\Upsilon$). Note that IF operation is always SNR-limited due to $\mathsf{sir^{if}} = \infty$ so that $\mathcal{D}^\mathsf{if}$, $\mathcal{E}^\mathsf{if}$, $\mathcal{F}^\mathsf{if}$, and $\mathcal{G}^\Upsilon$ are not defined. With respect to the UL transmit power, 
regions $\mathcal{D}^{\Upsilon}$, $\mathcal{E}^{\Upsilon}$, $\mathcal{F}^\Upsilon$, and $\mathcal{G}^\Upsilon$ are in $\textsf{L}$, $\textsf{M}$, $\textsf{H}$, and $\textsf{EH}$. Since a cooperative operation is meaningless in $\textsf{L}$ and $\textsf{M}$ (in which the CSI of a randomly selected user is inaccurate even in the nearest BS), $\zeta_{\rm{user}}^{\mathsf{mrt}}=\zeta_{\rm{user}}^{\mathsf{zf}}$ in $\mathcal{D}^{\mathsf{mrt}}=\mathcal{D}^{\mathsf{zf}}$ and $\mathcal{E}^{\mathsf{mrt}}=\mathcal{E}^{\mathsf{zf}}$. On the other hand, since ZF operation can cancel more interference in $\textsf{H}$ and $\textsf{EH}$ (in which the CSI of a randomly selected user becomes accurate in some BSs), $\zeta_{\rm{user}}^{\mathsf{zf}}\ge\zeta_{\rm{user}}^{\mathsf{mrt}}$ can be achieved. Interestingly, the slopes in the contour plot of MRT operation are $0$ both in $\mathcal{F}^{\mathsf{mrt}}$ and $\mathcal{G}^{\mathsf{mrt}}$, which implies that even if higher power is available in the network using MRT operation, the tradeoff between the QoS ($\tau$) and the number of users ($\zeta_{user}^{\mathsf{MRT}}$) is not improved. However, ZF operation shows positive slopes (specifically, $1-\frac{2}{\alpha}$ and $1$ in $\mathcal{F}^\mathsf{zf}$ and $\mathcal{G}^\mathsf{zf}$, respectively) so that higher power is always beneficial for improving the tradeoff. 
$\blacksquare$
\end{remark}

\begin{figure*}
\centering
\subfigure[Effect of partial association only, $\upsilon_{\rm{PA}}=0.2$.]{\includegraphics[width=8cm]{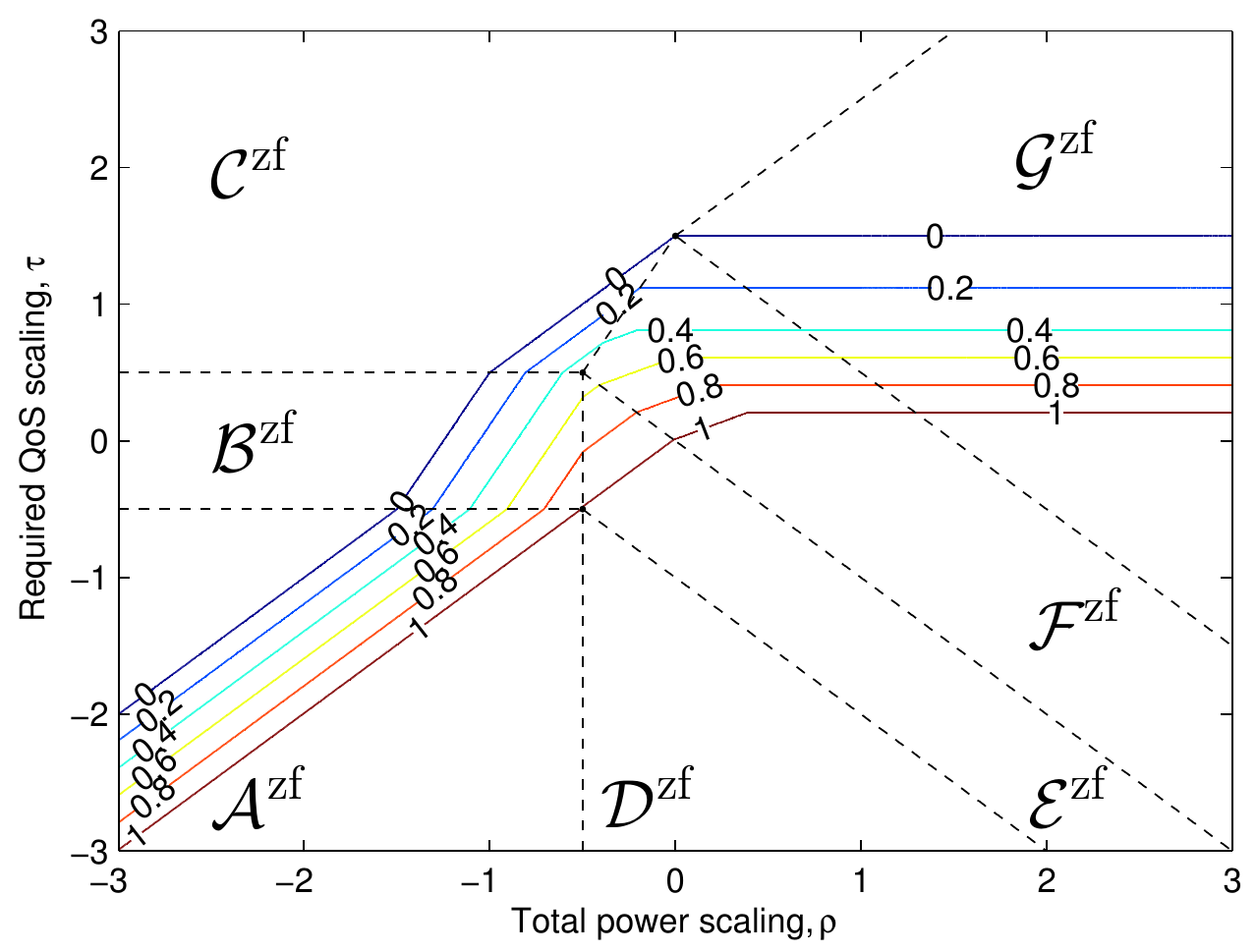}}\quad
\subfigure[Effect of pilot reuse only, $\upsilon_{\rm{PR}} =0.5$.]{\includegraphics[width=8cm]{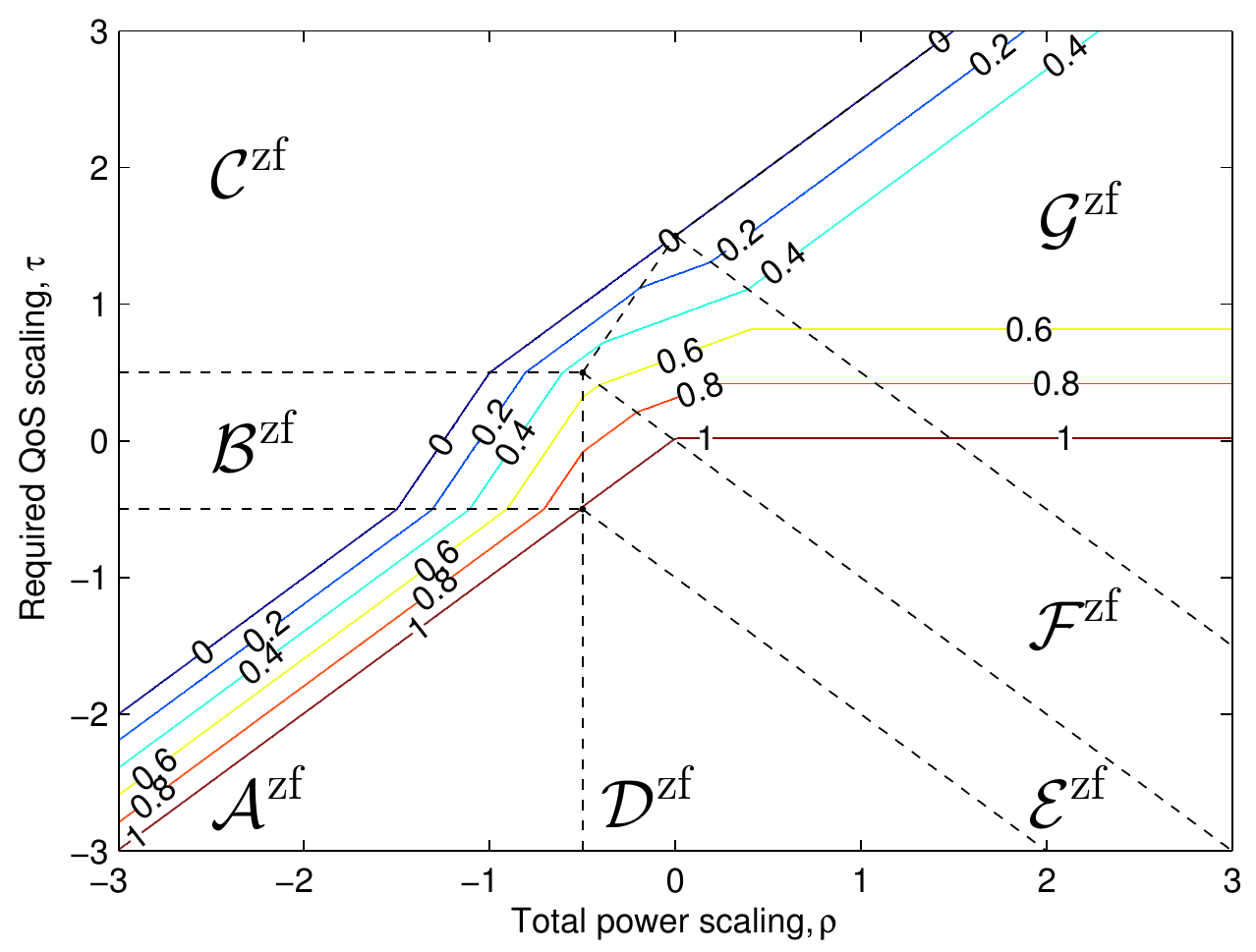}}\\
\subfigure[Effect of both partial association and pilot reuse, $\upsilon_{\rm{PA}}=0.2$ and $\upsilon_{\rm{PR}} =0.5$.]{\includegraphics[width=8cm]{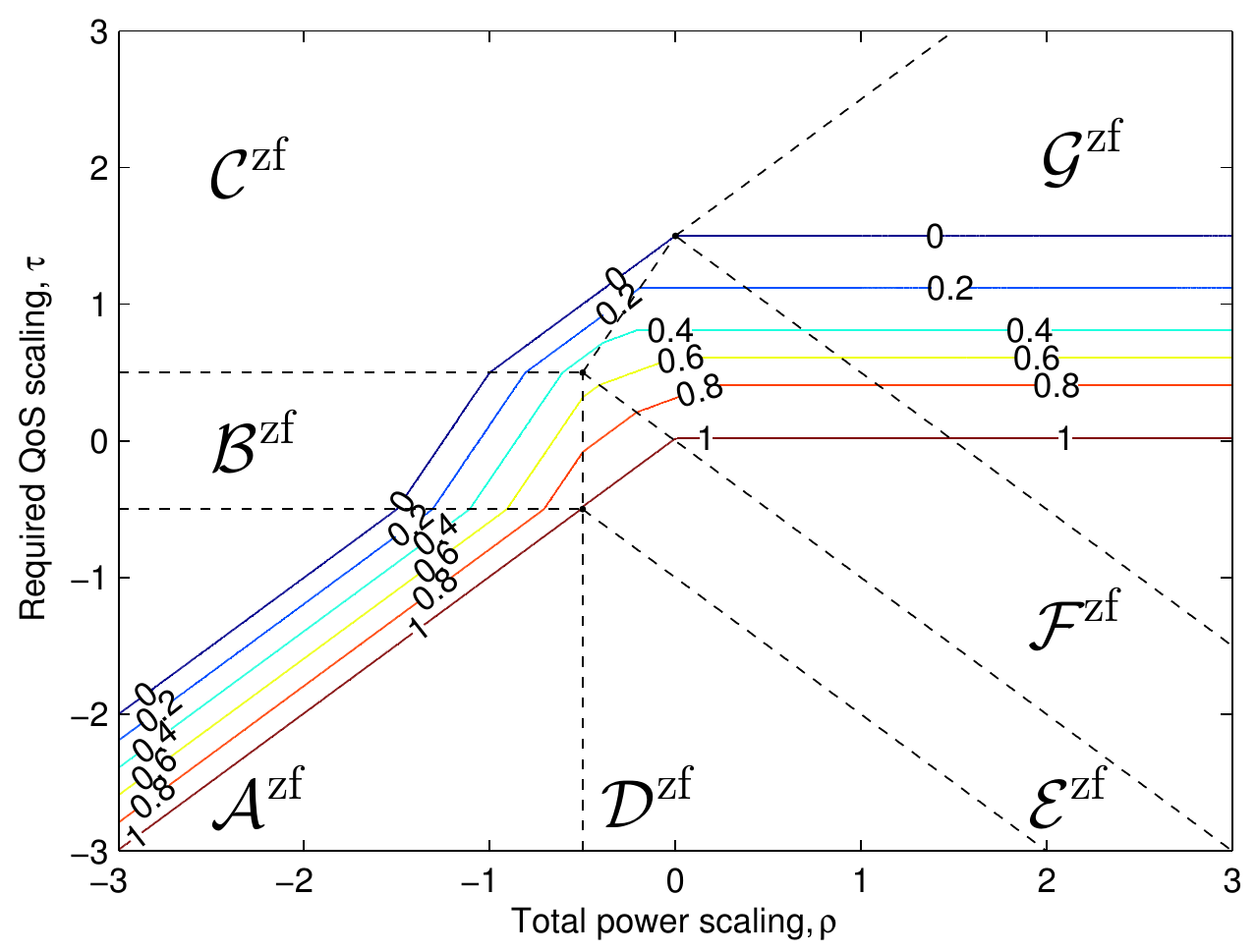}}\\
\caption{The scaling exponent of the number of supportable users of ZF operation with practical limitations, when $\eta_{\rm{bs}} = \eta_{\rm{ant}} = 0.5$.}
\end{figure*}

\begin{remark}(Effect of the partial association and/or the pilot reuse)
The effect of the partial association on the number of supportable users is shown in Fig. 5(a) when ZF operation is employed. Compared to Fig 4(c), it is shown that the contour plots in regions $\mathcal{A}^{\rm{zf}}$, $\mathcal{B}^{\rm{zf}}$, $\mathcal{C}^{\rm{zf}}$, $\mathcal{D}^{\rm{zf}}$ and $\mathcal{E}^{\rm{zf}}$ are unchanged, but those in regions $\mathcal{F}^{\rm{zf}}$ and $\mathcal{G}^{\rm{zf}}$ are considerably degraded. It is shown that a partial association poses an upper-limit on the target QoS scaling exponent $\tau$ even if additional transmit power is consumed, which is caused by the inevitable interference from non-associated BSs. So, the merits of ZF operation are reduced. In fact, if $\upsilon_{\rm{PA}} = 0$, ZF operation becomes MRT operation. However, if $\upsilon_{\rm{PA}}$ is strictly larger than 0, ZF operation can still provide better tradeoff between the target QoS scaling exponent ($\tau$) and the growth rate of the number of supportable users ($\zeta^{\Upsilon}_{\rm{user}}$) compared to MRT operation in regions $\mathcal{F}^{\rm{zf}}$ and $\mathcal{G}^{\rm{zf}}$. 

The effect of the pilot reuse on the number of supportable users is shown in Fig. 5(b). Since the scaling results does not change when $\eta_{\rm{user}} < \upsilon_{\rm{PR}}$, the contour lines marked as 0, 0.2, and 0.4 are not changed. However, the contour lines marked as 0.6, 0.8, and 1 are severely degraded due to the pilot reuse. Similarly as in the case of the partial association, an upper-limit is posed on the target QoS scaling $\tau$ even if additional transmit power is consumed, which comes from the fact that the CSI accuracy cannot be improved due to the interference caused by the users sharing the same pilots.

The effect of both the partial association and the pilot reuse is shown in Fig. 5(c). Evidently, both practical limitations results in lower target QoS scaling. The contour line marked as 1 is the same as in the case with the partial association only, i.e., the effect of partial association is dominant so that higher front/backhaul capacity can improve the network performance, while the other contour lines are the same as in the case with the pilot reuse only, i.e., the effect of pilot reuse is dominant so that more orthogonal pilot resource should be provided for better network performance.

\begin{figure*}
\centering
\subfigure[Required $\rho$ with the full association and no pilot reuse.]{\includegraphics[width=8cm]{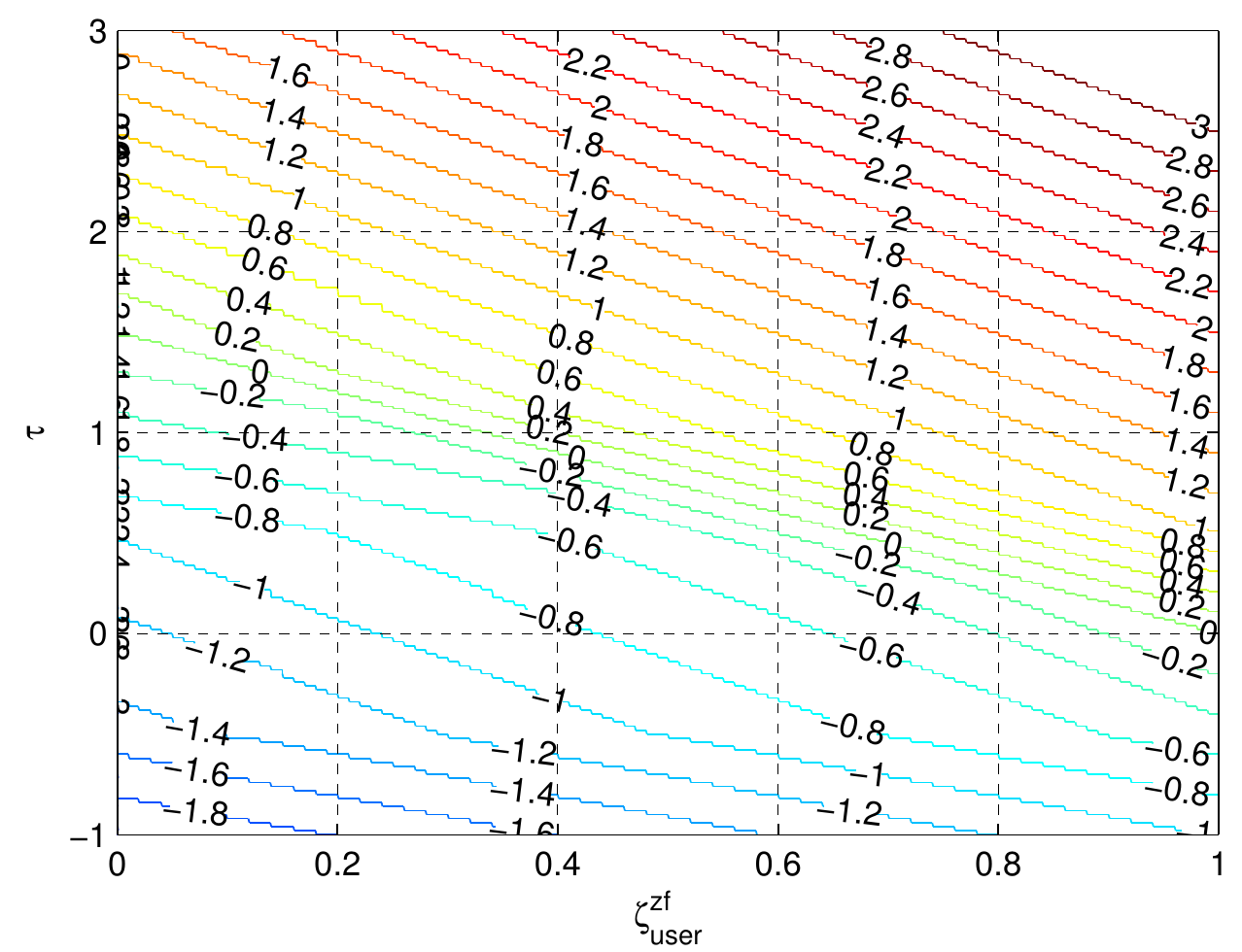}}\quad
\subfigure[Required $\rho$ with the full association and a pilot reuse with $\upsilon_{\rm{PR}}=0.5$.]{\includegraphics[width=8cm]{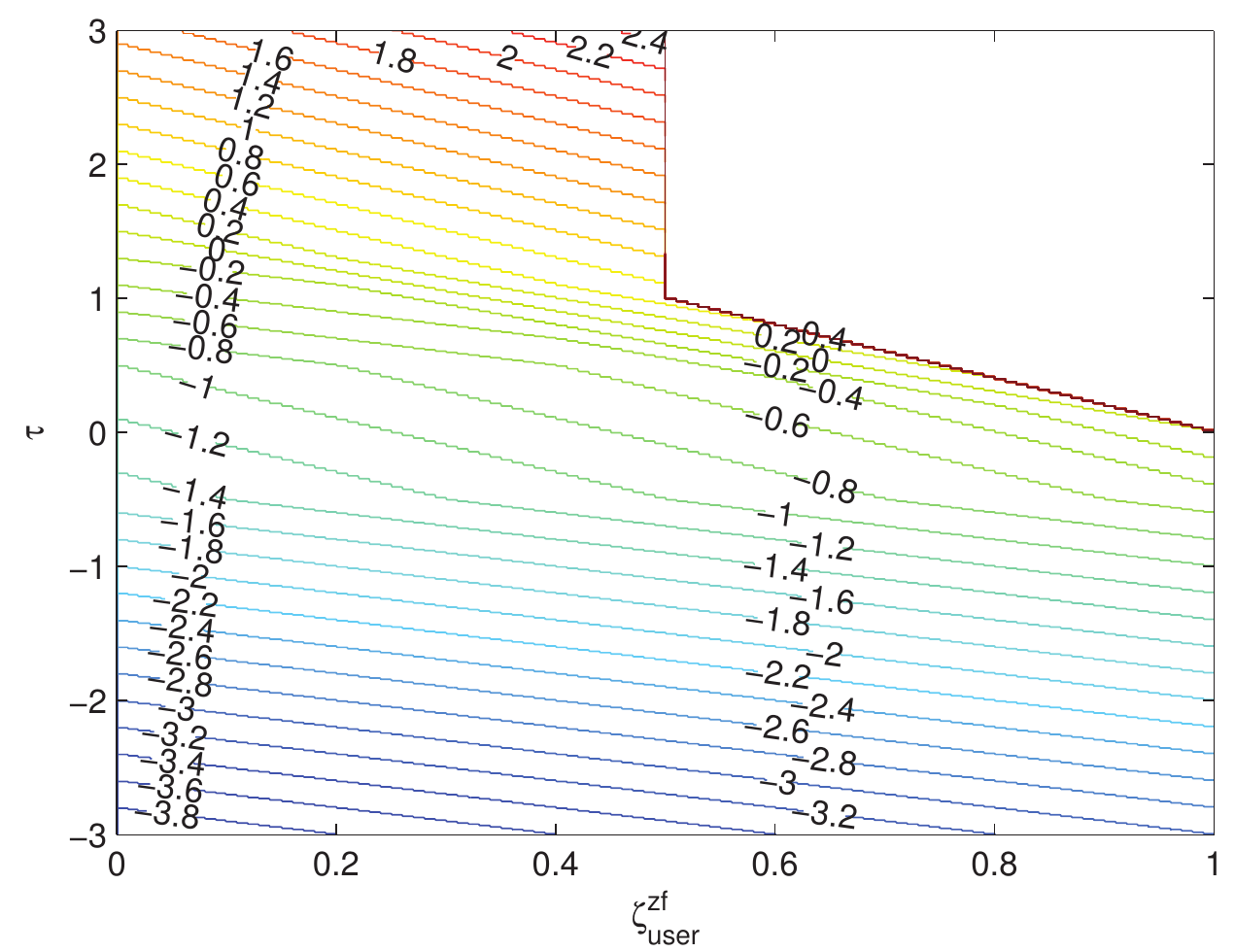}}\\
\subfigure[Required $\rho$ with a partial association with $\upsilon_{\rm{PA}}=0.2$ and no pilot reuse.]{\includegraphics[width=8cm]{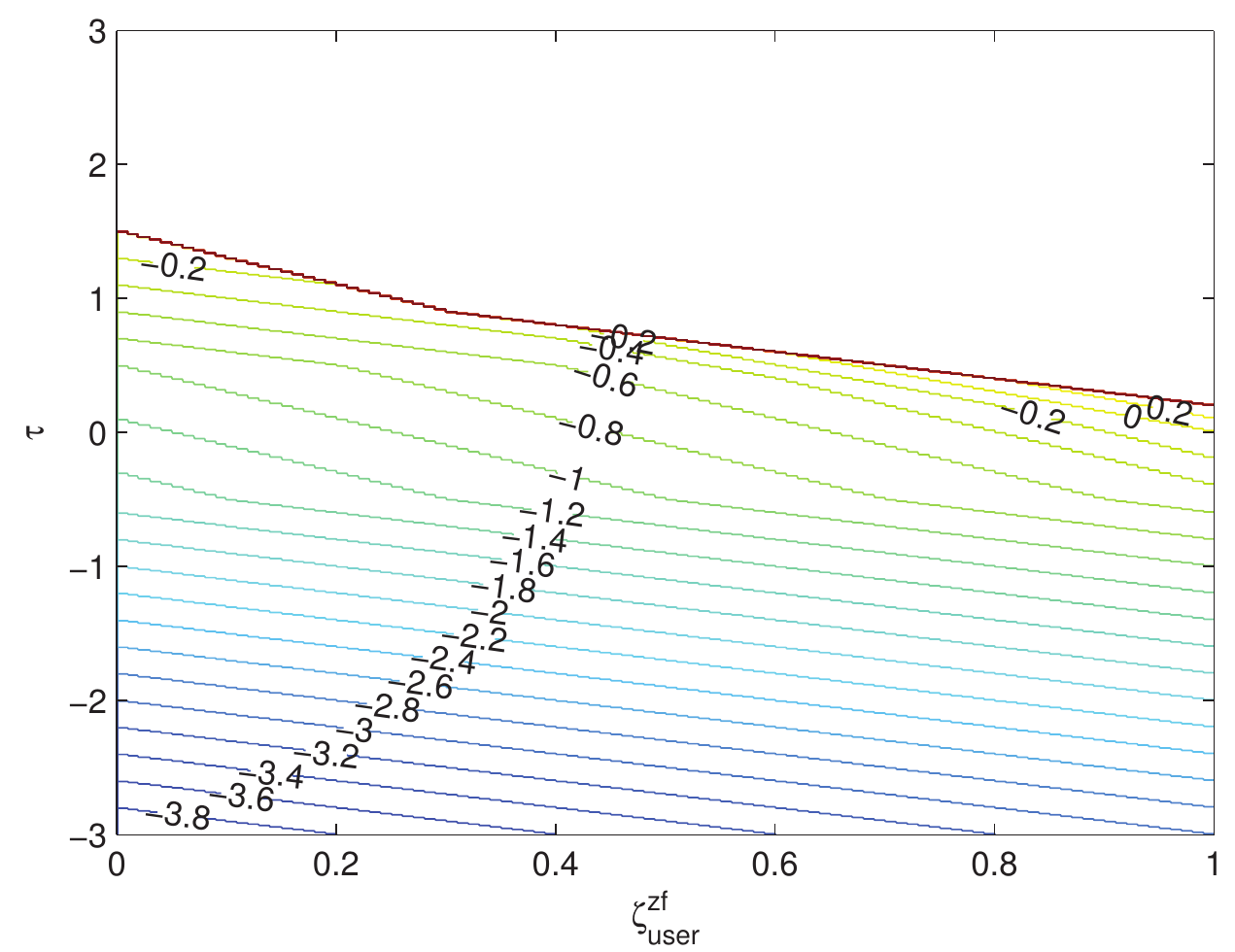}}\\
\caption{Contour plots for the required $\rho$ on ($\tau$,$\zeta_{\rm{user}}^{\mathsf{zf}}$) when $\eta_{\rm{bs}} = \eta_{\rm{ant}} = 0.5$.}
\end{figure*}

In Fig. 6, the contour plots for the required $\rho$ on ($\tau$,$\zeta_{\rm{user}}^{\mathsf{zf}}$) is illustrated, which shows the tradeoff between the target QoS scaling exponent ($\tau$) and the growth rate of the number of supportable users ($\zeta_{\rm{user}^{\mathcal{zf}}}$). By comparing Fig. 6(a) with Fig. 6(b), it is shown that the pilot reuse, $\upsilon_{\rm{PR}}=0.5$, does not change the tradeoff if $\zeta_{\rm{user}}^{\mathsf{zf}}$ is less than $\upsilon_{\rm{PR}}$. However, the pilot reuse degrades the tradeoff if $\zeta_{\rm{user}}^{\mathsf{zf}}$ is larger than $\upsilon_{\rm{PR}}$ so that a non-achievable region is created even if sufficiently high transmit power is available. Thus, in order to improve the tradeoff under the pilot resource constraint, an advanced technique is necessary to avoid or suppress the interference caused from the pilot reuse. Also, by comparing Fig. 6(a) with Fig. 6(c), it is shown that the partial association also degrades the performance and it creates a non-achievable region for high $\tau$ due to the interference mainly caused from non-associated BSs. Thus, a network designer needs to carefully design the association range and the front/backhaul capacity level. 
$\blacksquare$
\end{remark}

\section{Conclusion}
This paper presented a comprehensive and rigorous asymptotic analysis on the performance of the large-scale cloud radio access network (LS-CRAN) under the three practical constraints, 1) limited transmit power, 2) limited front/backhaul capacity, and 3) limited pilot resource. As a main performance measure, the scaling exponent of the signal-to-interference-plus-noise ratio (SINR) was defined and derived as a function of key network parameters, 1) the number of BS, $L = \Theta(N^{\eta_{\rm{bs}}})$, 2) the number of BS antennas $M= \Theta(N^{\eta_{\rm{ant}}})$, 3) the number of single-antenna users $ K= \Theta(N^{\eta_{\rm{user}}})$, 4) the uplink (UL) transmit power, $P_k^{\rm{ul}} = \Theta(N^{\rho^{\rm{ul}}})$, 5) the DL transmit power, $P^{\rm{dl}}= \Theta(N^{\rho^{\rm{dl}}})$, and the distance-dependent pathloss exponent $\alpha$, when interference-free (IF), maximum ratio transmission (MRT), or zero-forcing (ZF) operation is applied.

Then, we show that when MRT or ZF operation becomes interference-free, i.e., order-optimal with the three practical constraints. By considering limited transmit power only, MRT operation is shown to become interference-free only when the DL transmit power is less then a threshold, but the threshold is too low to be meaningful. Also, ZF operation is shown to become interference-free at meaningful DL transmit power so that as higher UL transmit power is provided, more users can be associated and the data rate per user can be increased simultaneously while keeping the order-optimality. Furthermore, when the other two practical constraints are considered together, it is shown that the total front/backhaul overhead of $\Omega(N^{\eta_{\rm{bs}}+\eta_{\rm{ant}}+\eta_{\rm{user}}+\frac{2}{\alpha}\rho^{\rm{ul}}})$ and $\Omega(N^{\eta_{\rm{user}}-\eta_{\rm{bs}}})$ pilot resources are required to keep the order-optimality. 

Last, in order to characterize the network-wise performance of the LS-CRAN, the growth rate of the number of supportable users, $\zeta_{\rm{user}}^{\Upsilon}$, is defined when a requirement on the target QoS scaling is given as $\tau$. Then, the tradeoff between ($\tau,\zeta_{\rm{user}}^{\Upsilon}$) is derived under the three practical constraints. The results quantify the achievable tradeoff between them at a given level of total transmit power, which shows that the target QoS and the number of users satisfying the QoS can grow simultaneously with a nice tradeoff as the network size increases. It is also shown and quantified that the other two practical constraints, limited front/backhaul capacity and limited pilot resources, may pose an upper-bound on the target QoS so that non-achievable tradeoff region is created, which would be a key ingredient providing a guideline for next-generation LS-CRANs.

\appendices

\section{Preliminary Results}

Here, we provide key preliminary results to prove theorems in this paper. First, we state the method to derive the scaling exponent of a random measure given by a sum of infinitely many i.i.d. random variables. 

\textbf{\emph{Lemma 1}}. Let $x_1,x_2,\cdots$ be a sequence of i.i.d. real-valued random variables with common distribution function of $F(x)$ and $h(x)$ be a non-negative integrable function. Then, as $n\to \infty$, we have
\begin{equation}\label{eq_18}
\begin{split}
\sum\limits_{k=1}^{n}h(x_k) &=\Theta \left( {\log n\int_0^1 {{n^t}} h\left(F^{-1}\left( {{n^{ - 1 + t}}} \right)\right)dt} \right)\\
&= \Theta \left( {n\int_{{F^{ - 1}}(\frac{1}{n})}^{{F^{ - 1}}(1)} h (u)f(u)du} \right).
\end{split}
\end{equation}
where $F^{-1}(x)$ is an inverse (or quantile) function of $F(x)$ and $f(x) = \frac{d}{dx}F(x)$ is the probability density function.
\begin{IEEEproof}
Suppose that $y_1,y_2,\cdots y_n$ are i.i.d. uniform random variables on $[0,1]$. 
Then, for any positive small $\epsilon$, the following holds:
\begin{equation}\label{eq_19}
\begin{split}
\Pr\left(  \min_{1 \le k\le n} y_{k} > \frac{1}{n ^{1+\epsilon}} \right) &={\left( {1 - \frac{1}{{{n^{1 + \epsilon}}}}} \right)^n}\\
&\to 1,~\text{as}~ n\to\infty,
\end{split}
\end{equation}
which implies that all of $y_1,\cdots ,y_n$ are included in the interval $(n^{-1-\epsilon}, 1]$, but not included in the interval $[0, n^{-1-\epsilon}]$ in probability. By dividing the interval $(n^{-1-\epsilon}, 1]$ into $\lceil g(n) \rceil$ intervals, ${{\cal S}_1} = ({n^{ - 1 - \epsilon }},{n^{{u_1}}}],{{\cal S}_2} = ({n^{{u_1}}},{n^{{u_2}}}], \cdots ,{{\cal S}_{ \lceil g(n) \rceil }} = ({n^{{u_{ \lceil g(n) \rceil  - 1}}}},1]$, where $u_i = -1+ \frac{i}{\left\lceil g(n)\right\rceil}$ for $i=1,2,\cdots,\lceil g(n) \rceil$, $\left\lceil \cdot \right\rceil$ denotes the ceiling operation, and $g(n)$ is a positive function of $n$. Let 
$\mathcal{Y}_i = \{y_j|y_j\in\mathcal{S}_i\}$ for $i=1,2,\cdots,\lceil g(n)\rceil$. Then, for $g(n) = o(n)$, 
\begin{equation}\label{eq_20}
\begin{split}
|{\mathcal{Y}}_i| &= \Theta\left(n^{1+u_{i}}\right)= \Theta\left(n^{\frac{i}{\lceil g(n) \rceil}}\right).
\end{split}
\end{equation} 
Since $y_1,\cdots,y_n$ are included in the interval $(n^{-1-\epsilon}, 1]$ in probability, we have 
\begin{equation}\label{eq_21}
\begin{split}
 \frac{1}{{\lceil g(n)\rceil}}\sum\limits_{i=1}^{{\lceil g(n)\rceil}}|{\mathcal{Y}}_i| &=  \Theta\left(\frac{1}{{\lceil g(n)\rceil}}\sum\limits_{i=1}^{{\lceil g(n)\rceil}}n^{\frac{i}{{\lceil g(n)\rceil}}}\right) \\
 &= \Theta\left(\int_0^1 {{n^t}dt}\right) \\
 & = \Theta \left( {\frac{n}{{\log n}}} \right),
\end{split}
\end{equation} 
where $g(n)$ is assumed to be properly chosen so that the second equality holds as a Riemann summation. Then, it is seen that $g(n)$ should increase logarithmically, i.e., $g(n) = \Theta(\log n)$.

Now, assume that $h(x)$ is a monotonically decreasing non-negative integrable function. First, we consider the upper-bound:
\begin{equation}\label{eq_22}
\begin{split}
\frac{1}{{\left\lceil {g(n)} \right\rceil }}\sum\limits_{k = 1}^n {h({y_k})} &= \frac{1}{{\left\lceil {g(n)} \right\rceil }}\sum\limits_{i = 1}^{\left\lceil {g(n)} \right\rceil } {\sum\limits_{{y_k} \in {{\cal Y}_i}} {h({y_k})} } \\
& \le \frac{1}{{\left\lceil {g(n)} \right\rceil }}\sum\limits_{i = 1}^{\left\lceil {g(n)} \right\rceil } {\left| {{{\cal Y}_i}} \right|h\left({n^{{u_{i - 1}}}}\right)} \\
 &= \Theta\left(\frac{{{n^{\frac{1}{{\left\lceil {g(n)} \right\rceil }}}}}}{{\left\lceil {g(n)} \right\rceil }}\sum\limits_{i = 1}^{\left\lceil {g(n)} \right\rceil } {{n^{\frac{{i - 1}}{{\left\lceil {g(n)} \right\rceil }}}}h\left({n^{ - 1 + \frac{{i - 1}}{{\left\lceil {g(n)} \right\rceil }}}}\right)}\right) \\
 &= \Theta \left( {{n^{\frac{1}{{\left\lceil {g(n)} \right\rceil }}}}\int_0^1 {{n^t}} h\left( {{n^{ - 1 + t}}} \right)dt} \right),
\end{split}
\end{equation}
where the last equality comes from the Riemann summation. Similarly, we obtain the lower-bound:
\begin{equation}\label{eq_23}
\begin{split}
\frac{1}{{\left\lceil {g(n)} \right\rceil }}\sum\limits_{k = 1}^n {h({y_k})} & \ge \Theta \left( {\int_0^1 {{n^t}} h\left( {{n^{ - 1 + t}}} \right)dt} \right)
\end{split}
\end{equation}
so that the gap between the lower-bound and the upper-bound is $\Theta(n^{\frac{1}{\lceil g(n)\rceil}}) = \Theta(1)$ due to $g(n) = \Theta(\log n)$. Since the gap is bounded by a constant, the upper-bound and the lower-bound is asymptotically tight. 
Additionally, $h(x)$ is set to a monotonically decreasing (or increasing) function so that such a gap is maximized because  the gap is still $\Theta(1)$ for any non-negative integrable function $h(x)$. Finally, by using the continuous mapping theorem \cite{BillingsleyConvergence} and changing the variable $u=F^{-1}(n^{-1+t})$, we obtain (\ref{eq_18}).
\end{IEEEproof}
Recall that our order notation ignores a logarithm term, but the proof of Lemma 1 does not ignore this term so that the Lemma 1 still holds when we use the order notation in \cite{KnuthBig}. 

This result is closely related to the (weak) law of large number. Obviously, ${\int_{{F^{ - 1}}({1}/{n})}^{{F^{ - 1}}(1)} h (u)f(u)du}$ converges to $\mathbb{E}[h(x_k)]$ as $n\to\infty$ if the function $h(\cdot)$ is independent to $n$. If $\mathbb{E}[h(x_k)]$ is finite and non-zero, $\sum_{k=1}^n h(x_k) = \Theta(n)$, which is consistent with the weak law of large number. In the case where $\mathbb{E}[h(x_k)]$ increases unboundedly or approaches zero as $n\to\infty$, Lemma 1 gives the way to quantify its asymptotic behavior via a simple integral form. Usually, it is too hard to derive the integral directly so a simpler way is necessary. By using Lemma 1 and our order notation, we can easily derive the scaling exponent by solving an optimization problem. 

\textbf{\emph{Corollary 1}}. Let $x_1,x_2,\cdots$ be a sequence of i.i.d. real-valued random variables with common distribution function of $F(x)$ and $h(x)$ be a non-negative integrable function with $h(F^{-1}(x)) = \Theta(x^{-p})$. Then, as $n\to\infty$, we have
\begin{equation}
\begin{split}
\sum\limits_{k=1}^n h(x_k) &= \Theta \left( {\mathop {\sup }\limits_{t \in [{t_{\min }},{t_{\max }}]} \left| {{{\cal X}_t}} \right|h\left( {{n^t}} \right)} \right),
\end{split}
\end{equation}
where $\mathcal{X}_t = \{x_k|x_k \in (c_1n^{t}, c_2n^{t}]\}$ for two constants $c_1<c_2$,  $$t_{\min} = \mathop{\arg\sup}_t \lim_{n\to\infty} \Pr(\mathcal{X}_{t}) = 0,$$
$$t_{\max} = \mathop{\arg\inf}_t \lim_{n\to\infty} \Pr(\mathcal{X}_{t}) = 1.$$
\begin{IEEEproof}
Using the similar argument in the proof of Lemma 1, we can easily show that if $h(F^{-1}(x)) = \Theta(x^{-p})$ for $p>1$,
\begin{equation}
\sum\limits_{k=1}^n h(x_k) = \Theta \left( {\mathop {\sup }\limits_{t \in [{t_{\min }},{t_{\max }}]} \sum\limits_{{x_k} \in {{\cal X}_t}} h ({x_k})} \right).
\end{equation} 
Since all elements in $\mathcal{X}_t$ scales like $\Theta(n^t)$, $x_k$ can be replaced by $n^t$, which concludes the proof.
\end{IEEEproof}

Note that, as $n\to\infty$,  all of $x_1,...,x_n$ are included in at least one set of $\mathcal{X}_{t}$ for $t_{\min}\le t\le t_{\max}$ in the probabilistic sense. For example, if $x_k$ are a sequence of i.i.d. uniform random variables on $[0,1]$, we can take $t_{\min}=  -1 $ and $t_{\max} = 0$.

This corollary informs that the scaling exponent of $\sum_{k=1}^n h(x_k)$ is only determined by the supremum of the partial sums, $\sum_{x_k\in\mathcal{X}_t}h(x_k)$, for all possible $t$ under some mild condition on $h(F^{-1}(x)) = \Theta(x^{-p})$ for $p>1$. Since we use the distance-based decay pathloss model, $d^{-\alpha}$, with $\alpha>2$, most cases hold this condition  
so that the scaling exponents obtained in this paper can be derived only by constructing the sets $\mathcal{X}_t$ and checking which set results in highest scaling exponent. The merits of Lemma 1 and Corollary 1 are shown in the next examples.

\emph{Example 1.} The aggregated interference at the origin of a circular-shaped network of radius $R$  can be written as  $$I = \sum\limits_{k=1}^{n} \max\left\{b,|x_k|\right\}^{-\alpha}, $$ 
where $b\ge 0$ is a bound and  $\{x_k\}_{k=1}^n$ follows a binomial point process on the circle of radius $R$, i.e., $F(x) = \frac{x^2}{R^2}$ for  $ 0\le x \le R$. 
If $b>0$, the pathloss model is bounded and $\mathbb{E}[\max\{b,|x_k|\}^{-\alpha}]$ converges to a finite constant so that $I = \Theta(n)$. However, if $b=0$, the pathloss model is unbounded at the origin and $\mathbb{E}[|x_k|^{-\alpha}] $ diverges for $\alpha\ge2$. Since ${F^{ - 1}}(x) = R\sqrt x$, the function $h(F^{-1}(x)) = \Theta(x^{-\frac{\alpha}{2}})$ if $b=0$ so Corollary 1 can be applied. 
After some manipulations, $|\mathcal{X}_t| = \Theta(n^{2t+1})$ is obtained for $-1/2\le t\le 0$. Then, applying Corollary 1, we have 
\[I = \Theta \left( {\mathop {\sup }\limits_{t \in [ - 1/2,0]} {n^{ - \alpha t}}} \right) = \Theta \left( {{n^{\frac{\alpha }{2}}}} \right).\]

\emph{Example 2.} In \cite{LevequeInformation}, the upper-bound of the capacity of a wireless ad-hoc network on $[0, \sqrt{n}]\times[0, \sqrt{n}]$ is shown as 
\begin{align*}
  {C}& = O\left(\frac{1}{n} {\sum\limits_{k = 1}^n {\log } \left( {1 + \sqrt n l({x_k})} \right)} \right) ,
\end{align*}
where ${l({x_k}) = {e^{ - \frac{\gamma }{2}|{x_k}|}}}$ for an exponential-decaying pathloss model with a positive $\gamma$ or ${l({x_k}) = \left|{x_k}\right|^{ - \frac{\alpha }{2}}}$ for a power-decaying pathloss model and $x_1,x_2...,x_n$ are i.i.d. uniform random variables on $[0,\sqrt{n}]$, i.e., $F(x) = x/\sqrt{n} $ for $x\in[0,\sqrt{n}]$. From Lemma 1, we obtain 
\begin{equation}\label{eq_48_a}
\begin{split}
C &= O\left( { \frac{\log n}{n} \int_0^1 {{n^t}\log \left( {1 + {n^{\frac{1}{2}}}l\left( {{n^{ - \frac{1}{2} + t}}} \right)} \right)} dt} \right)\\
 &= O\left( {\frac{\log n}{n}{n^{{t^\star}}}\log \left( {1 + {n^{\frac{1}{2}}}l\left( {{n^{ - \frac{1}{2} + {t^\star}}}} \right)} \right)} \right),
\end{split}
\end{equation}
where $t^\star = \arg\max_{0\le t\le 1} {n^{{t}}}\log ( {1 + {n^{\frac{1}{2}}}l( {{n^{ - \frac{1}{2} + {t}}}} )})$. Note that the last equality in (\ref{eq_48_a}) comes from that fact $\int_{0}^1 f(t)dt \le f(t^\star) $ where $t^\star = \max_{0\le t\le1} f(t)$. 
For the exponential-decaying pathloss model, 
$
{C} = O\left({{n^{-\frac{1}{2}}}{{\left( {\log n} \right)}^2}}\right)
$
by taking $t^\star= \frac{1}{2}$ as $n \to \infty$, which is consistent with Theorem 2.10 in \cite{LevequeInformation} and for the power-decaying pathloss model, 
$
{C}  = O \left({n^{\left( {\frac{1}{\alpha } - \frac{1}{2}} \right)}}\log n\right),
$
by taking $t^\star= {\frac{1}{\alpha } + \frac{1}{2}}$, as $n \to \infty$, which is consistent with Theorem 2.5 in \cite{LevequeInformation}.

\begin{figure}[t]
\centering
\includegraphics[width=8cm]{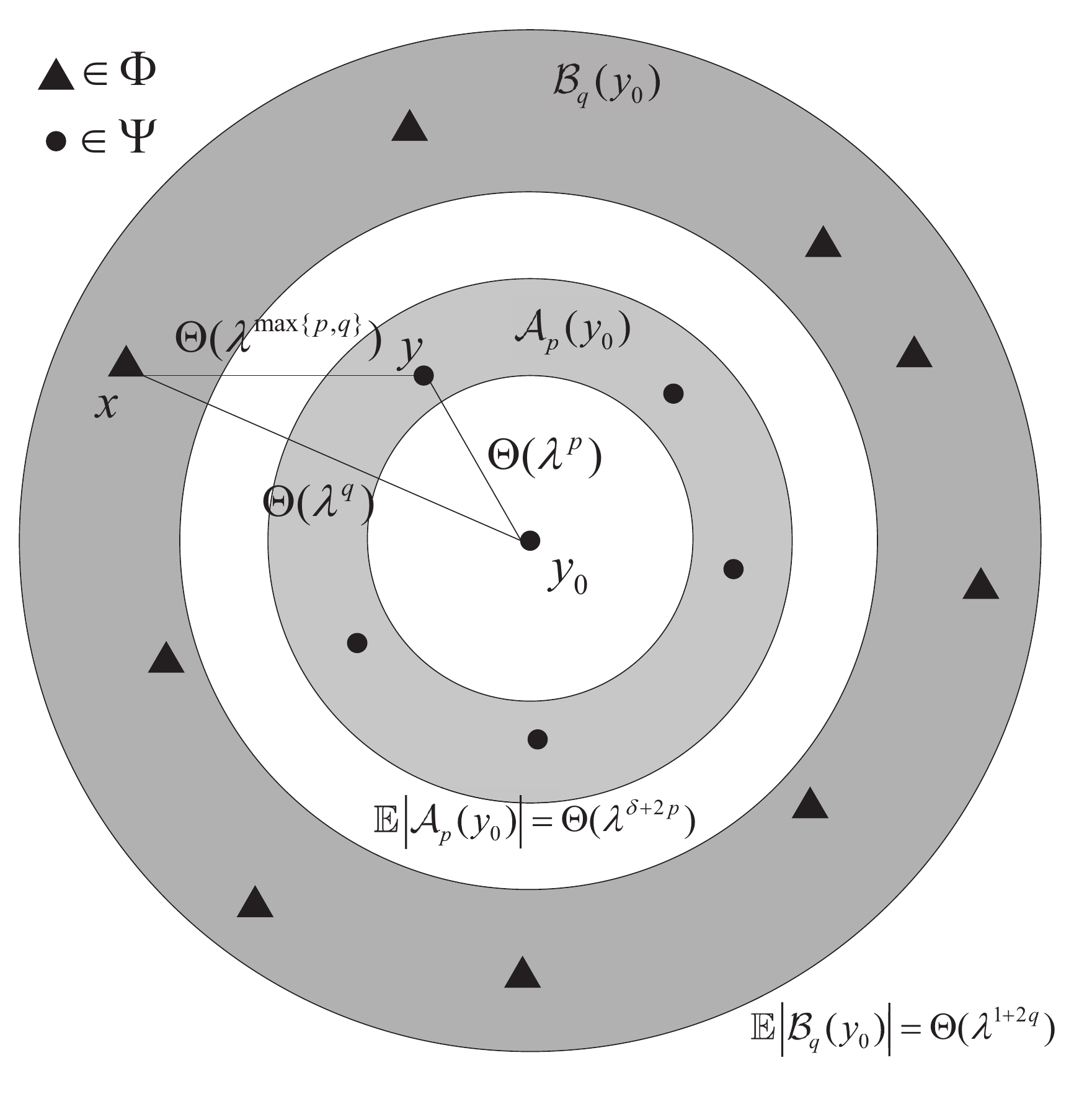}
\caption{Conceptual illustration for the proof of Lemma 2.}
\end{figure}

\textbf{\emph{Lemma 2}}. Let $\Phi$ and $\Psi$ be two independent homogeneous PPPs over a finite region $\mathcal{R}\subset \mathbb{R}^2$ with densities $\lambda$ and $\mu = \Theta(\lambda^\delta)$ for $\delta>0$, respectively. Assume that an arbitrarily chosen point $Y_0 \in\Psi$ is given. Then, the following holds for $\Phi_p(Z)\triangleq\{X\in\Phi \left| |X -Z| =  O(\lambda^p)\right. \}$,

1)  For $\alpha> 2$ and $-1/2<p\le0$, as $\lambda\to\infty$,
\begin{align}\label{lemma1_1}
\sum\limits_{X \in  {\Phi_p}({Y_0})} {{{\left| {X - {Y_0}} \right|}^{ - \alpha }}}  &= \Theta \left( {{\lambda ^{\frac{\alpha }{2}}}} \right).
\end{align}

2) For $\alpha_0,\alpha_1>2$, as $\lambda\rightarrow\infty$,

\begin{equation}\label{lemma1_2}
\begin{split}
\sum\limits_{Y \in \Psi \backslash \{ {Y_0}\} }& {\sum\limits_{ X \in \Phi \backslash \left( {{\Phi_p}({Y_0}) \bigcup {\Phi_p}(Y)} \right)} {{{\left| {X - {Y_0}} \right|}^{ - {\alpha _0}}}{{\left| {X - Y} \right|}^{ - {\alpha _1}}}} } = \Theta(\lambda^s),
\end{split}
\end{equation}
where $s$ is given by 
\[
s = \left\{ \begin{array}{*{20}{l}}
\frac{{{\alpha _{\max }}}}{2} + \frac{{{\alpha _{\min }}}}{2}\delta,&{\text{ if }}z <  - \frac{1}{2},\\
1 + \left( {2 - {\alpha _{\max }}} \right)z + \frac{{{\alpha _{\min }}}}{2}\delta,&{\text{ if }} - \frac{1}{2} \le z <  - \frac{1}{2}\delta ,\\
1 + \delta  + \left( {4 - {\alpha _{\max }} - {\alpha _{\min }}} \right)z,&{\text{ if }}z \ge  - \frac{1}{2}\delta ,
\end{array} \right.
\]
for $\delta  < 1$, or 
\[
s = \left\{ \begin{array}{*{20}{l}}
\frac{{{\alpha _{\max }} + {\alpha _{\min }}}}{2} + \delta  - 1,&{\text{ if }}z <  - \frac{1}{2},\\
1 + \delta  + \left( {4 - {\alpha _{\max }} - {\alpha _{\min }}} \right)z,&{\text{ if  }}z \ge  - \frac{1}{2},
\end{array} \right.
\]
for $\delta  \ge 1$, and
\begin{equation}\label{lemma1_3}
\begin{split}
\sum\limits_{Y \in \Psi \backslash \{ {Y_0}\} }& {\sum\limits_{\scriptstyle X \in  {{\Phi_p}({Y_0}) \bigcap {\Phi_p}(Y)} } {{{\left| {X - {Y_0}} \right|}^{ - {\alpha _0}}}{{\left| {X - Y} \right|}^{ - {\alpha _1}}}} } = \Theta(\lambda^t),
\end{split}
\end{equation}
where $t$ is given by 
\[t = \left\{ \begin{array}{*{20}{l}}
\frac{{{\alpha _{\max }}}}{2} + \frac{{{\alpha _{\min }}}}{2}\delta,&{\text{if }}z >  - \frac{1}{2}\delta  >  - \frac{1}{2},\\
\frac{{{\alpha _{\max }} + {\alpha _{\min }}}}{2} + \delta  - 1,&{\text{if }}z >  - \frac{1}{2} >  - \frac{1}{2}\delta ,\\
\frac{{{\alpha _{\max }} + {\alpha _{\min }}}}{2} + \delta  + 2z,&{\text{if }} - \frac{1}{2}\delta  < z <  - \frac{1}{2},\\
-\infty,&{\text{otherwise },}
\end{array} \right.\]
for $\alpha_{\max} = \max\{\alpha_0,\alpha_1\}$ and $\alpha_{\min} = \min\{\alpha_0,\alpha_1\}$.
\begin{IEEEproof}
1) To prove the first part of Lemma 2, we construct the lower-bound and upper-bound as follows.
$$\max\limits_{X \in  {{\cal A}_p}({Y_0})} {{{\left| {X - {Y_0}} \right|}^{ - \alpha }}}\le \sum\limits_{X \in  {{\cal A}_p}({Y_0})} {{{\left| {X - {Y_0}} \right|}^{ - \alpha }}}\le \sum\limits_{X \in  \Phi} {{{\left| {X - {Y_0}} \right|}^{ - \alpha }}}.$$
Here, the lower-bound is obviously $\Theta(\lambda^{\alpha/2})$ because the minimum distance is $\min_{X\in\mathcal{A}_p(Y_0)}|X-Y_0|=\Theta(\lambda^{-1/2})$. To complete the proof of the first part, it is sufficient to prove $\sum\limits_{X \in  \Phi} {{{\left| {X - {Y_0}} \right|}^{ - \alpha }}}=\Theta(\lambda^{\alpha/2})$. Let $h(x) = x^{-\alpha}$. For a randomly selected $X\in\Phi$, its contact distribution function is given by $$F(x;Y_0) = \Pr(|{X}-Y_0|<x) = \frac{v\left(b(Y_0,X) \bigcap \mathcal{R}\right)}{v(\mathcal{R})},$$
where $b(a,r) = \{b\in\mathbb{R}^2 | |a-b|\le r \}$ and $v(\cdot)$ denotes the Lebesgue measure (i.e. area measure). Since $v(\mathcal{R})$ is finite and independent to $x$, $F(x) = c'x^2+ o(x^2)$ and thus $h(F^{-1})(x) = cx^{-\frac{\alpha}{2}}+ o(x^{-\frac{\alpha}{2}})$, where $c$ is independent to $x$. From Lemma 1, we have 
\begin{equation}
\begin{split}
\sum\limits_{k = 1}^n {h({x_k})} & = \Theta \left( {\log n\int_0^1 {{n^{\left( {1 - \frac{\alpha }{2}} \right)t + \frac{\alpha }{2}}}dt} } \right) \\
&= \Theta \left( n \right) + \Theta \left( {{n^{\frac{\alpha }{2}}}} \right) \\
&= \Theta \left( {{n^{\frac{\alpha }{2}}}} \right),
\end{split}
\end{equation}
 where the last equality comes from $\alpha>2$ and the fact $n = \Theta(\lambda)$ completes the proof of the first part.

2)
Define two sets $\mathcal{A}_p(Z)\triangleq\{X\in\Phi | |X -Z| \mathop{=}\limits^{p} \Theta(\lambda^p) \}$ and $\mathcal{B}_q(Z)\triangleq\{y\in\Psi | |Y -Z| \mathop{=}\limits^{p} \Theta(\lambda^q) \}$ for a given point $Z$. Since $\Phi$ and $\Psi$ are PPPs with density $\lambda$ and $\mu$, respectively, $\mathbb{E}|\mathcal{A}_p(Z)| = \Theta(\lambda^{1+2p})$ and $\mathbb{E}|\mathcal{B}_q(Z)| = \Theta(\lambda^{\delta+2q})$. Note that $\Pr\left\{|\mathcal{A}_p(Z)|>0\right\}\rightarrow 1$ if and only if $-\frac{1}{2}<p\le 0$ and $\Pr\left\{|\mathcal{B}_q(Z)|>0\right\}\rightarrow 1$ if and only if $-\frac{\delta}{2}<q\le 0$. 
Assume that $Y_1 \in \Psi$ is given with $|Y_0-Y_1|=\Theta(\lambda^d)$ and $\alpha_0\ge \alpha_1$.  For any $X\in\mathcal{A}_p(Y_0)$, $|X-Y_1|= \Theta(\lambda^{\max\{p,d \}})$ so that ${{{\left| {X - {Y_0}} \right|}^{ - \alpha_0 }}{{\left| {X - {Y_1}} \right|}^{ - \alpha_1 }}} = \Theta \left( {{\lambda ^{- \alpha_0 p - \alpha_1 \max \left\{ {p,d} \right\}}}} \right)$. Then, we have
\begin{equation}\label{eq_51}
\begin{split}
&\sum\limits_{Y\in \mathcal{B}_d(Y_1)}\sum\limits_{ X\in\mathcal{A}_p(Y_0)\backslash{\left( {{\Phi _p}({Y_0})\bigcup {{\Phi _p}} (Y)} \right)}}{{{\left| {X - {Y_0}} \right|}^{ - \alpha_0 }}{{\left| {X - {Y_1}} \right|}^{ - \alpha_1 }}}
\\&=\Theta\left( {\lambda ^{\delta  + 2d + 1 + 2p - \alpha_0 p - {\alpha _1}\max \left\{ {d,p} \right\}}} \right),
\end{split}
\end{equation}
with the constraints of $\max \left\{ {z, - \frac{1}{2}} \right\} < p \le 0$ and $ - \frac{\delta}{2}  \le d \le 0$.

Similarly, we have 
\begin{equation}\label{eq_52}
\begin{split}
&\sum\limits_{Y\in \mathcal{B}_d(Y_1)}\sum\limits_{X\in\mathcal{A}_p(Y_0)\bigcap{\left( {{\Phi _p}({Y_0})\bigcap {{\Phi _p}} (Y)} \right)}}{{{\left| {X - {Y_0}} \right|}^{ - \alpha_0 }}{{\left| {X - {Y_1}} \right|}^{ - \alpha_1 }}}
\\&=\Theta\left( {\lambda ^{\delta  + 2d + 1 + 2p - \alpha_0 p - {\alpha _1}\max \left\{ {d,p} \right\}}} \right),
\end{split}
\end{equation}
with the constraints of $- \frac{1}{2} < p \le \min \left\{ {z,0} \right\}$ and $ - \frac{\delta}{2}  \le d \le 0$.
Then, the second part is easily given by obtaining the supremums of (\ref{eq_51}) and (\ref{eq_52}) under the above constraints.

\section{Proof of Theorem 1}
Since the interference is removed by the Genie without any cost, $\mathsf{sir}^{\mathsf{if}} = \infty$. Inserting $\mathbf{F}^\mathsf{if} = \widehat{\mathbf{G}}^H$ into (\ref{received_signal_from_BS_l}), ${\psi ^{\mathsf{mrt}}_{kk}} = \sum\nolimits_{X_l\in\mathcal{X}_k} \beta_{lk} \mathbf{h}^H_{lk}\widehat{\mathbf{h}}_{lk}$ and the square of which is given by 
\begin{equation}
\begin{split}
{\left| {{\psi^{\mathsf{if}} _{kk}}} \right|^2} \asymp& \sum\limits_{X_l \in\mathcal{X}_k} {{\beta _{lk}^2}} {\left| {{\bf{h}}_{lk}^H{{{\bf{\widehat h}}}_{lk}}} \right|^2} \\ 
 =&{\sum\limits_{X_l \in\mathcal{X}_k} {{{\left( {\frac{{{P_k^{\rm{ul}}}}}{{{P_k^{\rm{ul}}}{\beta _{lk}} + {1}}}} \right)}^2}\beta _{lk}^{4}{{\left| {{\bf{h}}_{lk}^H{{\bf{h}}_{lk}}} \right|}^2}}}
 +{ \sum\limits_{X_l \in\mathcal{X}_k} {{{\left( {\frac{{\sqrt {{P_k^{\rm{ul}}}} }}{{{P_k^{\rm{ul}}}{\beta _{lk}} + {1}}}} \right)}^2}\beta _{lk}^3{{\left| {{\bf{h}}_{lk}^H{{{\bf{\widetilde v}}}_{lk}}} \right|}^2}}},
\end{split}
\end{equation}
where the first asymptotic equality comes from the fact that the cross term is asymptotically ignorable and the last equality comes from inserting (\ref{estimated_CSI}).

Define $\mathcal{L}_{k}$ as the set of BSs sufficiently near user $k$, given by 
\begin{equation}\label{CSI_accuracy_set}
\mathcal{L}_{k} = \{X_l\in\mathcal{X}_k | P_k^{\rm{ul}}\beta_{lk} = \Omega(1)\}.
\end{equation}
Since 
$P_k^{\rm{ul}}\beta_{lk} = \Omega(1) \Leftrightarrow |X_l-U_k| = O\left(N^{{\rho^{{\rm{ul}}}/\alpha}}\right)$, $|\mathcal{X}_k| = O\left({N^{\eta_{\rm{bs}}+ \frac{2}{\alpha }{\rho ^{{\rm{ul}}}}}}\right)$, which means that $\mathcal{L}_{k}$ is a non-empty set if and only if $\rho^{\rm{ul}}\ge -\frac{\alpha}{2}\eta_{\rm{bs}}$ and $\mathcal{L}_{k} = \mathcal{X}_{k}$ if $\rho^{\rm{ul}}\ge 0$. 
Then, we have 
\begin{equation}
\begin{split}
{\left| {{\psi^{\mathsf{if}} _{kk}}} \right|^2} =& \sum\limits_{X_l \in {{\cal L}_{k}}} {\beta _{lk}^2{{\left| {{\bf{h}}_{lk}^H{{\bf{h}}_{lk}}} \right|}^2}} + \frac{1}{{P_k^{{\rm{ul}}}}}\sum\limits_{X_l \in {{\cal L}_{k}}} {\beta _{lk} {{\left| {{\bf{h}}_{lk}^H{{\widetilde {\bf{v}}}_{lk}}} \right|}^2}}  \\
&+{\left( {{{P_k^{{\rm{ul}}}}}} \right)^2}\sum\limits_{X_l \in {\cal X}_k\backslash {{\cal L}_{k}}} {\beta _{lk}^4{{\left| {{\bf{h}}_{lk}^H{{\bf{h}}_{lk}}} \right|}^2}} + {{P_k^{{\rm{ul}}}}}\sum\limits_{X_l \in {\cal X}_k\backslash {{\cal L}_{k}}} {\beta _{lk}^3{{\left| {{\bf{h}}_{lk}^H{{\widetilde {\bf{v}}}_{lk}}} \right|}^2}}.
\end{split}
\end{equation}

\emph{Case \textsf{EH}} ($\rho^{\rm{ul}}\ge 0$): in this case, $\mathcal{L}_{k} = \mathcal{X}_k$ and $\mathcal{X}_k\backslash\mathcal{L}_{k} = \emptyset$. Then, we have 
\begin{equation}\label{eq_38}
\begin{split}
\!\!\!{\left| {{\psi^{\mathsf{if}} _{kk}}} \right|^2}& = \sum\limits_{X_l \in {{\cal X}_k}} {\beta _{lk}^2{{\left| {{\bf{h}}_{lk}^H{{\bf{h}}_{lk}}} \right|}^2}} + \frac{1}{{P_k^{{\rm{ul}}}}}\sum\limits_{X_l \in {{\cal X}_k}} {\beta _{lk} {{\left| {{\bf{h}}_{lk}^H{{\widetilde {\bf{v}}}_{lk}}} \right|}^2}} \\
 &\asymp  {{N^{ \alpha{\eta_{\rm{bs}}} + 2{\eta_{\rm{ant}}}}}} + N^{-\rho^{\rm{ul}}+\frac{\alpha}{2}\eta_{\rm{bs}}+\eta_{\rm{ant}}}\\
&\asymp {{N^{ \alpha{\eta_{\rm{bs}}} + 2{\eta_{\rm{ant}}}}}} ,
\end{split}
\end{equation}
where the second asymptotic equality comes from Lemma 2 and the last asymptotic equality comes from the fact that the first term is always dominant for $\rho^{\rm{ul}}\ge 0$.

\emph{Cases \textsf{M} and \textsf{L}} ($\rho^{\rm{ul}} < -\frac{\alpha}{2}\eta_{\rm{bs}}$): in this case, $\mathcal{L}_{k} = \emptyset$ and $\mathcal{X}_k\backslash\mathcal{L}_{k} = \mathcal{X}_k$, we have 
\begin{equation}\label{eq_39}
\begin{split}
|\psi_{kk}^{\mathsf{if}}|^2 &= {\left( {{{P_k^{{\rm{ul}}}}}} \right)^2}\sum\limits_{X_l \in {\cal X}_k} {\beta _{lk}^4{{\left| {{\bf{h}}_{lk}^H{{\bf{h}}_{lk}}} \right|}^2}} + {{P_k^{{\rm{ul}}}}}\sum\limits_{X_l \in {\cal X}_k} {\beta _{lk}^3{{\left| {{\bf{h}}_{lk}^H{{\widetilde {\bf{v}}}_{lk}}} \right|}^2}}\\
&\asymp {{N^{ 2{\rho ^{\rm{ul}}} + 2\alpha {\eta_{\rm{bs}}} + 2{\eta_{\rm{ant}}} }}}+ {{N^{ {\rho ^{\rm{ul}}} + \frac{3\alpha}{2} {\eta_{\rm{bs}}} + {\eta_{\rm{ant}}}}}} \\
&\asymp \left\{ {\begin{array}{*{20}{l}}
 { {{N^{ 2{\rho ^{\rm{ul}}} + 2\alpha {\eta_{\rm{bs}}} + 2{\eta_{\rm{ant}}}}}},}~{{\text{if }} \text{ \textsf{M}},} \\ 
 { N^{ {\rho ^{\rm{ul}}} + \frac{{3\alpha }}{2}{\eta_{\rm{bs}}} + {\eta_{\rm{ant}}}},}~~~{{\text{if }} \text{ \textsf{L}},}
\end{array}} \right.
\end{split}
\end{equation}
where the second asymptotic equality comes from Lemma 2 and the third asymptotic equality comes from the fact that the first term is only dominant if $\rho^{\rm{ul}}>-\frac{\alpha}{2}\eta_{\rm{bs}}-\eta_{\rm{ant}}$.

\emph{Case \textsf{H}} ($-\frac{\alpha}{2}\eta_{\rm{bs}} \le \rho^{\rm{ul}} < 0$): 
The scaling is simply derived by combining the results in (\ref{eq_38}) and (\ref{eq_39}) so that it is given by 
\begin{equation}\label{eq_40}
\begin{split}
{\left| {{\psi^{\mathsf{if}} _{kk}}} \right|^2} &\asymp  {{N^{\alpha {\eta _{{\rm{bs}}}} + 2{\eta _{{\rm{ant}}}}}}} .
\end{split}
\end{equation}

Define an auxiliary variable $\nu$ such that $Q_k^{\mathsf{if}} = \Theta\left(N^{\nu}\right)$ for all $k$. 
The final step is to find the scaling exponent of the DL transmit power for user $k$, which is given by
\begin{equation}\label{eq_42}
\begin{split}
P_k^{{\rm{dl}}} &= Q^{\mathsf{if}}_k\sum\limits_{X_l \in {{\cal X}}_k} \beta_{lk} {\| {\widehat {\bf{h}}_{lk}} \|^2}\\
 &\asymp MQ^{\mathsf{if}}_k\sum\limits_{X_l \in {\mathcal{X}_k}} {\frac{{P_k^{\rm{ul}}\beta_{lk}^2}}{{P_k^{\rm{ul}}\beta_{lk} + {1}}}} \\
 &\asymp\left\{ {\begin{array}{*{20}{l}}
{ {{N^{\nu + \frac{\alpha }{2}{\eta_{\rm{bs}}} + {\eta_{\rm{ant}}}}}},}&\text{if \textsf{EH} or \textsf{H}},\\
{{{N^{\nu + {\rho ^{\rm{ul}}} + \alpha {\eta_{\rm{bs}}} + {\eta_{\rm{ant}}}}}},}&\text{if \textsf{M} or \textsf{L}.}
\end{array}} \right.
\end{split}
\end{equation}
By using  $P_k^{{\rm{dl}}} = \Theta( N^{\rho^{\rm{dl}}})$ for all $k$, and the results (\ref{eq_38}), (\ref{eq_39}), and (\ref{eq_40}), we obtain (\ref{if_snr_scaling}).

\section{Proof of Theorem 2}
Inserting $\mathbf{F}^\mathsf{mrt}= \widehat {\mathbf{G}}^H$ into (\ref{received_signal_from_BS_l}), ${\psi ^{\mathsf{mrt}}_{kj}} = \sum\limits_{X_l\in\mathcal{X}_j} \sqrt{\beta_{lk}\beta_{lj}} \mathbf{h}^H_{lk}\widehat{\mathbf{h}}_{lj}$, and the square of which is given by 
\begin{equation}\label{eq_45}
\begin{split}
{\left| {{\psi^{\mathsf{mrt}} _{kj}}} \right|^2}  \asymp& \sum\limits_{X_l \in\mathcal{X}_j} {{\beta _{lk}}{\beta _{lj}}} {\left| {{\bf{h}}_{lk}^H{{{\bf{\widehat h}}}_{lj}}} \right|^2} \\ 
= & \sum\limits_{X_l \in\mathcal{X}_j} {{{\left( {\frac{{{P_j^{\rm{ul}}}}}{{{P_j^{\rm{ul}}}{\beta _{lj}} + {1}}}} \right)}^2}\beta _{lk}\beta _{lj}^3{{\left| {{\bf{h}}_{lk}^H{{\bf{h}}_{lj}}} \right|}^2}} + \sum\limits_{X_l \in\mathcal{X}_j} {{{\left( {\frac{{\sqrt {{P_j^{\rm{ul}}}} }}{{{P_j^{\rm{ul}}}{\beta _{lj}} + {1}}}} \right)}^2}\beta _{lk} \beta _{lj}^2{{\left| {{\bf{h}}_{lk}^H{{{\bf{\widetilde v}}}_{lj}}} \right|}^2}},
\end{split}
\end{equation}
where the first asymptotic equality comes from the fact that the cross term is asymptotically ignorable and the third equality is obtained by inserting (\ref{estimated_CSI}). Note that $\psi^{\mathsf{mrt}}_{kk} = \psi^{\mathsf{if}}_{kk}$ so that the SNR scaling exponent of MRT operation is the same as that of IF operation. 

The interference power, denoted as $I_k$, can be written as 
\begin{equation}\label{eq_46}
\begin{split}
{I_k} = &{\sum _{U_j \in {\cal U}\backslash \{ U_k\} }}{Q^{\mathsf{mrt}}_j}|\psi _{kj}^\mathsf{mrt}{|^2}\\
 = &{\sum _{U_j \in {\cal U}\backslash \{ U_k\} }}{Q^{\mathsf{mrt}}_j}\sum\limits_{X_l \in \mathcal{L}_{j}} {\beta _{lk} \beta _{lj} {{\left| {{\bf{h}}_{lk}^H{{\bf{h}}_{lj}}} \right|}^2}}+ \frac{1}{{P_k^{{\rm{ul}}}}}{\sum _{U_j \in {\cal U}\backslash \{ U_k\} }}{Q^{\mathsf{mrt}}_j}\sum\limits_{X_l \in\mathcal{L}_{j}} {\beta _{lk} {{\left| {{\bf{h}}_{lk}^H{{\widetilde {\bf{v}}}_{lj}}} \right|}^2}} \\
& + {\left( {{{P_k^{{\rm{ul}}}}}} \right)^2}{\sum _{U_j \in {\cal U}\backslash \{ U_k\} }}{Q^{\mathsf{mrt}}_j}\sum\limits_{X_l \in {{\cal X}_j\backslash\mathcal{L}_j}} {\beta _{lk} \beta _{lj}^3{{\left| {{\bf{h}}_{lk}^H{{\bf{h}}_{lj}}} \right|}^2}}+ {{P_k^{{\rm{ul}}}}}{\sum _{U_j \in {\cal U}\backslash \{ U_k\} }}{Q^{\mathsf{mrt}}_j}\sum\limits_{X_l \in\mathcal{X}_{j}\backslash\mathcal{L}_{j}} {\beta _{lk} \beta _{lj}^2{{\left| {{\bf{h}}_{lk}^H{{\widetilde {\bf{v}}}_{lj}}} \right|}^2}},
\end{split}
\end{equation}
where $\mathcal{L}_{j}$ is defined in (\ref{CSI_accuracy_set}).

Define an auxiliary variable $\nu$ such that $Q_k^{\mathsf{mrt}} = \Theta\left(N^{\nu}\right)$ for all $k$
 and consider the four regimes \textsf{EH}, \textsf{H}, \textsf{M}, and \textsf{L}.

\emph{Case \textsf{EH}} ($\rho^{\rm{ul}}\ge 0$): in this case, $\mathcal{L}_{j} = \mathcal{X}_j$ and $\mathcal{X}_j\backslash\mathcal{L}_j = \emptyset$ for all $j$. Then, we have 
\begin{equation}\label{eq_47}
\begin{split}
I_k =& \sum\limits_{U_j \in\mathcal{U}\backslash\{U_k\}}Q^{\mathsf{mrt}}_j{\sum\limits_{X_l \in\mathcal{X}_j} {\beta _{lk}\beta _{lj}{{\left| {{\mathbf{h}}_{lk}^H{{\mathbf{h}}_{lj}}} \right|}^2}} } + \frac{1}{{{P_k^{\rm{ul}}}}}\sum\limits_{U_j\in\mathcal{U}\backslash\{U_k\}} Q^{\mathsf{mrt}}_j{\sum\limits_{X_l \in\mathcal{X}_j} {\beta _{lk} {{\left| {{\mathbf{h}}_{lk}^H{{{\mathbf{\widetilde v}}}_{lj}}} \right|}^2}} } \\
\asymp& N^{\nu+\frac{\alpha}{2}({\eta_{\rm{bs}}}+\eta_{\rm{user}})+(1-\frac{\alpha}{2})({\eta_{\rm{user}}}-{\eta_{\rm{bs}}})^++\eta_{\rm{ant}}}  +N^{\nu-\rho^{\rm{ul}}+\frac{\alpha}{2}{\eta_{\rm{bs}}}+\eta_{\rm{user}}+\eta_{\rm{ant}}}\\
 \asymp& N^{\nu+\frac{\alpha}{2}({\eta_{\rm{bs}}}+\eta_{\rm{user}})+(1-\frac{\alpha}{2})({\eta_{\rm{user}}}-{\eta_{\rm{bs}}})^++\eta_{\rm{ant}}} , 
\end{split}
\end{equation}
where the second asymptotic equality comes from Lemma 2 and the third asymptotic equality comes from the fact that the first term is always dominant due to 
$\left( {\frac{\alpha }{2} - 1} \right)\left( {{{({\eta_{\rm{user}}} - {\eta_{\rm{bs}}})}^ + } - {\eta_{\rm{user}}}} \right)\le0$.

\emph{Cases \textsf{M} and \textsf{L}} ($\rho^{\rm{ul}} < -\frac{\alpha}{2}\eta_{\rm{bs}}$): in this case, $\mathcal{L}_{j} = \emptyset$ and $\mathcal{X}_j\backslash\mathcal{L}_{j} = \mathcal{X}_j$. Then, we have 
\begin{equation}\label{eq_48}
\begin{split}
I_k = &{\left( {{{{P_k^{\rm{ul}}}}}} \right)^2}\sum\limits_{U_j \in\mathcal{U}\backslash\{U_k\}} Q^{\mathsf{mrt}}_j{\sum\limits_{X_l \in\mathcal{X}_j} {\beta _{lk} \beta _{lj}^3{{\left| {{\mathbf{h}}_{lk}^H{{\mathbf{h}}_{lj}}} \right|}^2}} } +{{{P_k^{\rm{ul}}}}}\sum\limits_{U_j\in\mathcal{U}\backslash\{U_k\}} Q^{\mathsf{mrt}}_j{\sum\limits_{X_l \in\mathcal{X}_j} {\beta _{lk} \beta _{lj}^2{{\left| {{\mathbf{h}}_{lk}^H{{{\mathbf{\widetilde v}}}_{lj}}} \right|}^2}} } \\
\asymp&  {{N^{\nu + 2{\rho ^{\rm{ul}}} +  {\frac{{3\alpha }}{2}{\eta_{\rm{bs}}} + \frac{\alpha }{2}{\eta_{\rm{user}}} + \left( {1 - \frac{\alpha }{2}} \right){{\left( {{\eta_{\rm{user}}} - {\eta_{\rm{bs}}}} \right)}^ + }} + {\eta_{\rm{ant}}}}}}   +{{N^{\nu + {\rho ^{\rm{ul}}} +  {\alpha {\eta_{\rm{bs}}} + \frac{\alpha }{2}{\eta_{\rm{user}}} + \left( {1 - \frac{\alpha }{2}} \right){{\left( {{\eta_{\rm{user}}} - {\eta_{\rm{bs}}}} \right)}^ + }}  + {\eta_{\rm{ant}}}}}} 
\\
\asymp&  {{N^{\nu + {\rho ^{\rm{ul}}} +  {\alpha {\eta_{\rm{bs}}} + \frac{\alpha }{2}{\eta_{\rm{user}}} + \left( {1 - \frac{\alpha }{2}} \right){{\left( {{\eta_{\rm{user}}} - {\eta_{\rm{bs}}}} \right)}^ + }} + {\eta_{\rm{ant}}}}}},
\end{split}
\end{equation}
where the second equality comes from Lemma 2 and the third asymptotic equality comes from the fact that the first term is always dominant due to 
$\rho^{\rm{ul}}<-\frac{\alpha}{2}\eta_{\rm{bs}}$. 

\emph{Case \textsf{H}} ($-\frac{\alpha}{2}\eta_{\rm{bs}} \le \rho^{\rm{ul}} < 0$): the scaling exponent can be simply derived as 
\begin{equation}\label{eq_49}
I_k \asymp N^{\nu+\frac{\alpha}{2}({\eta_{\rm{bs}}}+\eta_{\rm{user}})+(1-\frac{\alpha}{2})({\eta_{\rm{user}}}-{\eta_{\rm{bs}}})^++\eta_{\rm{ant}}},
\end{equation}
because the result in (\ref{eq_48}) at $\rho^{\rm{ul}} = -\frac{\alpha}{2}\eta_{\rm{bs}}$ is the same to the result in (\ref{eq_47}). Thus, by combining (\ref{eq_47})-(\ref{eq_49}), the scaling exponent of the interference power is given by 
\begin{equation}\label{eq_50}
\!\!\!\!\!\!\!\!\!\!\!{I_k} \asymp\left\{ \begin{array}{l}
{N^{\nu + \frac{\alpha }{2}({\eta_{\rm{bs}}} + {\eta_{\rm{user}}}) + (1 - \frac{\alpha }{2}){{({\eta_{\rm{user}}} - {\eta_{\rm{bs}}})}^ + } + {\eta_{\rm{ant}}}}},~~~~~{\text{ if \textsf{EH} or \textsf{H}}},\\
{N^{\nu + {\rho ^{\rm{ul}}} + \alpha {\eta_{\rm{bs}}} + \frac{\alpha }{2}{\eta_{\rm{user}}} + \left( {1 - \frac{\alpha }{2}} \right){{\left( {{\eta_{\rm{user}}} - {\eta_{\rm{bs}}}} \right)}^ + } + {\eta_{\rm{ant}}}}},{\text{ if \textsf{M} or \textsf{L}}},
\end{array} \right.
\end{equation}
and combining it with (\ref{if_snr_scaling}) and (\ref{eq_42}) results in (\ref{mrt_sir_scaling}), which completes the proof.

\section{Proof of Theorem 3}
Inserting $ {\mathbf{F}}^{\mathsf{zf}} =  {\widehat {\bf{G}}^H}{\left( {\widehat {\bf{G}}{{\widehat {\bf{G}}}^H}} \right)^{ - 1}}$  into (\ref{received_signal_from_BS_l}), $\psi _{kj}^\mathsf{zf} = \delta (k - j) + {[\widetilde {\bf{G}}]_{k,:}}{{\bf{F}}^{zf}}{{\bf{e}}_j}$, where $\mathbf{e}_j$ denotes the $j$th column of the identity matrix $\mathbf{I}_K$, and the square of which is given by 
\begin{equation}\label{ZF_psi}
\begin{split}
{\left| {\psi _{kj}^\mathsf{zf}} \right|^2}& = \delta (k - j) + 2{\rm{Re}}\left\{ {{{[\widetilde {\bf{G}}]}_{k,:}}{{\bf{F}}^\mathsf{zf}}{{\bf{e}}_j}} \right\} + {\rm{Tr}}\left( {{{\left( {{{\bf{F}}^\mathsf{zf}}} \right)}^H}{{\bf{R}}_k}{{\bf{F}}^{zf}}{{\bf{e}}_j}{\bf{e}}_j^H} \right) \\
& \asymp \delta (k - j) + {\rm{Tr}}\left( {{{\left( {{{\bf{F}}^\mathsf{zf}}} \right)}^H}{{\bf{R}}_k}{{\bf{F}}^\mathsf{zf}}{{\bf{e}}_j}{\bf{e}}_j^H} \right),
\end{split}
\end{equation}
where $\mathbf{R}_k = [ {\widetilde {\bf{G}}} ]_{k,: }^H{{[ {\widetilde {\bf{G}}} ]}_{k,: }}$ and  the last asymptotic equality comes from the fact that $(a+b)^2\asymp~a^2+b^2$. Also, the DL transmit power consumed for user $k$ is given by 
\begin{equation}
\begin{split}
P_k^{{\rm{dl}}} = Q_k^\mathsf{zf}{\rm{Tr}}\left( {{{\left( {{{\bf{F}}^{{\mathsf{zf}}}}} \right)}^H}{{\bf{F}}^{{\mathsf{zf}}}}{{\bf{e}}_k}{\bf{e}}_k^H} \right).
\end{split}
\end{equation}

Note that we have
\begin{equation}
\begin{split}
{\left[ {\widehat {\bf{G}}{{\widehat {\bf{G}}}^H}} \right]_{i,j}} &= \sum\limits_{{X_l} \in {\cal X}} {\sqrt {{\beta _{li}}{\beta _{lj}}} {\bf{\widehat {h}}}_{li}^H{{{\bf{\widehat {h}}}}_{lj}}}\\
& = \left\{ {\begin{array}{*{20}{l}}
{M\sum\limits_{{X_l} \in {\cal X}} {\frac{{P_j^{{\rm{ul}}}\beta _{lj}^2}}{{P_j^{{\rm{ul}}}{\beta _{lj}} + 1}}} ,}&{{\text{if }}i = j,}\\
{\sqrt M \sum\limits_{{X_l} \in {\cal X}} {\sqrt {\frac{{P_i^{{\rm{ul}}}\beta _{li}^2}}{{P_i^{{\rm{ul}}}{\beta _{li}} + 1}}\frac{{P_j^{{\rm{ul}}}\beta _{lj}^2}}{{P_j^{{\rm{ul}}}{\beta _{lj}} + 1}}} } ,}&{{\text{if }}i \ne j,}
\end{array}} \right.
\end{split}
\end{equation}
and consider the four regimes.

\emph{Case \textsf{EH}} ($\rho^{\rm{ul}}\ge 0$):  in this case, $\mathcal{L}_k =\mathcal{X}_k$ so that $\mathbf{R}_k \asymp \frac{1}{P_k^{\rm{ul}}} \mathbf{I}_{N}$
and 
\begin{equation}\label{ZF_EH2}
\begin{split}
{\left[ {\widehat {\bf{G}}{{\widehat {\bf{G}}}^H}} \right]_{i,j}} \asymp \left\{ {\begin{array}{*{20}{c}}
{M\sum\limits_{{X_l} \in {\cal X}} {{\beta _{lj}}} ,}&{{\text{if }}i = j,}\\
{\sqrt M \sum\limits_{{X_l} \in {\cal X}} {\sqrt {{\beta _{li}}{\beta _{lj}}} } ,}&{{\text{if }}i \ne j,}
\end{array}} \right.
\end{split}
\end{equation}
Inserting (\ref{ZF_EH2}) into (\ref{ZF_psi}), we have 
\begin{equation}
\begin{split}
{\left| {{\psi^\mathsf{zf} _{kk}}} \right|^2} \asymp&  1 +\frac{{{1}}}{{MP_k^{\rm{ul}}}}\left(\sum\limits_{X_l\in\mathcal{X}_k} \beta_{lk}\right)^{-1}\\
\asymp& 1+N^{-\rho^{\rm{ul}}-\frac{\alpha}{2}\eta_{\rm{bs}}-\eta_{\rm{ant}}}\\
\asymp& 1
\end{split}
\end{equation}
and 
\begin{equation}
\begin{split}
\sum\limits_{U_j\in\mathcal{U}\backslash\{U_k\}}{\left| {{\psi ^\mathsf{zf}_{kj}}} \right|^2} &\asymp \frac{{{1}}}{{MP_k^{\rm{ul}}}}\sum\limits_{U_j\in\mathcal{U}\backslash\{U_k\}}  \left(\sum\limits_{X_l\in\mathcal{X}_k}\beta_{lk}\right) \left( {{{\sum\limits_{X_l\in\mathcal{X}_j} {{\beta _{lj}}} }}} \right)^{-2} \\
&\asymp   {{N^{ - {\rho ^{\rm{ul}}} - \frac{\alpha }{2}{\eta _{\rm{bs}}} - {\eta _{\rm{ant}}}+{\eta _{\rm{user}}}}}},
\end{split}
\end{equation}
where the last asymptotic equality comes from Lemma 2 and $\rho^{\rm{ul}}\ge 0$. Defining an auxiliary variable $\nu$ such that $Q_k=\Theta(N^{\nu})$, the DL transmit power can be written as
\begin{equation}
\begin{split}
P_k^{{\rm{dl}}} =& \frac{{{Q^\mathsf{zf}_k}}}{M}\left({{\sum\limits_{X_l \in \mathcal{X}_k} {{\beta _{lk}}} }}\right)^{-1}\\
 \asymp&  {{N^{\nu  - \frac{\alpha }{2}{\eta _{{\rm{bs}}}} - {\eta _{{\rm{ant}}}}}}}.
\end{split}
\end{equation}
For a reasonable DL transmit power $P_k^{{\rm{dl}}}=\Theta(N^{\rho^{\rm{ul}}})$, $\nu = \rho^{\rm{dl}}+\frac{\alpha}{2}\eta_{\rm{bs}}+\eta_{\rm{ant}}$ and thus
\begin{equation}
\begin{split}
\mathsf{snr^{zf}} =& \rho^{\rm{dl}}+\frac{\alpha}{2}\eta_{\rm{bs}}+\eta_{\rm{ant}},\\
\mathsf{sir^{zf}} =& \rho^{\rm{ul}}+\frac{\alpha}{2}\eta_{\rm{bs}}+\eta_{\rm{ant}}-\eta_{\rm{user}}.
\end{split}
\end{equation}

\emph{Case \textsf{H}} ($-\frac{\alpha}{2}\eta_{\rm{bs}} \le \rho^{\rm{ul}}<0$):
define $\mathcal{U}^{(d)}_k = \left\{U_j \in \mathcal{U} |  \left|U_k-U_j\right| = \Theta\left(N^d\right)\right\}$. Since
$\mathbf{R}_k = {\rm{diag}}\left\{ { {{\theta _{1k}}} {{\bf{I}}_M},..., {{\theta _{Lk}}}{{\bf{I}}_M}} \right\}$,
 where  $\theta_{lk} \asymp 1/P_k^{\rm{ul}}$, if $X_l\in\mathcal{X}_k$, or $\theta_{lk} \asymp\beta_{lk}$, if $X_l\in\mathcal{X}\backslash\mathcal{X}_k$, 
and 
\begin{equation}
\begin{split}
&{\left[ {{\bf{\widehat G}}  {{{\bf{\widehat G}}}^H}} \right]_{i,j}}\asymp  \left\{ {\begin{array}{*{20}{l}}
{M\left( {\sum\limits_{{X_l} \in {{\cal L}_j}} {{\beta _{lj}}}  + {{P_j^{{\rm{ul}}}}}\sum\limits_{{X_l} \in {{\cal X}_j}\backslash {{\cal L}_j}} {\beta _{lj}^2} } \right),}&{{\text{if }}i = j,}\\
{\sqrt M \left( \begin{array}{l}
\sum\limits_{{X_l} \in {{\cal L}_i}\bigcap {{{\cal L}_j}} } {\sqrt {{\beta _{li}}{\beta _{lj}}} } 
 + \sqrt {{{P_j^{{\rm{ul}}}}}} \sum\limits_{{X_l} \in {{\cal L}_i}\backslash {{\cal L}_j}} {\sqrt {{\beta _{li}}\beta _{lj}^2} } \\
 + \sqrt {{{P_i^{{\rm{ul}}}}}} \sum\limits_{{X_l} \in {{\cal L}_j}\backslash {{\cal L}_i}} {\sqrt {\beta _{li}^2{\beta _{lj}}} } 
 + {{\sqrt {P_i^{{\rm{ul}}}P_j^{{\rm{ul}}}}}}\sum\limits_{{X_l} \in {{\cal X}_i} \cap {{\cal X}_j}\backslash {{\cal L}_i} \cup {{\cal L}_j}} {\sqrt {\beta _{li}^2\beta _{lj}^2} } 
\end{array} \right),}&{{\text{if }}i \ne j.}
\end{array}} \right.
\end{split}
\end{equation}
Then, ${\left| {{\psi^\mathsf{zf} _{kj}}} \right|^2}$ is given as in (\ref{eq_90}), where the last asymptotic equality comes from the fact that
\begin{figure*}[!t]
\small
\begin{equation}\label{eq_90}
\begin{split}
\sum\limits_{U_j\in\mathcal{U}\backslash\{U_k\}}{\left| {{\psi^\mathsf{zf} _{kj}}} \right|^2} \asymp&  \sum\limits_{U_j\in\mathcal{U}\backslash\{U_k\}}\frac{{\frac{{{1}}}{{P ^{{\rm{ul}}}}}\sum\limits_{X_l \in {{\cal X}_k} \cap {{\cal X}_j}}  {{\beta _{lj}}}  + \sum\limits_{X_l \in {{\cal X}_k}\backslash {{\cal X}_j}}  {\beta _{lj}^2}  + \sum\limits_{X_l \in {{\cal X}_j}\backslash {{\cal X}_k}}  {{\beta _{lk}}{\beta _{lj}}}  + {{P ^{{\rm{ul}}}}}\sum\limits_{X_l \in {\cal X}\backslash \left( {{{\cal X}_k} \cup {{\cal X}_j}} \right)}  {{\beta _{lk}}\beta _{lj}^2} }}{{M{{\left( {\sum\limits_{l \in {{\cal X}_j}}  {{\beta _{lj}}}  + {{P ^{{\rm{ul}}}}}\sum\limits_{X_l \in {\cal X}\backslash {{\cal X}_j}}  {\beta _{lj}^2} } \right)}^2}}}\\
\asymp& \frac{1}{{{N^{\alpha {\eta _{{\rm{bs}}}} + {\eta _{{\rm{ant}}}}}}}}\left( {\underbrace {\frac{{{1}}}{{{P^{{\rm{ul}}}}}}{\sum _{{U_j} \in {\cal U}\backslash \{ {U_k}\} }}\sum\limits_{{X_l} \in {{\cal X}_k} \cap {{\cal X}_j}} {{\beta _{lj}}} }_{{t_1}} + \underbrace {{\sum _{{U_j} \in {\cal U}\backslash \{ {U_k}\} }}{\sum _{{X_l} \in {{\cal X}_k}\backslash {{\cal X}_j}}}\beta _{lj}^2}_{{t_2}}}\right.\\
&~~~~~~~~~~~~~~~~~~~~~~~~~~~~~\left.{+ \underbrace {{\sum _{{U_j} \in {\cal U}\backslash \{ {U_k}\} }}{\sum _{{X_l} \in {{\cal X}_j}\backslash {{\cal X}_k}}}{\beta _{lk}}{\beta _{lj}}}_{{t_3}} + \underbrace {{{{P^{{\rm{ul}}}}}}{\sum _{{U_j} \in {\cal U}\backslash \{ {U_k}\} }}{\sum _{{X_l} \in {\cal X}\backslash \left( {{{\cal X}_k} \cup {{\cal X}_j}} \right)}}{\beta _{lk}}\beta _{lj}^2}_{{t_4}}} \right),
\end{split}
\end{equation}
\hrulefill
\vspace*{4pt}
\end{figure*}
\begin{equation}
\begin{split}
\sum\limits_{X_l \in {{\cal L}_k}} {{\beta _{lk}}}  + {{P_k^{{\rm{ul}}}}}\sum\limits_{X_l \in {\cal X}_k\backslash {{\cal L}_k}} {\beta _{lk}^2}\asymp & {N^{\frac{\alpha }{2}{\eta _{{\rm{bs}}}}}} + {N^{{\eta _{{\rm{bs}}}} + \left( {\frac{2}{\alpha } - 1} \right){\rho ^{{\rm{ul}}}}}}
\\ \asymp&  {N^{\frac{\alpha }{2}{\eta _{{\rm{bs}}}}}}.
 \end{split}
\end{equation}
Here, $t_1$, $t_2$, $t_3$, and $t_4$ can be further derived from Lemma 2 as follows.
Suppose that $\eta_{\rm{bs}}\ge \eta_{\rm{user}}$ and $- \frac{\alpha }{2}{\eta _{\rm{bs}}} \le {\rho ^{{\rm{ul}}}} <  - \frac{\alpha }{2}{\eta _{{\rm{user}}}}$. In this case, ${{\cal L}_k}\bigcap {{{\cal L}_j}}  = \emptyset $ for all $U_j\in\mathcal{U}$ in probability so that, by using Lemma 2, we obtain
\[\begin{array}{l}
{t_1} = 0,\\
{t_2}\asymp{N^{{\eta _{bs}} + \alpha {\eta _{{\rm{user}}}} + \frac{2}{\alpha }{\rho ^{{\rm{ul}}}}}},\\
{t_3}\asymp{N^{\frac{\alpha }{2}{\eta _{{\rm{bs}}}} + \frac{\alpha }{2}{\eta _{{\rm{user}}}}}},\\
{t_4}\asymp{N^{{\eta _{\rm{bs}}} + \frac{\alpha }{2}{\eta _{\rm{user}}} + \left( {\frac{2}{\alpha } - 1} \right){\rho ^{\rm{ul}}}}}.
\end{array}\]
Since $\frac{\alpha }{2}{\eta _{{\rm{bs}}}} + \frac{\alpha }{2}{\eta _{{\rm{user}}}} - \left( {{\eta _{bs}}+\alpha {\eta _{{\rm{user}}}} + \frac{2}{\alpha }{\rho ^{{\rm{ul}}}}} \right) = \left( {\frac{\alpha }{2} - 1} \right)\left( {{\eta _{{\rm{bs}}}} - {\eta _{{\rm{user}}}}} \right) - \frac{2}{\alpha }\left( {{\rho ^{{\rm{ul}}}} + \frac{\alpha }{2}{\eta _{{\rm{user}}}}} \right) \ge 0$ and $
\frac{\alpha }{2}{\eta _{{\rm{bs}}}} + \frac{\alpha }{2}{\eta _{{\rm{user}}}} - \left( {{\eta _{bs}} + \frac{\alpha }{2}{\eta _{{\rm{user}}}} + \left( {\frac{2}{\alpha } - 1} \right){\rho ^{{\rm{ul}}}}} \right) = \left( {1 - \frac{2}{\alpha }} \right)\left( {\frac{\alpha }{2}{\eta _{{\rm{bs}}}} + {\rho ^{{\rm{ul}}}}} \right) \ge 0$, $t_3$ becomes always dominant and thus
\begin{equation}
\sum\limits_{U_j\in\mathcal{U}\backslash\{U_k\}}{\left| {\psi _{kj}^\mathsf{zf}} \right|^2} \asymp {N^{{ - \frac{\alpha }{2}{\eta _{{\rm{bs}}}} + \frac{\alpha }{2}{\eta _{{\rm{user}}}} - {\eta _{{\rm{ant}}}}}}}.
\end{equation}

 Now, suppose that  $\eta_{\rm{bs}}\ge \eta_{\rm{user}}$ and $\rho^{\rm{ul}}\ge -\frac{\alpha}{2}\eta_{\rm{user}}$. In this case, $\mathcal{L}_j=\mathcal{X}_j$ if $d\le {\rho^{\rm{ul}}}/{\alpha}$ and $\mathcal{X}_j\bigcap\mathcal{L}_j =\emptyset$ if $d>\rho^{\rm{ul}}/\alpha$ for all $U_j\in\mathcal{U}_k^{(d)}$ in probability so that, by using Lemma 2, we obtain
\[\begin{array}{l}
{t_1}\asymp{N^{\frac{\alpha }{2}{\eta _{{\rm{bs}}}} + {\eta _{{\rm{user}}}} + \left( {\frac{2}{\alpha } - 1} \right){\rho ^{{\rm{ul}}}}}},\\
{t_2}\asymp{N^{{\eta _{{\rm{bs}}}} + {\eta _{{\rm{user}}}} + 2\left( {\frac{2}{\alpha } - 1} \right){\rho ^{{\rm{ul}}}}}},\\
{t_3}\asymp{N^{\frac{\alpha }{2}{\eta _{{\rm{bs}}}} + {\eta _{{\rm{user}}}} + \left( {\frac{2}{\alpha } - 1} \right){\rho ^{{\rm{ul}}}}}},\\
{t_4}\asymp{N^{{\eta _{{\rm{bs}}}} + {\eta _{{\rm{user}}}} + 2\left( {\frac{2}{\alpha } - 1} \right){\rho ^{{\rm{ul}}}}}}.
\end{array}\]
Since $\frac{\alpha }{2}{\eta _{{\rm{bs}}}} + {\eta _{{\rm{user}}}} + \left( {\frac{2}{\alpha } - 1} \right){\rho ^{{\rm{ul}}}} - \left( {{\eta _{{\rm{bs}}}} + {\eta _{{\rm{user}}}} + 2\left( {\frac{2}{\alpha } - 1} \right){\rho ^{{\rm{ul}}}}} \right) = \left( {1 - \frac{2}{\alpha }} \right)\left( {\frac{\alpha }{2}{\eta _{{\rm{bs}}}} + {\rho ^{{\rm{ul}}}}} \right) \ge 0$, $t_1$ or $t_3$ becomes dominant and thus
\begin{equation}
\sum\limits_{U_j\in\mathcal{U}\backslash\{U_k\}}{\left| {\psi _{kj}^\mathsf{zf}} \right|^2} \asymp {N^{\left( {\frac{2}{\alpha } - 1} \right){\rho ^{{\rm{ul}}}} - \frac{\alpha }{2}{\eta _{{\rm{bs}}}} + {\eta _{{\rm{user}}}} - {\eta _{{\rm{ant}}}}}}.
\end{equation}
When $\eta_{\rm{bs}}\le \eta_{\rm{user}}$, the similar derivation can be done and we obtain
\begin{equation}
\sum\limits_{U_j\in\mathcal{U}\backslash\{U_k\}}{\left| {\psi _{kj}^\mathsf{zf}} \right|^2} \asymp {N^{\left( {\frac{2}{\alpha } - 1} \right){\rho ^{{\rm{ul}}}} - \frac{\alpha }{2}{\eta _{{\rm{bs}}}} + {\eta _{{\rm{user}}}} - {\eta _{{\rm{ant}}}}}}.
\end{equation}

Also,
\begin{equation}
\begin{split}
{\left| {\psi _{kk}^\mathsf{zf}} \right|^2} &\asymp 1+\frac{{\frac{{{1}}}{{{P^{{\rm{ul}}}}}}\sum\limits_{X_l \in {{\cal L}_k}} {{\beta _{lk}}}  + {{{P^{{\rm{ul}}}}}}\sum\limits_{X_l \in {\cal X}_k\backslash {{\cal L}_k}} {\beta _{lk}^3} }}{{M{{\left( {\sum\limits_{X_l \in {{\cal L}_k}} {{\beta _{lk}}}  + {{P_k^{{\rm{ul}}}}}\sum\limits_{X_l \in {\cal X}_k\backslash {{\cal L}_k}} {\beta _{lk}^2} } \right)}^2}}} \\
&\asymp 1+ {N^{ - {\rho ^{{\rm{ul}}}} - \frac{\alpha }{2}{\eta _{{\rm{bs}}}} - {\eta _{{\rm{ant}}}}}}\\
&\asymp 1,
\end{split}
\end{equation}
where the second and last asymptotic equalities come from Lemma 2 and the fact $-\frac{\alpha}{2}\eta_{\rm{bs}}\le \rho^{\rm{ul}}<0$, respectively. Since the DL transmit power can be written as
\begin{equation}
\begin{split}
P_k^{{\rm{dl}}} & = \frac{{{Q^\mathsf{zf}_k}}}{M}\left({{\sum\limits_{X_l \in {\mathcal{L}_k}}  {{\beta _{lk}}}  + {{{P^{{\rm{ul}}}}}}\sum\limits_{X_l \in \mathcal{X}_k\backslash{\mathcal{L}_k}} {\beta _{lk}^2} }}\right)^{-1}\\
& \asymp {N^{\nu - \frac{\alpha }{2}{\eta _{{\rm{bs}}}} - {\eta _{{\rm{ant}}}}}}
\end{split}
\end{equation}
so that $\nu = \rho^{\rm{dl}}+\frac{\alpha}{2}\eta_{\rm{bs}}+\eta_{\rm{ant}}$ for a reasonable DL transmit power,
\begin{equation}
\mathsf{snr^{zf}} =  \rho^{\rm{dl}}+\frac{\alpha}{2}\eta_{\rm{bs}}+\eta_{\rm{ant}},
\end{equation}
and 
\begin{equation}
\begin{split}
\mathsf{si{r^{zf}}} = \left\{ {\begin{array}{*{20}{l}}
{\left( {1 - \frac{2}{\alpha }} \right){{\left( {{\rho ^{{\rm{ul}}}} + \frac{\alpha }{2}{\eta _{{\rm{user}}}}} \right)}^ + } + \frac{\alpha }{2}\left( {{\eta _{{\rm{bs}}}} - {\eta _{{\rm{user}}}}} \right) + {\eta _{{\rm{ant}}}},}&{{\text{if }}{\eta _{{\rm{bs}}}} \ge {\eta _{{\rm{user}}}},}\\
{\left( {1 - \frac{2}{\alpha }} \right){\rho ^{{\rm{ul}}}} + \frac{\alpha }{2}{\eta _{{\rm{bs}}}} - {\eta _{{\rm{user}}}} + {\eta _{{\rm{ant}}}},}&{{\text{if }}{\eta _{{\rm{bs}}}} < {\eta _{{\rm{user}}}}.}
\end{array}} \right.
\end{split}
\end{equation}

\emph{Cases \textsf{M} and \textsf{L}} ($ \rho^{\rm{ul}} <-\frac{\alpha}{2}\eta_{\rm{bs}}$):
in these cases,   
$\mathbf{R}_k \asymp {\rm{diag}}\{{\beta_{1k}}\mathbf{I}_{M},{\beta_{2k}}\mathbf{I}_{M},\cdots,{\beta_{Lk}}\mathbf{I}_{M}\}$,
and 
\begin{equation}\label{ML_2}
\begin{split}
&{\left[ {{\bf{\widehat G}} {{{\bf{\widehat G}}}^H}} \right]_{i,j}} \asymp
\left\{ {\begin{array}{*{20}{l}}
{{{MP_j^{{\rm{ul}}}}}\sum\limits_{{X_l} \in {{\cal X}_j}} {\beta _{lj}^2}, }&{{\text{if }}i = j,}\\
{{{\sqrt {MP_i^{{\rm{ul}}}P_j^{{\rm{ul}}}} }}\sum\limits_{{X_l} \in {{\cal X}_i}\bigcap {{{\cal X}_j}} } {\sqrt {\beta _{li}^2\beta _{lj}^2} }, }&{{\text{if }}i \ne j,}
\end{array}} \right.
\end{split}
\end{equation}
Inserting (\ref{ML_2}) into (\ref{ZF_psi}), we have 
\begin{equation}
\begin{split}
{\left| {{\psi ^\mathsf{zf}_{kj}}} \right|^2} \asymp& \mathbbm{1}_{kj}+ \frac{{1}}{{MP_j^{\rm{ul}}}}{{{{\left( {\sum\limits_{X_l\in\mathcal{X}_j} {\beta _{lj}^2} } \right)}^{-2}}}}{{\sum\limits_{X_l\in\mathcal{X}_j} {\beta _{lj}^2{\beta _{lk}}} }},
\end{split}
\end{equation}
which is obtained in a similar way as before as 
\begin{equation}
\begin{split}
{\left| {{\psi^\mathsf{zf} _{kk}}} \right|^2}  &\asymp 1+  {{N^{ - {\rho ^{\rm{ul}}} - \frac{\alpha }{2}{\eta_{\rm{bs}}} - {\eta _{\rm{ant}}}}}}\\
 &\asymp  \left\{ {\begin{array}{*{20}{l}}
1,&{{\text{if }}\mathsf{M},}\\
{{N^{ - {\rho ^{{\rm{ul}}}} - \frac{\alpha }{2}{\eta _{{\rm{bs}}}} - {\eta _{{\rm{ant}}}}}},}&{{\text{if }}\mathsf{L},}
\end{array}} \right.
\end{split}
\end{equation}
and
\begin{equation}
\begin{split}
\sum\limits_{U_j\in\mathcal{U}\backslash\{U_k\}} {{{\left| {{\psi ^\mathsf{zf}_{kj}}} \right|}^2}}  &\asymp \frac{{1}}{{MP^{\rm{ul}}}}\sum\limits_{U_j\in\mathcal{U}\backslash\{U_k\}} {{{{\left( {\sum\limits_{X_l\in\mathcal{X}_j} {\beta _{lj}^2} } \right)}^{-2}}}}{{{\sum\limits_{X_l\in\mathcal{X}_j} {\beta _{lj}^2{\beta _{lk}}} }}} \\
& \asymp {N^{ - {\rho ^{{\rm{ul}}}} - \left( {\frac{\alpha }{2} + 1} \right){\eta _{{\rm{bs}}}} - {\eta _{{\rm{ant}}}} + {\eta _{{\rm{user}}}} - \left( {\frac{\alpha }{2} - 1} \right){{\left( {{\eta _{{\rm{bs}}}} - {\eta _{{\rm{user}}}}} \right)}^ + }}} .
\end{split}
\end{equation}
Since the DL transmit power can be written as
\begin{equation}
\begin{split}
P_k^{\rm{dl}} &= \frac{Q^\mathsf{zf}_k}{MP_k^{\rm{ul}}}\left(\sum\limits_{X_l\in{\mathcal{X}_k}}\beta_{lk}^2\right)^{-1}
\asymp N^{\nu-\rho^{\rm{ul}}-{\alpha}\eta_{\rm{bs}} -\eta_{\rm{ant}}}
\end{split}
\end{equation}
so that $\nu = \rho^{\rm{dl}}+\rho^{\rm{ul}}+\alpha\eta_{\rm{bs}}+\eta_{\rm{ant}}$ for a reasonable DL transmit power, we can obtain
\begin{equation}
\begin{split}
\mathsf{sn{r^{zf}}} &=  \left\{ \begin{array}{*{20}{l}}
{\rho ^{{\rm{dl}}}} + {\rho ^{{\rm{ul}}}} + \alpha {\eta _{{\rm{bs}}}}{\rm{ + }}{\eta _{{\rm{ant}}}},&{\text{if \textsf{M},}}\\
{\rho ^{{\rm{dl}}}} + \frac{\alpha }{2}{\eta _{{\rm{bs}}}},&{\text{if \textsf{L},}}
\end{array} \right.
\end{split}
\end{equation}
and
\begin{equation}
\begin{split}
\mathsf{si{r^{zf}}} =&  \left\{ \begin{array}{l}
{\rho ^{{\rm{ul}}}} + \frac{\alpha }{2}{\eta _{{\rm{bs}}}} + {\eta _{{\rm{ant}}}} \\+ ({\eta _{{\rm{bs}}}} - {\eta _{{\rm{user}}}}) + \left( {\frac{\alpha }{2} - 1} \right){\left( {{\eta _{{\rm{bs}}}} - {\eta _{{\rm{user}}}}} \right)^ + },{\text{if \textsf{M},}}\\
({\eta _{{\rm{bs}}}} - {\eta _{{\rm{user}}}}) + \left( {\frac{\alpha }{2} - 1} \right){\left( {{\eta _{{\rm{bs}}}} - {\eta _{{\rm{user}}}}} \right)^ + },{~~\text{if \textsf{L},}}
\end{array} \right.
\end{split}
\end{equation}
which concludes the proof.

\section{Proof of Theorem 6}
Since the total transmit power is given by $P_\Sigma^{\rm{dl}} + P_\Sigma^{\rm{ul}}= \Theta(N^{\max\{\rho^{{\rm{dl}}},\rho^{\rm{ul}}\}+\eta_{\rm{user}}})$ by definition, in order to guarantee $\mathsf{sinr}^\Upsilon \ge \tau$, $\mathsf{snr}^{\Upsilon}  \ge\tau$ and $\mathsf{sir}^{\Upsilon}\ge \tau$ at the power constraints of $\rho^{{\rm{dl}}}\le \rho-\eta_{\rm{user}}$ and $\rho^{\rm{ul}} \le \rho -\eta_{\rm{user}}$. Formally, in order to find the supportable user scaling exponent, we should solve the following optimization problem (P1):
\begin{equation*}
\begin{aligned}
& {\text{maximize}} & & {\eta_{{\rm{user}}}} \\
& \text{subject to} & & {\mathsf{snr}}^\Upsilon \ge \tau ,\mathsf{sir}^\Upsilon \ge \tau ,\\
& & &{\rho ^{{\rm{ul}}}} \le \rho- {\eta_{{\rm{user}}}},{\rho ^{{\rm{dl}}}} \le \rho - {\eta_{{\rm{user}}}},0\le\eta_{\rm{user}}\le1.
\end{aligned}
\end{equation*}
One important and straightforward observation is that more UL or DL transmit power results in higher SNR and SIR scaling exponents. So, the maximum of $\eta_{\rm{user}}$ is obtained when 
\begin{equation}\label{power_condition}
\rho^{\rm{ul}}=\rho^{\rm{dl}} =\rho-\eta_{\rm{user}}. 
\end{equation}
Inserting (\ref{power_condition}) into the SNR and SIR scaling exponents of each operation, we can readily derive the conditions ${\mathsf{snr}}^\Upsilon \ge \tau$ and $\mathsf{sir}^\Upsilon \ge \tau$, and then the supportable user scaling exponent is determined within the intersection of the two conditions.

Since the proof for the results for one region is similar to those for other regions, the proof for region  $\mathcal{D}^{\rm{zf}}$ is shown only for brevity.  From Theorems 1 and 3, the SNR scaling exponent of ZF operation is identical that of IF operation, and by inserting (\ref{power_condition}) into (\ref{zf_sir_scaling}), the conditions for the SIR scaling exponents are given by 
\begin{align}\label{ZF_condition1} 
\mathsf{snr^{zf}} = &  \rho- {\zeta^{\mathsf{zf}} _{{\rm{user}}}} + \frac{\alpha }{2}{\eta _{{\rm{bs}}}} + {\left( {\frac{\alpha }{2}{\eta _{{\rm{bs}}}} + {\eta _{{\rm{ant}}}} +\rho- {\zeta^{\mathsf{zf}} _{{\rm{user}}}}} \right)^ + } - {\left( {\frac{\alpha }{2}{\eta _{{\rm{bs}}}} +\rho - {\zeta^{\mathsf{zf}} _{{\rm{user}}}}} \right)^ + } 
\ge \tau,
\\ \nonumber
\mathsf{sir^{zf}} = & {\eta _{{\rm{bs}}}} - \zeta _{{\rm{user}}}^\mathsf{zf} + \left( {\frac{\alpha }{2} - 1} \right){\left( {{\eta _{{\rm{bs}}}} - \zeta _{{\rm{user}}}^\mathsf{zf}} \right)^ + }
 + {\left( {\frac{\alpha }{2}{\eta _{{\rm{bs}}}} + {\eta _{{\rm{ant}}}} + \rho  - \zeta _{{\rm{user}}}^\mathsf{zf}} \right)^ + } - {\left( {\frac{\alpha }{2}{\eta _{{\rm{bs}}}} + \rho  - \zeta _{{\rm{user}}}^\mathsf{zf}} \right)^ + } 
\\ \label{ZF_condition2} 
&+ \left( {1 - \frac{2}{\alpha }} \right){\left( {\frac{\alpha }{2}\min \left\{ {{\eta _{{\rm{bs}}}},\zeta _{{\rm{user}}}^\mathsf{zf}} \right\} + {\rho -\zeta^\mathsf{zf}_{\rm{user}}}} \right)^ + }
+ \frac{2}{\alpha }{\left( {{\rho - \zeta^\mathsf{zf}_{\rm{user}} }} \right)^ + } \ge \tau.
\end{align}
Then, it is sufficient to show that the above conditions hold when inserting $\zeta_{\rm{user}}^\mathsf{zf}$ into  (\ref{ZF_condition1}) and (\ref{ZF_condition2}). For the case $(\rho,\tau)\in\mathcal{D}^\mathsf{zf}$, inserting $\zeta_{\rm{user}}^{\mathsf{zf}} = \frac{1}{2}\left(\rho-\tau+\left(1+\frac{\alpha}{2}\right)\eta_{\rm{bs}}+\eta_{\rm{ant}}\right)$ into  (\ref{ZF_condition1}) and (\ref{ZF_condition2}) yields
\begin{equation}\nonumber
\begin{split}
\mathsf{snr^{zf}} &= \rho  + \tau  + \left( {\frac{\alpha }{2} - 1} \right){\eta _{{\rm{bs}}}} \ge \tau,\\ 
\mathsf{sir^{zf}}&={\eta _{{\rm{bs}}}} - \zeta _{{\rm{user}}}^\mathsf{zf} + \left( {\frac{\alpha }{2}{\eta _{{\rm{bs}}}} + {\eta _{{\rm{ant}}}} + \rho  - \zeta _{{\rm{user}}}^\mathsf{zf}} \right) 
+ \left( {1 - \frac{2}{\alpha }} \right){\left( {\frac{\alpha }{2}{\eta _{{\rm{bs}}}} + \rho  - \zeta _{{\rm{user}}}^\mathsf{zf}} \right)^ + } + \frac{2}{\alpha }{\left( {\rho  - \zeta _{{\rm{user}}}^\mathsf{zf}} \right)^ + }\\
&={\eta _{{\rm{bs}}}} - \zeta _{{\rm{user}}}^\mathsf{zf} + \left( {\frac{\alpha }{2}{\eta _{{\rm{bs}}}} + {\eta _{{\rm{ant}}}} + \rho  - \zeta _{{\rm{user}}}^\mathsf{zf}} \right)\\
& = \tau,
\end{split}
\end{equation}
where the first equality comes from the fact that $\eta_{\rm{bs}}\ge\zeta_{\rm{user}}^\mathsf{zf}$ and the second equality comes from the fact that ${\frac{\alpha }{2}{\eta _{{\rm{bs}}}} + \rho  - \zeta _{{\rm{user}}}^\mathsf{zf}}$.
Since it is easily seen that the conditions are violated for larger $\zeta_{\rm{user}}^\mathsf{zf}$, the proof for region $\mathcal{D}^\mathsf{zf}$ is completed.
\end{IEEEproof}


\begin{thebibliography}{1}
\bibitem{AndrewsWhat}
J. G. Andrews et al., ``What will 5G be?,'' \emph{IEEE J. Sel. Areas Commun.}, vol. 32, pp. 1065--1082, June 2014.
\bibitem{BhushanNetwork}
N. Bhushan, J. Li, D. Malladi, R. Gilmore, D. Brenner, A. Damnjanovic, R. T. Sukhavasi, C. Patel, and S. Geirhofer, ``Network densification: the dominant theme for wireless evolution into 5G,'' \emph{ IEEE Commun. Mag.}, vol. 52, no. 2, pp. 82--89, Feb. 2014.
\bibitem{RusekScaling}
F. Rusek, D. Persson, B. K. Lau, E. G. Larsson, T. L. Marzetta, O. Edfors, and F. Tufvesson, ``Scaling up MIMO: Opportunities and challenges with very large arrays,'' \emph{IEEE Signal Process. Mag.}, vol. 30, no. 1, pp. 40--46, Jan. 2013.
\bibitem{GotsisUDN}
A. Gotsis, S. Stefanatos, and A. Alexiou, ``Ultra dense networks: The new wireless frontier for enabling 5G access,'' \emph{IEEE Veh. Technol. Mag.}, vol. 11, no. 2, pp. 71--78, June 2016.
\bibitem{NgoEnergy}
N. Q. Ngo, E. G. Larsson, and T. L. Marzetta, ``Energy and spectral efficiency of very large multiuser MIMO systems,'' \emph{IEEE Trans. Commun.}, vol. 61, no. 4, pp. 1436--1449, Apr. 2012.
\bibitem{HwangHolistic}
 I. Hwang, B. Song, and S. Soliman, ``A holistic view on hyper-dense heterogeneous and small cell networks,'' \emph{IEEE Commun. Mag.}, vol. 51, pp. 20--27, June 2013.
\bibitem{HoydisGreen}
J. Hoydis, M. Kobayashi, and M. Debbah, ``Green small-cell networks,'' \emph{IEEE Veh. Technol. Mag.}, vol. 6, pp. 37--43, Mar. 2011.
\bibitem{HuangIncreasing}
H. Huang, M. Trivellato, A. Hottinen, M. Sha, P. Smith, and R. Valenzuela, ``Increasing DL cellular throughput with limited network MIMO coordination,'' \emph{IEEE Trans. Wireless Commun.}, vol. 8, pp. 2983--2989, June 2009.
\bibitem{IrmerCoordinated}
R. Irmer, H. Droste, P. Marsch, M. Grieger, G. Fettweis, S. Brueck, H.-P. Mayer, L. Thiele, and V. Jungnickel, ``Coordinated multipoint: Concepts, performance, and field trial results,'' \emph{IEEE Commun. Mag.}, vol. 49, no. 2, pp. 102--111, Feb. 2011.
\bibitem{CheckoCloud}
 A. Checko et al., ``Cloud RAN for Mobile Networks -- A Technology Overview,'' \emph{IEEE Commun. Surv. Tut.}, vol. 17, no. 1, 1 Quarter 2015, pp. 405--426
\bibitem{CaireAchievable}
G. Caire and S. Shamai, ``On the achievable throughput of a multiantenna Gaussian broadcast channel,'' \emph{IEEE Trans. Inf. Theory}, vol. 49, no. 7, pp. 1691--1706, July 2003.
\bibitem{LinToward}
I. Chih-Lin et al., ``Toward green and soft: A 5g perspective,'' \emph{IEEE Commun. Mag.}, vol. 52, no. 2, pp. 66--73, Feb. 2014.


\bibitem{BjornsonMassive}
E. Bj\"{o}rnson, J. Hoydis, M. Kountouris, and M. Debbah, ``Massive MIMO systems with non-ideal hardware: Energy efficiency, estimation, and capacity limits,'' \emph{IEEE Trans. Inf. Theory}, vol. 60, no. 11, pp. 7112--7139, Nov. 2014.


\bibitem{BiermannHow}
T. Biermann, L. Scalia, C. Choi, W. Kellerer and H. Karl, ``How front/backhaul networks influence the feasibility of coordinated multipoint in cellular networks,'' \emph{IEEE Commun. Mag.}, vol. 51, no. 8, pp. 168--176, Aug. 2013.

\bibitem{ChanclouOptical}
P. Chanclou et al,, ``Optical fiber solution for mobile fronthaul to achieve cloud radio access network,'' \emph{Proc. Future Netw. Mobile Summit}, pp. 1--11,


\bibitem{LoMaximum} 
T. K. Y. Lo, ``Maximum ratio transmission,'' \emph{IEEE Trans. Commun.}, vol.
47, pp. 1458--1461, Oct. 1999.

\bibitem{SpencerZero}
Q. H. Spencer, A. L. Swindlehurst, and M. Haardt, ``Zero-forcing methods for DL spatial multiplexing in multiuser MIMO channels,'' \emph{IEEE Trans. Signal Process.}, vol. 52, no. 2, pp. 461--471, Feb. 2004.





\bibitem{MarschLarge}
P. Marsch, M. Grieger, and G. Fettweis, ``Large scale field trial results on different UL coordinated multi-point (CoMP) concepts in an urban environment,'' \emph{in Proc. 2011 IEEE WCNC}, pp. 1858--1863.


\bibitem{BasnayakaPerformance1}
D. A. Basnayaka, P. J. Smith, and P. A. Martin, ``Performance analysis of dual-user macrodiversity MIMO systems with linear receivers in flat rayleigh fading,'' \emph{IEEE Trans. Wireless Commun.}, vol. 11, no. 12, pp. 4394--4404, Dec. 2012.


\bibitem{BasnayakaPerformance2}
D. A. Basnayaka, P. J. Smith, and P. A. Martin, ``Performance analysis of macrodiversity MIMO systems with MMSE and ZF receivers in flat Rayleigh fading,'' \emph{IEEE Trans. Wireless Commun.}, vol. 12, no. 5, May 2013. 

\bibitem{CheikhAnalytical}
D. B. Cheikh, J.-M. Kelif, M. Coupechoux, and P. Godlewski, ``Analytical joint processing multi-point cooperation performance in rayleigh fading,'' \emph{IEEE Wireless Commun. Lett.}, vol. 1, no. 4, Aug. 2012.


\bibitem{YangPerformance}
H. Yang and T. L. Marzetta, ``Performance of conjugate and zero-forcing beamforming in large-scale antenna systems,'' \emph{IEEE J. Sel. Areas Commun.}, vol. 31, no. 2, pp. 172--179, Feb. 2013.


\bibitem{WangAsymptotic}
J. Wang and L. Dai, ``Asymptotic rate analysis of DL multi-User systems With co-Located and distributed Antennas,'' \emph{IEEE Trans. Wireless Commun.}, vol. 14, no. 6, pp. 3046--3058, June 2015.


\bibitem{TanbougiTractable}
R. Tanbougi, S. Singh, J. G. Andrew, and F. K. Jondral ``A tractable model for noncoherent joint-transmission base station cooperation,'' \emph{IEEE Trans. Wireless Commun.}, vol. 13, no. 9, pp. 4959--4973,  Sep. 2014. 
\bibitem{NigamCoordinated}
G. Nigam, P. Minero, and M. Haenggi, ``Coordinated multipoint joint transmission in heterogeneous networks,'' \emph{IEEE Trans. Commun.}, vol. 62, no. 11, pp. 4134--4146, Nov. 2014.
\bibitem{GiovanidisStochastic}
A. Giovanidis and F. Baccelli, ``A stochastic geometry framework for analyzing pairwise-cooperative cellular networks,'' \emph{IEEE Trans. Wireless Commun.}, vol. 14, no. 2, pp. 794--808, Feb. 2015.
\bibitem{HuangAnalytical}
K. Huang, and J. G. Andrews, ``An analytical framework for multicell cooperation via stochastic geometry and large deviations,''  \emph{IEEE Trans. Inf. Theory}, vol. 59, no. 4, pp. 2501--2516, Apr. 2013.
\bibitem{LeeSpectral}
N. Lee, F. Baccelli, and R.W. Heath, ``Spectral efficiency scaling laws in dense random wireless networks with multiple receive antennas,'' \emph{IEEE Trans. Inf. Theory}, vol. 62, no. 3, pp. 1344--1359, Mar. 2016.
\bibitem{MarschUL}
P. Marsch and G. Fettweis, ``UL CoMP under a constrained backhaul and imperfect channel knowledge,'' \emph{IEEE Trans. Wireless Commun}. vol. 10, no. 6, pp. 1730--1742, June 2011.
\bibitem{WangPerformance}
Q. Wang, and Y. Jing, ``Performance analysis and scaling law of MRC/MRT relaying with CSI error in massive MIMO systems,'' \emph{submitted to IEEE Trans. Wireless Commun.}, June 2016. [Online]. Available: http://arxiv.org/pdf/1606.07480v1.pdf.
\bibitem{KongMultipair}
C. Kong, C. Zhong, M. Matthaiou, E. Bj\"ornson, and Z. Zhang, ``Multipair two-way half-duplex relaying with massive arrays and imperfect CSI,'' \emph{submitted to IEEE Trans. Inf. Theory}, July 2016, [Online]. Available: http://arxiv.org/pdf/1607.01598v1.pdf.
\bibitem{XiaFundamentals}
P. Xia, H.-S. Jo, and J. Andrews, ``Fundamentals of inter-cell overhead signaling in heterogeneous cellular networks,'' \emph{IEEE J. Sel. Topics Sig. Proc.}, vol. 6, no. 3, pp. 257--269, June 2012.
\bibitem{SimeoneUL}
O. Simeone, O. Somekh, H. V. Poor and S. Shamai, ``Local base station cooperation via finite-capacity links for the uplink of linear cellular networks,'' \emph{IEEE Trans Inf. Theory}, vol. 55, no. 1, pp. 190--204, Jan. 2009.
\bibitem{SimeoneDL}
O. Simeone, O. Somekh, H. V. Poor, and S. Shamai, ``Downlink multicell processing with limited-backhaul capacity,'' \emph{EURASIP J. Advances in Signal Process.} vol. 2009, Feb. 2009.

\bibitem{ZhouUplink}
M. Liu, Y. Teng, and M. Song, ``Performance analysis of coordinated multipoint joint transmission in ultra-dense networks with limited backhaul capacity,'' \emph{IEEE J. Sel. Areas Commun.}, vol. 31, no. 10, pp.  2111--2113, Oct. 2013.

\bibitem{DhillonDownlink}
H. S. Dhillon and J. G. Andrews, ``Downlink rate distribution in heterogeneous cellular networks under generalized cell selection,'' \emph{IEEE Wireless Commun. Lett.}, vol. 3, no. 1, pp. 42--45, Feb. 2014. 

\bibitem{LiPilot}
S. M. Kay, \emph{Fundamentals of Statistical Signal Processing: Estimation Theory}. Prentice-Hall, Inc. Upper Saddle River, NJ, USA, 1993.

\bibitem{Argos}
C. Shepard \emph{et al.}, ``Argos: Practical many-antenna base stations,'' in \emph{Proc. 18th Annu. Int. Conf. MobiCom}, 2012, pp. 53–-64.


\bibitem{SharifRBF}
M. Sharif and B. Hassibi  ``On the capacity of MIMO broadcast channel with partial side information",  \emph{IEEE Trans. Inf. Theory},  vol. 51,  no. 2,  pp. 506--522, Feb. 2005.

\bibitem{BillingsleyConvergence}
 P. Billingsley, \emph{Convergence of Probability Measures}, 1968, Wiley.


\bibitem{KnuthBig}
D. E. Knuth, ``Big omicron and big omega and big theta,'' \emph{SIGACT News}, vol. 8, no. 2, pp. 18--24, Apr.-Jun. 1976.



\bibitem{LevequeInformation}
O. Leveque and I.E. Telatar, ``Information-theoretic upper bounds on the capacity of large extended Ad hoc wireless networks,'' \emph{IEEE Trans. Inf. Theory}, vol. 51, no. 3, pp. 858--3211, Sep. 2007.

\end{thebibliography}
\end{document}